\documentclass[journal]{IEEEtran}



\usepackage[tbtags]{amsmath} 
\usepackage{amssymb}  
\usepackage{verbatim} 
\usepackage{amsxtra}  

\usepackage{enumerate}

\usepackage{amsfonts}
\usepackage{color}

\usepackage{subfig}

\usepackage{ mathrsfs}

\usepackage{cite}

\usepackage{mathtools}
\mathtoolsset{showonlyrefs=true}

\usepackage{hyperref}

\usepackage{units}

\usepackage{epsfig, graphicx, psfrag}

\usepackage[mathcal]{euscript}

\usepackage{ifthen}
\providecommand{\VersionLength}{long}
\newcommand{\ver}{\ifthenelse{\equal{\VersionLength}{long}}}
\newcommand{\nver}{\ifthenelse{\equal{\VersionLength}{short}}}



\usepackage{amsthm}
\theoremstyle{plain}

\newtheorem{thm}{Theorem}
\newtheorem{lemma}{Lemma}

\newtheorem{corol}{Corollary}
\newtheorem{prop}{Proposition}

\theoremstyle{definition}

\newtheorem{defn}{Definition}

\newtheorem*{scheme*}{Scheme}

\theoremstyle{remark}

\newtheorem{remark}{Remark}


\providecommand{\thmref}[1]{Theorem~\ref{#1}}

\providecommand{\defnref}[1]{Definition~\ref{#1}}
\providecommand{\secref}[1]{Section~\ref{#1}}
\providecommand{\lemref}[1]{Lemma~\ref{#1}}
\providecommand{\propref}[1]{Proposition~\ref{#1}}

\providecommand{\figref}[1]{Fig.~\ref{#1}}

\providecommand{\appref}[1]{Appendix~\ref{#1}}

\newcommand{\ie}{i.e.}
\newcommand{\eg}{e.g.}

\newcommand{\viz}{viz.}
\newcommand{\etal}{\emph{et al.}}

\newcommand{\bm}[1]{\mbox{\boldmath{$#1$}}}

\newcommand{\SINR}{\text{SINR}}

\newcommand{\eff}{{\text{eff}}}

\newcommand{\Comment}[1]{}
\newcommand{\old}[1]{}
\newcommand{\rem}[1]{}

\newcommand{\eps}{{\epsilon}}
\newcommand{\teps}{{\tilde{\eps}}}

\newcommand{\vphi}{\varphi}

\newcommand{\hx}{\hat{x}}

\providecommand{\tT}{\tilde T}

\newcommand{\ty}{\tilde y}

\providecommand{\tx}{\tilde x}
\providecommand{\tz}{\tilde z}

\newcommand{\tR}{\tilde R}

\newcommand{\bD}{{\bf D}}
\newcommand{\bt}{\bm t}

\newcommand{\by}{\bm y}

\newcommand{\bH}{\text{\bf H}}
\newcommand{\bI}{\text{\bf I}}
\newcommand{\bT}{\text{\bf T}}

\newcommand{\bK}{\text{\bf K}}
\newcommand{\bB}{\text{\bf B}}
\newcommand{\bQ}{\text{\bf Q}}
\newcommand{\bU}{\text{\bf U}}
\newcommand{\bV}{\text{\bf V}}

\providecommand{\bU}{\text{\bf U}}
\providecommand{\bV}{\text{\bf V}}

\providecommand{\tbU}{\tilde{\bU}}
\providecommand{\bD}{\text{\bf D}}

\newcommand{\bX}{{\bf X}}

\providecommand{\tbT}{\tilde{\bT}}
\newcommand{\bG}{\mathbf{G}}
\newcommand{\bA}{\mathbf{A}}
\providecommand{\tbG}{\tilde{\bG}}

\newcommand{\bL}{{\bf D}}
\newcommand{\bx}{{\bm x}}
\newcommand{\br}{{\bm r}}

\providecommand{\bu}{{\bm u}}

\newcommand{\bY}{{\bf Y}}

\providecommand{\bV}{{\bf V}}
\newcommand{\bz}{{\bm z}}

\providecommand{\be}{{\bm e}}

\providecommand{\tbL}{\tilde{\bL}}
\providecommand{\tL}{\tilde{D}}


\newcommand{\bzero}{\text{\bf 0}}

\providecommand{\comment}[1]{}

\newcommand{\beqn}[1]{\begin{eqnarray}\label{#1}}
\newcommand{\eeqn}{\end{eqnarray}}
\newcommand{\beq}[1]{\begin{equation}\label{#1}}
\newcommand{\eeq}{\end{equation}}

\newcommand{\tby}{\tilde {\bm y}}
\newcommand{\tbx}{\tilde {\bm x}}

\newcommand{\tbz}{\tilde {\bm z}}

\newcommand{\cC}{{\mathcal C}}
\providecommand{\rank}{\text{rank}}

\providecommand{\trace}{\text{trace}}



\providecommand{\trace}{\text{trace}}

\providecommand{\tby}{\tilde {\bm y}}
\providecommand{\tbx}{\tilde {\bm x}}

\providecommand{\tbz}{\tilde {\bm z}}

\makeatletter
\newcommand{\vast}{\bBigg@{4}}
\newcommand{\Vast}{\bBigg@{5}}
\makeatother 




\providecommand{\Na}{{N_A}}
\providecommand{\Ne}{{N_E}}
\providecommand{\Nc}{{N_C}}
\providecommand{\Nb}{{N_B}}

\providecommand{\Lb}{{L_B}}
\providecommand{\Lc}{{L_C}}
\providecommand{\Le}{{L_E}}

\newcommand{\rvm}{{m}}
\newcommand{\rvf}{{f}}
\newcommand{\rvx}{{\tilde{\mathsf{x}}}}
\newcommand{\rvy}{{\mathsf y}}
\newcommand{\rvby}{{\mathbf y}}

\newcommand{\del}{{\delta}}
\newcommand{\rvu}{{\mathsf u}}
\newcommand{\rvv}{{\mathsf v}}
\newcommand{\rvz}{{\mathsf z}}
\newcommand{\defeq}{\triangleq}

\providecommand{\MI}[2]{I \left( #1 ; #2 \right)}
\providecommand{\CMI}[3]{I \left( #1 ; #2 \middle| #3 \right)}


\newcommand{\bsigma}{{\bm \sigma}}
\newcommand{\bmu}{{\bm \mu}}

\providecommand{\obG}{\bar{\bG}}
\providecommand{\obK}{\bar{\bK}}

\providecommand{\ubD}{\breve{\bD}}

\providecommand{\ubU}{\breve{\bU}}
\providecommand{\ubV}{\breve{\bV}}

\providecommand{\thx}{\hat{\tx}}

\begin{document}
\title{The MIMO Wiretap Channel Decomposed}

\author{
Anatoly Khina, 
Yuval Kochman, 
and 
Ashish Khisti
  \thanks{The material in this paper was presented in part at the \emph{2014 IEEE International Symposium of Information Theory (ISIT)}, Honolulu, HI, USA, 
          and at the \emph{2015 IEEE ISIT}, Hong Kong.
  }
}

\maketitle


\begin{abstract}
    The problem of sending a secret message over the Gaussian multiple-input multiple-output (MIMO) wiretap channel is studied.
    While the capacity of this channel is known, it is not clear how to construct optimal coding schemes that achieve this capacity.
    In this work, we use linear operations along with successive interference cancellation to attain effective parallel single-antenna wiretap channels.
    By using independent scalar Gaussian wiretap codebooks over the resulting parallel channels, the capacity of the MIMO wiretap channel is achieved.
    The derivation of the schemes is based upon joint triangularization of the channel matrices. 
    We find that the same technique can be used to re-derive capacity expressions for the MIMO wiretap channel in a way 
    that is simple and closely connected to a transmission scheme. 
    This technique allows to extend the 
    previously proven strong 
    security for scalar Gaussian channels to the MIMO case.
    We further consider the problem of transmitting confidential messages over a two-user broadcast MIMO channel. For that problem, we find that derivation of both the capacity and a transmission scheme is a direct corollary of the proposed analysis for the MIMO wiretap channel. 
\end{abstract}

\begin{keywords}
    Wiretap channel, MIMO channel, confidential broadcast, successive interference cancellation, dirty-paper coding, matrix decomposition.
\end{keywords}

\allowdisplaybreaks

\section{Introduction}

The wiretap channel, introduced by Wyner \cite{Wyner75_wiretap}, 
is composed of a sender (``Alice'') who wishes to convey data to a legitimate user (``Bob''), 
such that the eavesdropper (``Eve'') cannot recover (almost) any information of these data.
The capacity of this channel~\cite{Wyner75_wiretap,CsiszarKorner_confidentialBC} equals to a mutual-information difference, 
and was extended to the Gaussian case in \cite{GaussianWiretap}.
Let the channels from Alice to Bob and Eve be given by
\begin{align} 
    y_B &= h_B x + z_B ,
 \\ y_E &= h_E x + z_E , 
\end{align}
where $h_B$ and $h_E$ are complex scalar gains, $z_B$ and $z_E$ are mutually-independent 
circularly-symmetric Gaussian zero mean unit variance noises 
and the transmission is subject to a unit power constraint. Then, the capacity is achieved by a Gaussian input:
\begin{subequations}
\label{eq:AWGN_wiretap}
\noeqref{eq:AWGN_wiretap:MI_diff,eq:AWGN_wiretap:Gaussian}
\begin{align} 
    C_S(h_B, h_E) 
    &= \MI{x}{y_B} - \MI{x}{y_E}
\label{eq:AWGN_wiretap:MI_diff}
 \\ &= \left[ \log \left( 1 + \left| h_B \right|^2 \right) 
      - \log \left( 1 + \left|h_E\right|^2 \right) \right]_+ \!, \ \ \ \ 
\label{eq:AWGN_wiretap:Gaussian}
\end{align}
\end{subequations}
where $[a]_+ \triangleq \max \{0, a\}$ is the positive-part operation.

The vector extension of this result, the multiple-input multiple-output (MIMO) Gaussian wiretap channel
or the multiple-input multiple-output multiple-eavesdropper (MIMOME) channel \cite{KhistiWornell_MIMO_wiretap,OggierHassibi_MIMO_wiretap,LiuShamai_MIMO_wiretap}, is given by
\begin{subequations}
\label{eq:MIMOME_channel}
  \noeqref{eq:Bob's_channel,eq:Eve's_channel}
    \begin{align} 
        \by_B &= \bH_B \bx + \bz_B,  
    \label{eq:Bob's_channel} 
     \\* \by_E &= \bH_E \bx + \bz_E, 
    \label{eq:Eve's_channel} 
    \end{align} 
\end{subequations} 
where $\bx$, $\by_B$ and $\by_E$ are complex-valued vectors with dimensions of the number of antennas in the terminals of Alice, Bob and Eve, 
denoted by $\Na$, $\Nb$, and $\Ne$, respectively. The channel matrices $\bH_B$ and $\bH_E$ have the corresponding dimensions. 
The additive noise vectors $\bz_B$ and $\bz_E$ are mutually independent, i.i.d., circularly-symmetric Gaussian with zero mean unit element variance.

The secrecy capacity of this scenario for the case where the input 
is subject to an average \emph{covariance constraint}\footnote{$\bA \succeq \bzero$ denotes that 
$\bA$ is a positive semidefinite matrix. 
$\bA \preceq \bB$ means that $(\bA - \bB) \succeq \bzero$.}
\begin{align}
\label{eq:covariance_constraint}
    \bK \triangleq E \left[ \bx \bx^\dagger \right] \preceq \obK ,
\end{align}
and the case where the input is subject to a total (over all antennas) power constraint $P$:
\begin{align}
\label{eq:tot_power_constraint}
    \trace( \bK ) \leq P
    ,
\end{align}
was established in \cite{LiuShamai_MIMO_wiretap} and \cite{KhistiWornell_MIMO_wiretap,OggierHassibi_MIMO_wiretap,LiuShamai_MIMO_wiretap}, respectively.
Under a covariance constraint, 
this capacity is given by the difference of \ver{mutual informations}{MIs} to Bob and Eve, 
optimized over all Gaussian channel inputs that satisfy the respective input constraint:
\begin{align} 
\label{eq:wiretap_covariance}
    C_S(\bH_B,\bH_E,\obK) = \max_{ \bK \preceq \obK} I_S(\bH_B,\bH_E,\bK) \,,
\end{align}
where 
\begin{align}
\label{eq:Gaussian_MI_difference}
    I_S(\bH_B,\bH_C,\bK) \triangleq I(\bH_B,\bK) - I(\bH_E,\bK),
\end{align}
and 
\begin{align} 
\label{eq:Gaussian_MI}
    I(\bH,\bK) \triangleq \log \left| \bI + \bH \bK \bH^\dagger \right| 
\end{align} 
is the Gaussian vector mutual information (MI), and $|\bA|$ denotes the determinant of $\bA$.
Later, Bustin \etal~\cite{Wiretap_BustinEURASIP} 
provided an explicit solution to the maximization problem under the covariance constraint~\eqref{eq:wiretap_covariance}. 
A closed-form solution for the wiretap capacity under a total power constraint is yet to be found, 
although a numerical algorithm that approaches the global optimum was recently proposed \cite{LoykaCharalambous:NumericalMIMO-WTC-capacity}. 
We note that the capacity under a total power constraint can be written as the union of achievable regions under a covariance constraint (see~\cite[Lemma~1]{WSS06}):
\begin{align}
\label{eq:wiretap_power}
    C_S(\bH_B, \bH_E, P) 
    &= \max_{\obK: \: \trace\{\obK\} = P} C_S(\bH_B,\bH_E,\obK) \,.
\end{align}
Hence, we shall concentrate on the covariance constrained setting in this paper.

The confidential broadcast channel offers a natural extension to the wiretap channel setting. In the confidential broadcast setting, 
Alice wishes to convey different data to two users (``Bob'' and ``Charlie''), 
such that (almost) no information can be recovered by one user about the data intended for the other user.
That is, for the data that are intended for Bob, Charlie acts as the eavesdropper (``Eve'' in the wiretap setting), whereas for the data intended for Charlie, Bob takes the role of Eve.

The capacity region of the Gaussian MIMO confidential broadcast channel, a scenario considered first in \cite{ConfidentialMIMO_BC_totalPower}, 
was determined by Liu \etal~\cite{ConfidentialMIMO_BC} to be rectangular
under the covariance constraint \eqref{eq:covariance_constraint}.
Namely, it is given by all rate pairs $\left( R_B, R_C \right)$ satisfying 
\begin{subequations} 
\label{eq:BC_C}
\noeqref{eq:BC_C:Bob,eq:BC_C:Charlie}
\begin{align} 
    R_B & \leq C_S \left( \bH_B,\bH_C,\obK \right) ,
\label{eq:BC_C:Bob}
 \\ R_C & \leq C_S \left( \bH_C,\bH_B,\obK \right) , 
\label{eq:BC_C:Charlie}
\end{align}
\end{subequations}
where $\bH_C$ is the channel matrix to Charlie replacing $\bH_E$ in \eqref{eq:Eve's_channel}, and
$C_S(\bH_B,\bH_C,\obK)$ is the capacity of the MIMO wiretap channel defined in \eqref{eq:wiretap_covariance}.
The converse is immediate, as both users achieve their maximal possible secrecy rates simultaneously; it is the direct part that is quite striking. 

Although capacity is well understood, it is less clear how to construct codes for wiretap and confidential broadcast channels. 
For the scalar Gaussian case, various approaches have been suggested, see, \eg, \cite{WiretapCodeExpanderISIT2014,LDPCforWTC_IT07,LDPCforWiretap2011,WiretapLatticeCodes_OggierSoleBelfiore,WiretapPolarCodes,WiretapPracticalCodingPhD,StrongSecrecyWiretapPolarLattice_ISIT2014} 
and references therein.
However, assuming that we have such a code for the scalar case, it is not clear how to construct a capacity-achieving scheme for the MIMO setting.

In this work we present an approach that reduces these MIMO secrecy problems to scalar Gaussian ones by means of matrix decompositions, specifically joint unitary triangularizations \cite{STUD:SP}. The decompositions yield a layered coding scheme, where the secrecy capacity is approached by means of a scalar wiretap code in each layer and successive interference cancellation (SIC) at the receiver.
The contribution of such an approach to the MIMO wiretap channel can be compared to that of singular-value decomposition (SVD) based schemes~\cite{Telatar99}, or Vertical Bell-Laboratories Space--Time (V-BLAST) and decision feedback equalization (GDFE) schemes \cite{Foschini96,Wolniansky_V-BLAST,CioffiForneyGDFE,HassibiVBLAST}, to MIMO communication without secrecy constraints. 

Beyond the architectural merit, our approach yields two more fruits. First, it enables us to revisit the capacity results for the MIMO wiretap and confidential MIMO broadcast channels.
In that respect, we establish the optimal covariance matrix for the MIMO wiretap channel as well as an expression for the secrecy capacity in terms of the generalized singular values of suitably defined matrices. 
This re-derives a result by Bustin~\etal~\cite{Wiretap_BustinEURASIP}, which was based on 
elaborate information-theoretic considerations,
using a direct linear-algebraic approach. Turning to the confidential broadcast channel, we are able to re-derive \eqref{eq:BC_C} almost as a corollary of the analysis applied to the MIMO wiretap channel, also explaining the role of dirty-paper coding in this setup.

Second, reducing the MIMO problem to a scalar one allows us to leverage recent advances in the secrecy analysis of the scalar Gaussian wiretap channel: 
whereas we concentrate in this paper on constructing \emph{weak secrecy} schemes, namely schemes for which 
\begin{align}
\label{eq:weak_secrecy}
    \MI{\bx^n}{\by_B^n} \leq n \eps, 
\end{align}
we show that in fact a special matrix triangularization allows to achieve \emph{strong secrecy} guarantees 
for the MIMO wiretap channel, \ie, 
\begin{align}
\label{eq:strong_secrecy}
    \MI{\bx^n}{\by_B^n} \leq \eps ,
\end{align}
where both \eqref{eq:weak_secrecy} and \eqref{eq:strong_secrecy} hold for any $\eps > 0$ and large enough blocklength $n$.

An outline of this paper is as follows. 
We start by reviewing the relevant unitary matrix decompositions in \secref{s:decompositions}.
These decompositions are used to re-derive the MIMO wiretap capacity expressions in \secref{s:CapacityRevisited}.
We further recall how these decompositions allow to construct capacity-achieving schemes for the MIMO channel without secrecy in \secref{ss:MIMO_P2P:SIC}.
We extend this framework to work for the MIMO wiretap setting in \secref{s:schemes}. 
Layered dirty-paper coding (DPC)~\cite{Costa83} variants of this scheme are discussed in \secref{s:DPC}
and are also shown to be capacity achieving.
Finally, these schemes are utilized, along with the results of \secref{s:CapacityRevisited}, to construct a simple 
proof of the capacity region of the confidential MIMO broadcast setting as well as providing a layered-DPC scheme that attains it in \secref{s:BC}.


\section{Unitary Matrix Triangularization}
\label{s:decompositions}

In this section we briefly review some important matrix decompositions which will be used in the sequel.
In \secref{ss:GTD} we recall the generalized triangular decomposition (GTD),
and some of its important special cases which include the SVD, QR decomposition, and geometric mean decomposition (GMD).\footnote{See \cite{JET:SeveralUsers2015:FullPaper} for a geometrical interpretation of these decompositions.}
Joint unitary triangularizations of two matrices are discussed in \secref{ss:STUD}.

Throughout this paper, we shall only need to decompose full-rank matrices with equal or more rows than columns.


\subsection{Single Matrix Triangularization}
\label{ss:GTD}

The following definitions are used in this section.

\begin{defn}[Multiplicative majorization; see \cite{PalomarJiang}]
\label{def:major}
    Let $\bx$ and $\by$ be two $N$-dimensional vectors of positive elements.
    Denote by $\tbx$ and $\tby$ the vectors composed of the entries of $\bx$ and $\by$, respectively, ordered non-increasingly.
    We say that $\bx$ majorizes $\by$ ($\bx \succeq \by$) if they have equal products:
    \begin{align}
        \prod_{j=1}^N  x_j  =  \prod_{j=1}^N y_j  \,,
    \end{align}
    and their (ordered) elements satisfy,
    for any $1 \leq \ell < N$,
    \begin{align}
        \prod_{j=1}^\ell  \tx_j  \geq  \prod_{j=1}^\ell  \ty_j  \,.
    \end{align}
\end{defn}

\begin{defn}[Singular values; see~\cite{GolubVanLoan3rdEd}]
    Let $\bA$ be a full-rank matrix of dimensions $M \times N$, where $M \geq N$.
    Then, the singular values (SVs) of $\bA$ are the positive solutions $\sigma$ of the equation
    \begin{align}
        \left| \bA^\dagger \bA - \sigma^2 \bI \right| = 0.
    \end{align}
    Let the SV vector $\bsigma(\bA)$ be composed of all SVs (including their algebraic multiplicity), ordered non-increasingly.
\end{defn}

The following is a straightforward extension of the definition of triangular matrices to non-square ones.
\begin{defn}[Generalized Upper-Triangular Matrix]
\label{def:UpperTriangular}
    An \mbox{$M \times N$} matrix is said to be generalized upper triangular if 
    \begin{align}
	T_{i,j} &= 0   \,, & \forall i &> j \,;& i &= 1, \ldots, M \,;& j &= 1, \ldots, N .
    \end{align}
\end{defn}

We use these definitions to characterize the set of all possible diagonals achievable via unitary triangularization, as follows.

\begin{thm}[Generalized Triangular Decomposition]
\label{thm:GTD}
    Let $\bA$ be a full-rank matrix of dimensions $M \times N$, where $M \geq N$, and $\bt$ be an $N$-dimensional vector of positive
    elements.
    A GTD of the matrix $\bA$ is given by
        \begin{align} 
        \label{eq:GTD}
            \bA &= \bU \bT \bV^\dagger ,
        \end{align}
    where $\bU$ and $\bV$ are unitary matrices of dimensions $M \times M$ and $N \times N$, respectively, and $\bT$ is a generalized upper-triangular matrix
    with a prescribed set of diagonal values
    $\bt$, \ie, 
    \begin{align}
        T_{ii} &= t_i \,, & i = 1, \ldots, N \,,
    \\ T_{ij} &= 0   \,, & \forall i > j \,.
    \end{align}
    Such a decomposition
    exists if and only if the vector $\bt$
    is majorized by $\bsigma(\bA)$:
    \begin{align}
    \label{eq:majorization_gtd}
        \bsigma(\bA) \succeq \bt \,.
    \end{align}
\end{thm}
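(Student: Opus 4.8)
The plan is to prove the two implications of the majorization criterion \eqref{eq:majorization_gtd} separately, each time reducing to a classical fact relating singular values and triangular structure. Throughout I would write $\sigma_1 \ge \dots \ge \sigma_N > 0$ for the entries of $\bsigma(\bA)$ and $\tilde t_1 \ge \dots \ge \tilde t_N$ for the entries of $\bt$ in non-increasing order (\defnref{def:major}). For \emph{necessity}, start from a hypothetical decomposition $\bA = \bU \bT \bV^\dagger$. Unitary invariance gives $\bA^\dagger \bA = \bV \bT^\dagger \bT \bV^\dagger$, so $\bA$ and $\bT$ have the same singular values; and since $M \ge N$ and $\bT$ is generalized upper triangular, its last $M-N$ rows vanish, so $\bsigma(\bA) = \bsigma(\bR)$, where $\bR$ is the $N \times N$ upper-triangular matrix formed by the first $N$ rows of $\bT$, with $\diag(\bR) = \bt$. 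Every $\ell \times \ell$ principal submatrix of $\bR$ is again upper triangular, so its determinant is a product of $\ell$ of the $t_i$; on the other hand, being a single $\ell \times \ell$ minor of $\bR$, its modulus is at most $\prod_{j=1}^\ell \sigma_j$, the product of the $\ell$ largest singular values. Choosing the submatrix indexed by the $\ell$ largest $t_i$ then yields $\prod_{j=1}^\ell \tilde t_j \le \prod_{j=1}^\ell \sigma_j$ for each $\ell < N$, and the case $\ell = N$ gives equality because $\prod_i t_i = |\bR| = \prod_j \sigma_j$; together these are exactly \eqref{eq:majorization_gtd}.

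For \emph{sufficiency}, I would first normalize: from an SVD of $\bA$ it is enough to find unitary matrices that reduce the $M \times N$ matrix carrying $\bsigma(\bA)$ on its main diagonal to a generalized upper-triangular matrix of diagonal $\bt$, since composing these with the SVD factors then gives $\bU$ and $\bV$; dropping the zero rows further reduces the problem to triangularizing the square diagonal matrix $\bD \defeq \diag(\bsigma(\bA))$. I would then induct on $N$, the case $N = 1$ being immediate. In the inductive step the crucial observation is that \eqref{eq:majorization_gtd} forces $t_1$ to lie in the interval $[\sigma_N, \sigma_1]$ between the extreme singular values --- the bound $t_1 \le \sigma_1$ is the first majorization inequality applied to $\max_i t_i \ge t_1$, and $t_1 \ge \sigma_N$ follows by combining the $(N-1)$-st majorization inequality with the equality of products to get $\min_i t_i \ge \sigma_N$. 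Hence there is an adjacent pair $\sigma_p \ge t_1 \ge \sigma_{p+1}$, and a single pair of $2 \times 2$ unitary (Givens-type) rotations applied on the left and the right to the coordinates $\{p, p+1\}$ turns that $2 \times 2$ block of $\bD$ into an upper-triangular block with diagonal $\bigl( t_1,\ \sigma_p \sigma_{p+1} / t_1 \bigr)$ --- possible precisely because $t_1$ lies between the two values. Bringing this $t_1$ entry to position $(1,1)$ by a cyclic permutation of the first $p$ coordinates --- which keeps the matrix upper triangular, as the single off-diagonal entry created lands strictly above the diagonal --- leaves a matrix $\left( \begin{smallmatrix} t_1 & \bv^\dagger \\ \bzero & \bD' \end{smallmatrix} \right)$ whose trailing block $\bD'$ is $(N-1) \times (N-1)$ diagonal, with entries the untouched $\sigma_j$ together with $\sigma_p \sigma_{p+1} / t_1$. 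Invoking the induction hypothesis on $\bD'$ with the shortened diagonal $(t_2, \dots, t_N)$, and then reassembling the rotation, the permutation and the smaller triangularization into $\bU$ and $\bV$, completes the construction.

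The hard part is the step I glossed over: verifying that the length-$(N-1)$ spectrum of $\bD'$ still majorizes $(t_2, \dots, t_N)$, so that the induction hypothesis actually applies. Equivalently, this is the statement that $\bsigma(\bA) \succeq \bt$ can be realized by a finite chain of such two-coordinate ``merge'' moves; passing to logarithms turns it into the classical fact that ordinary majorization between ordered real vectors is generated by elementary ``Robin Hood'' transfers (Hardy--Littlewood--P\'olya). What is then left is careful index bookkeeping: choosing $p$ so that the reduced majorization is preserved, handling ties and the degenerate case $t_1 = \sigma_p$ in which no rotation is needed, and checking once more that the permutation bringing $t_1$ to the top does not spoil upper-triangularity of the trailing block. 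Each individual piece is light --- the $2 \times 2$ step is an explicit Givens-rotation computation --- so the whole difficulty lies in sequencing the operations correctly; as a shortcut, one may instead simply invoke the GTD existence result from the matrix-decomposition literature cited above.
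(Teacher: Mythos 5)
You should know at the outset that the paper gives no proof of \thmref{thm:GTD}: it attributes necessity to Weyl \cite{WeylCondition}, cites Horn \cite{WeylConditionInverse_ByHorn} for the inverse problem of realizing a prescribed diagonal with prescribed singular values in an upper-triangular matrix, and refers the reader to \cite{UnityTriangularization,GTD,QRS-GTD} for the GTD construction itself. So your sketch is not an alternative to an argument in the paper --- it is a reconstruction of the argument the paper outsources, and the fair comparison is with the cited literature.

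Your necessity argument is Weyl's compound-matrix argument and is correct as stated: $\bsigma(\bA)=\bsigma(\bR)$ because the last $M-N$ rows of a generalized upper-triangular $M\times N$ matrix vanish; every $\ell\times\ell$ minor of $\bR$ is an entry of the $\ell$-th compound matrix and is therefore bounded in modulus by its spectral norm $\prod_{j\le\ell}\sigma_j$; a principal $\ell\times\ell$ submatrix of an upper-triangular matrix is again upper triangular with determinant a product of $\ell$ diagonal entries; and $\ell=N$ gives the equality of products. Your sufficiency outline --- normalize via the SVD, merge an adjacent singular-value pair straddling $t_1$ by a $2\times 2$ Givens step, cycle the new $t_1$ entry up to position $(1,1)$, recurse on the diagonal trailing block, and reassemble the blockwise unitaries --- is precisely the Kosowski--Smoktunowicz / Jiang--Hager--Li construction, and the permutation bookkeeping you describe does preserve upper-triangularity because the single off-diagonal fill-in created by the Givens step lands at $(1,p+1)$ after the cycle.

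The gap you flagged is real, but the tool you reach for is not quite the right one. Hardy--Littlewood--P\'olya ``Robin Hood'' transfers connect two majorized vectors of the \emph{same} length, whereas the GTD induction removes one slot from each side. The lemma you actually need, stated in log domain, is: if $a_1\ge\dots\ge a_N$ majorizes $b_1\ge\dots\ge b_N$ with equal total sums, $b_r$ is deleted, and the adjacent pair $(a_p,a_{p+1})$ with $a_p\ge b_r\ge a_{p+1}$ is replaced by the single entry $a_p+a_{p+1}-b_r$, then the two resulting length-$(N-1)$ vectors (both still sorted, since the merged entry sits in $[a_{p+1},a_p]$) again majorize with equal sums. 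This is a direct four-case prefix-sum check depending on where $\ell$ sits relative to $p$ and $r$: for $\ell<\min(p,r)$ both prefixes are unchanged; for $p\le\ell<r$ the new left prefix is $\sum_{i\le\ell+1}a_i-b_r$, and $b_r\le b_{\ell+1}$ closes the bound; for $r\le\ell<p$ the new right prefix is $\sum_{i\le\ell+1}b_i-b_r$, and $b_r\ge b_{\ell+1}$ closes it; and for $\ell\ge\max(p,r)$ both sides drop by exactly $b_r$ and the original $(\ell+1)$-prefix inequality applies. Note also that the choice of $p$ is essentially forced: it is unique unless $t_1$ equals some $\sigma_q$, and in that degenerate case both admissible choices produce the same reduced spectrum. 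With this lemma in place your induction closes and you have an actual proof, whereas the paper settles for a citation.
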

In other words, the singular values are an extremal case for the diagonal of all possible unitary triangularizations.

The necessity of the majorization condition was proven by Weyl \cite{WeylCondition}. 
Horn further showed that for any $\br$ that is majorized by $\bsigma$, 
there exists an upper triangular matrix with diagonal $\br$ and SV vector $\bsigma$~\cite{WeylConditionInverse_ByHorn}.
The sufficiency of the majorization condition as it appears in \thmref{thm:GTD} was proved in \cite{UnityTriangularization,GTD,QRS-GTD}, where also explicit constructions of the decomposition
were introduced.

We now recall three important special cases of the GTD.
\subsubsection{SVD (See, \eg, \cite{GolubVanLoan3rdEd})}
Here the resulting matrix $\bT$ in \eqref{eq:GTD} is a \emph{diagonal} matrix, 
and its diagonal elements are equal to the singular values of the decomposed matrix $\bA$.

\subsubsection{QR Decomposition (See, \eg, \cite{GolubVanLoan3rdEd})}
In this decomposition, 
the matrix
$\bV$ in \eqref{eq:GTD} equals to the identity matrix and hence does not depend on the matrix $\bA$.
This decomposition can be constructed by performing Gram--Schmidt orthonormalization on the (ordered) columns of the
matrix $\bA$.

\subsubsection{GMD (See \cite{UnityTriangularization,QRS,GMD})}
\label{sss:GMD}
The diagonal elements of $\bT$ in this decomposition are all equal 
to the geometric mean of its singular values $\bsigma(\bA)$, 
which is real and positive.
Note that this decomposition always exists if $\bA$ is full rank 
(since the vector of the SVs of $\bA$ necessarily majorizes the vector of the diagonal elements of $\bT$), but is not unique.


\subsection{Joint Matrix Triangularization}
\label{ss:STUD}

The existence condition for a joint unitary triangularization of two matrices is similar to that of the GTD in \thmref{thm:GTD}, 
where the singular values are replaced by the generalized singular values (GSVs), 
and the diagonal of $\bT$ is replaced by the ratio of the diagonals of the resulting generalized triangular matrices.
These quantities are defined below.

\begin{defn}[Generalized singular values \cite{VanLoan76,GolubVanLoan3rdEd}]
\label{def:GSV}
    For any (ordered) matrix pair $(\bA_1,\bA_2)$, the GSVs are the non-negative solutions $\mu$ of the equation
    \[
        \left| \bA_1^\dagger \bA_1 - \mu^2 \bA_2^\dagger \bA_2 \right| = 0 .
    \]
    Let the GSV vector $\bmu(\bA_1, \bA_2)$ be composed of all GSVs (including their algebraic multiplicity), ordered non-increasingly.
\end{defn}

A characterization of the possible joint unitary triangularizations of two matrices with prescribed diagonal ratios is provided in the following theorem.

\begin{thm}[Joint unitary triangularization \cite{STUD:SP}]
\label{thm:STUD}
    Let $\bA_1$ and $\bA_2$ be two full-rank matrices of dimensions $M_1 \times N$ and $M_2 \times N$, respectively, where $M_1, M_2 \geq N$, and $\bt$ be an $N$-dimensional vector of positive elements.
    A joint unitary triangularization of the matrices $\bA_1$ and $\bA_2$ is given by
    \begin{subequations}
    \label{eq:STUD}
    \noeqref{eq:STUD:A1,eq:STUD:A2}
    \begin{align} 
        \bA_1 &= \bU_1 \bT_1 \bV^\dagger ,
    \label{eq:STUD:A1}
     \\ \bA_2 &= \bU_2 \bT_2 \bV^\dagger ,
    \label{eq:STUD:A2}
    \end{align}
    \end{subequations}
    where $\bU_1$, $\bU_2$ and $\bV$ are unitary matrices of dimensions \mbox{$M_1 \times M_1$}, $M_2 \times M_2$ and $N \times N$, respectively, and $\bT_1$ and $\bT_2$ are generalized upper-triangular matrices (recall \defnref{def:UpperTriangular}) with a prescribed set of diagonal ratios 
    $\bt$, \ie, 
    \begin{align}
        \frac{T_{1;ii}}{T_{2;ii}} &= t_i \,, &&              && i = 1, \ldots, N \,,
     \\ T_{k;i,j} &= 0              \,, && k = 1, 2 \,, && \forall i > j \,.
    \end{align}
    Such a joint decomposition
    exists if and only if the vector $\bt$
    is majorized by the GSV vector $\bmu(\bA_1, \bA_2)$:
    \begin{align}
    \label{eq:majorization_gsvd}
        \bmu(\bA_1, \bA_2) \succeq \bt \,.
    \end{align}    
\end{thm}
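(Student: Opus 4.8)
The plan is to derive both directions of \thmref{thm:STUD} from the single‑matrix result \thmref{thm:GTD}. I would begin with \emph{necessity}, which is short. Suppose $\bA_1 = \bU_1 \bT_1 \bV^\dagger$ and $\bA_2 = \bU_2 \bT_2 \bV^\dagger$ with $\bT_1,\bT_2$ generalized upper triangular and $T_{1;ii}/T_{2;ii} = t_i$. Let $\hat{\bT}_k$ be the $N \times N$ upper‑triangular top block of $\bT_k$, which is invertible because $\bA_k$ has full column rank. Put $\bG \triangleq \hat{\bT}_1 \hat{\bT}_2^{-1}$; it is $N\times N$ upper triangular with diagonal $\bt$. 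Since the bottom rows of $\bT_k$ vanish, $\bA_k^\dagger \bA_k = \bV \hat{\bT}_k^\dagger \hat{\bT}_k \bV^\dagger$, so by \defnref{def:GSV} the GSV equation becomes $\big| \hat{\bT}_1^\dagger \hat{\bT}_1 - \mu^2 \hat{\bT}_2^\dagger \hat{\bT}_2 \big| = 0$, whose positive roots are exactly the singular values of $\bG$; hence $\bmu(\bA_1,\bA_2) = \bsigma(\bG)$. As the diagonal of a triangular matrix is its eigenvalue vector, the ``only if'' part of \thmref{thm:GTD} (Weyl's majorization inequality) applied to $\bG$ gives $\bsigma(\bG) \succeq \bt$, which is the required majorization.

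For \emph{sufficiency} I would first reduce to square invertible matrices. Take QR decompositions $\bA_k = \bQ_k \bR_k$, with $\bQ_k$ having orthonormal columns and $\bR_k$ an $N\times N$ invertible upper‑triangular matrix. Then $\bA_k^\dagger\bA_k = \bR_k^\dagger\bR_k$, so $\bmu(\bA_1,\bA_2) = \bmu(\bR_1,\bR_2)$; and factoring $\bR_1^\dagger \bR_1 - \mu^2 \bR_2^\dagger \bR_2 = \bR_2^\dagger \big( (\bR_1 \bR_2^{-1})^\dagger (\bR_1 \bR_2^{-1}) - \mu^2 \bI \big) \bR_2$ shows $\bmu(\bR_1,\bR_2) = \bsigma(\bR_1\bR_2^{-1})$. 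A joint unitary triangularization $\bR_k = \bW_k \bT_k \bV^\dagger$ of $\bR_1,\bR_2$ (with $N\times N$ unitaries and a common right factor $\bV$) then lifts to one of $\bA_1,\bA_2$: complete each $\bQ_k\bW_k$ to an $M_k\times M_k$ unitary $\bU_k$ and append $M_k-N$ zero rows to $\bT_k$, which leaves the diagonal ratios untouched. So it suffices to construct, for square invertible $\bA_1,\bA_2$ with $\bsigma(\bA_1\bA_2^{-1}) \succeq \bt$, a joint unitary triangularization with diagonal‑ratio vector $\bt$.

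The construction is the heart of the matter. Put $\bG \triangleq \bA_1 \bA_2^{-1}$, so $\bsigma(\bG) \succeq \bt$; by \thmref{thm:GTD} there is a GTD $\bG = \bW_1 \bT_0 \bW_2^\dagger$ with $\bW_1,\bW_2$ unitary and $\bT_0$ upper triangular having diagonal exactly $\bt$ in the prescribed order. Now choose a unitary $\bV$ for which $\bW_2^\dagger \bA_2 \bV$ is upper triangular — an RQ decomposition, which always exists — and call the result $\bT_2$, so $\bA_2 = \bW_2 \bT_2 \bV^\dagger$. Then
\[
  \bA_1 = \bG \bA_2 = \bW_1 \bT_0 \bW_2^\dagger \bW_2 \bT_2 \bV^\dagger = \bW_1 \, (\bT_0 \bT_2) \, \bV^\dagger ,
\]
and $\bT_0 \bT_2$, a product of two upper‑triangular matrices, is upper triangular with $(\bT_0 \bT_2)_{ii} = t_i (\bT_2)_{ii}$. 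Hence $\bA_1 = \bU_1 \bT_1 \bV^\dagger$, $\bA_2 = \bU_2 \bT_2 \bV^\dagger$ with $\bU_1 = \bW_1$, $\bU_2 = \bW_2$, $\bT_1 = \bT_0 \bT_2$, sharing the right factor $\bV$, and the diagonal ratios are $(\bT_1)_{ii}/(\bT_2)_{ii} = t_i$ for all $i$ — precisely the claimed joint triangularization.

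Given \thmref{thm:GTD}, the only genuine subtleties are bookkeeping: checking that QR preserves the GSVs and that restoring zero rows and the extra unitary columns does not perturb the diagonal ratios, together with the one observation that makes the construction close — the RQ step feeds the \emph{same} upper‑triangular factor $\bT_2$ into the chain producing $\bA_1$ as it produces for $\bA_2$, so its diagonal cancels in the ratio and leaves exactly $\bt$. Were \thmref{thm:GTD} not available, the hard part would instead be its own proof, namely the inductive argument that peels off one layer at a time while maintaining the majorization invariant on the remaining singular values.
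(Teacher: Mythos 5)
The paper itself does not reproduce a proof of \thmref{thm:STUD}; it simply cites \cite{STUD:SP} (the authors' earlier SP paper), so there is no in-paper proof to compare against. Judged on its own merits, your argument is correct and cleanly organized, and it follows the natural reduction to \thmref{thm:GTD} that the cited reference also exploits. A few remarks on the details.

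Your necessity direction is sound: because $\bA_k^\dagger\bA_k=\bV\hat{\bT}_k^\dagger\hat{\bT}_k\bV^\dagger$ and $\hat{\bT}_2$ is invertible, the GSV pencil factors as $\hat{\bT}_2^\dagger\bigl(\bG^\dagger\bG-\mu^2\bI\bigr)\hat{\bT}_2$ with $\bG=\hat{\bT}_1\hat{\bT}_2^{-1}$, hence $\bmu(\bA_1,\bA_2)=\bsigma(\bG)$; since $\bG$ is upper triangular with diagonal $\bt$ (and each $t_i>0$), Weyl's inequality gives $\bsigma(\bG)\succeq\bt$. Your sufficiency direction is also sound, and the construction is the crux: reduce to square invertible $\bA_1,\bA_2$ via QR (which preserves both the Gram matrices and hence the GSVs), invoke the GTD on $\bG=\bA_1\bA_2^{-1}$ to obtain $\bG=\bW_1\bT_0\bW_2^\dagger$ with $\mathrm{diag}(\bT_0)=\bt$, then take an RQ factorization $\bW_2^\dagger\bA_2=\bT_2\bV^\dagger$ so that $\bA_1=\bW_1(\bT_0\bT_2)\bV^\dagger$. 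The observation that the \emph{same} $\bT_2$ appears in the factorization of both $\bA_1$ and $\bA_2$, so its diagonal cancels in the ratio, is exactly what makes the construction close. Your lifting step (completing $\bQ_k\bW_k$ to a full $M_k\times M_k$ unitary and padding $\bT_k$ with zero rows) is also correct and does not disturb the diagonals. This is a satisfactory blind proof of a result the paper states without proof.
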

In other words, the GSVs are an extremal case for the diagonal ratios of all possible joint unitary triangularizations.
The joint unitary decomposition that corresponds to these extremal values is the GSVD. 

Following the exposition in \cite{Bai92,PaigeSaunders81}, 
we next review the two forms of the GSVD~--- diagonal and triangular.
The diagonal representation of the GSVD is better known. For a matrix pair $( \bA_1, \bA_2 )$ it is given by \cite{VanLoan76,GolubVanLoan3rdEd}:
\begin{subequations}
\label{eq:GSVD:diagonal}
\noeqref{eq:GSVD:diagonal:1,eq:GSVD:diagonal:2}
\begin{align}
    \bA_1 &= \bU_1 \bL_1 \bX^\dagger 
    ,
\label{eq:GSVD:diagonal:1}
 \\ \bA_2 &= \bU_2 \bL_2 \bX^\dagger
    ,
\label{eq:GSVD:diagonal:2}
\end{align}
\end{subequations}
where $\bU_1$ and $\bU_2$ are unitary, $\bX$ is invertible, and $\bL_1$ and $\bL_2$ are generalized diagonal matrices (\viz, $D_{k;i,j} = 0$ for $i \neq j$, where $D_{k;i,j}$ is the $(i,j)$ entry of $D_k$) with positive diagonal values satisfying: 
\begin{align}
\label{eq:GSVD:normalization}
    \bL_1^\dagger \bL_1 + \bL_2^\dagger \bL_2 = \bI
    ,
\end{align}
the ratios of which are equal to the GSVs: 
\begin{align}
    \frac{D_{1;ii}}{D_{2;ii}} &= \mu_i \left( \bA_1, \bA_2 \right) ,  & i = 1, \ldots, N ,
\end{align}
and are assumed, w.l.o.g., to be ordered non-increasingly.
To obtain the triangular form of the GSVD, apply a QL decomposition\footnote{This decomposition is similar to the QR decomposition, only instead of an upper-triangular matrix, the resulting matrix is lower triangular. This can be achieved, \eg, by applying Gram--Schmidt triangularization to the columns of a matrix, from last to first.} to $\bX$, to attain: 
\begin{subequations}
\label{eq:GSVD:Triangular}
\noeqref{eq:GSVD:Triangular:1,eq:GSVD:Triangular:2}
\begin{align}
    \bA_1 &=
    \bU_1 \bL_1 \bT \bV^\dagger 
    \\ &\triangleq \bU_1 \bT_1 \bV^\dagger
    ,
\label{eq:GSVD:Triangular:1}
\\ 
    \bA_2 &=
    \bU_2 \bL_2 \bT \bV^\dagger 
    \\ &\triangleq \bU_2 \bT_2 \bV^\dagger
    ,
\label{eq:GSVD:Triangular:2}
\end{align}
\end{subequations}
where $\bT$ is upper triangular and $\bV$ is unitary.
By denoting $\bT_1 \triangleq \bL_1 \bT$ and $\bT_2 \triangleq \bL_2 \bT$, we attain the triangular form of the GSVD, 
which is, in turn, a special case of \eqref{eq:STUD}.


\section{The MIMO Wiretap Capacity Revisited}
\label{s:CapacityRevisited}

In this section we re-derive the explicit capacity expression of Bustin \etal \cite{Wiretap_BustinEURASIP} 
for the MIMO wiretap channel under a covariance constraint \eqref{eq:covariance_constraint}
in terms of the GSVD.
While we do not establish a new  capacity result, our approach of simultaneous unitary triangularization will lead to a simplified representation of the optimal covariance matrix as well as 
layered coding schemes, as will be discussed in the subsequent sections.

The following augmented matrix structure, which serves as the MIMO channel analogue of 
the minimum mean square error (MMSE) variant of decision feedback equalization for linear time-invariant systems~\cite{CDFE-PartI}, 
will be instrumental throughout this work.

\begin{defn}[Effective MMSE channel matrix]
\label{def:VBLAST_matrix}
    Let $\bH$ be a channel matrix of dimensions $\Nb \times \Na$ 
    and let $\bK$ be the $\Na \times \Na$ input covariance matrix used over this channel.
    Then, the corresponding \emph{effective MMSE channel matrix} is the $(\Na+\Nb) \times \Na$ matrix
      \begin{align}
      \label{eq:G_B}
	  \bG\left( \bH, \bK \right)
          \triangleq
          \begin{pmatrix}
              \bH \bK^{1/2}
           \\ \bI
          \end{pmatrix}
          ,
      \end{align}
      where $\bI$ is the identity matrix of dimension $\Na$ and $\bK^{1/2}$ is any matrix $\bB$ satisfying $\bB \bB^\dagger = \bK$.\footnote{Such a $\bB$ can always be constructed, \eg, using the Cholesky decomposition or unitary diagonalization.}
\end{defn}

This definition naturally lends itself to an MMSE (capacity-achieving) variant of the V-BLAST/GDFE scheme~\cite{HassibiVBLAST}, as will be described in \secref{ss:MIMO_P2P:SIC}. See also~\cite{UCD}, \cite{STUD:SP}, \cite{JET:SeveralUsers2015:FullPaper} for further explanations.

Construct the effective MMSE matrices $\bG_B = \bG(\bH_B, \bK)$ and $\bG_E = \bG(\bH_E, \bK)$,
where
$\bK$ is subject to the constraining matrix $\obK$~\eqref{eq:covariance_constraint}: $\bK \preceq \obK$.

Now, apply some joint unitary triangularization \eqref{eq:GTD}: 
\begin{subequations}
\label{eq:EffMatGSVD}
\noeqref{eq:EffMatGSVD:Bob,eq:EffMatGSVD:Charlie}
\begin{align}
 \bG_B & = \bU_B \bT_B \bV_A^\dagger \,,
\label{eq:EffMatGSVD:Bob}
 \\* 
 \bG_E & = \bU_E \bT_E  \bV_A^\dagger \,,
    \label{eq:EffMatGSVD:Charlie}
\end{align}
\end{subequations}
where $\bU_B$, $\bU_E$ and $\bV_A$ are unitary, and $\bT_B$ and $\bT_E$ are generalized upper triangular (recall \defnref{def:UpperTriangular}).

Let $\{b_i\}$ and $\{e_i\}$ denote the diagonal values of $\bT_B$ and $\bT_E$, respectively, 
where, as explained in \secref{ss:STUD}, these values can be designed by varying $\bV_A$.
Using the fact that the absolute value of a determinant of a unitary matrix is equal to~1, 
and the fact that the determinant of a triangular matrix is equal to the product of its diagonal values, 
the Gaussian MI \eqref{eq:Gaussian_MI} can be expressed as: 
\begin{subequations}
\label{eq:sum_rates:Bob}
\noeqref{eq:sum_rates:Bob:G,eq:sum_rates:Bob:b_i}
\begin{align} 
    I(\bH_B,\bK) &= \log \left| \bG_B^\dagger \bG_B \right| 
\label{eq:sum_rates:Bob:G}
 \\ &= \sum \log b_i^2 ,
\label{eq:sum_rates:Bob:b_i}
\end{align}
\end{subequations}
and similarly for Eve:
\begin{align} 
\label{eq:sum_rates:Eve}
    I(\bH_E,\bK) &= \log \left| \bG_E^\dagger \bG_E \right| 
 \\ &= \sum \log e_i^2 .
\end{align}
Hence, their difference \eqref{eq:Gaussian_MI_difference} is given by 
\begin{align} 
\label{eq:I_S_b_c}
    I_S(\bH_B, \bH_E, \bK) =  \sum_{i = 1}^\Na \log \frac{b_i^2}{e_i^2}  \,.
\end{align}

Note that the expression in \eqref{eq:I_S_b_c} holds for any unitary matrix $\bV_A$ in \eqref{eq:EffMatGSVD}. Indeed, as we shall see later, this flexibility in choosing $\bV_A$ can lead to different design tradeoffs in our layered coding schemes. Nevertheless, to derive an explicit capacity expression we specialize $\bV_A$ to be the right unitary matrix of the GSVD~\eqref{eq:GSVD:Triangular}, until the end of the section. The corresponding GSVs are hence equal to 
\begin{align}
    \mu_i\left( \bH_B, \bH_E, \bK \right) &\triangleq \mu_i \left( \bG_B, \bG_E \right) 
 \\ &= \frac{b_i}{ e_i} \,,
\end{align}
where we use the notation $\mu_i\left( \bH_B, \bH_E, \bK \right)$ to emphasize the dependence in $\bK$.
Without loss of generality, we assume that the GSV vector is non-increasing. 

In terms of the GSVs, we can rewrite \eqref{eq:wiretap_covariance} as:
\begin{align} 
    \label{eq:wiretap_covariance_GSVD}
    C_S(\bH_B,\bH_E,\obK) = \max_{ \bK \preceq \obK} \sum_{i = 1}^\Na \log \mu_i^2 \left( \bH_B, \bH_E, \bK \right)
   . 
\end{align}
Indeed, in these terms the MIMO wiretap capacity can be expressed as follows.
\begin{thm}[MIMO wiretap capacity under a covariance constraint \cite{Wiretap_BustinEURASIP}]
\label{thm:Bustin}
    The secrecy capacity under a covariance matrix constraint  $\obK$ is given by 
    \begin{subequations}
    \label{eq:corol:capacity:GSV:2_expressions}
    \noeqref{eq:corol:capacity:GSV,eq:corol:capacity:GSV:with_Lb}
    \begin{align}
	C_S(\bH_B,\bH_E,\obK)
	&= \sum_{i = 1}^\Na \left[ \log \mu_i^2\left( \bH_B, \bH_E, \obK \right) \right]_+ 
    \label{eq:corol:capacity:GSV}
     \\ &= \sum_{i = 1}^\Lb \log \mu_i^2\left( \bH_B, \bH_E, \obK \right) .
    \label{eq:corol:capacity:GSV:with_Lb}
    \end{align}
    \end{subequations}
\end{thm}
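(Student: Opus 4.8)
The plan is to solve the maximization defining $C_S$ in \eqref{eq:wiretap_covariance_GSVD} \emph{directly}, by elementary linear algebra, in a way that also pins down the optimal input covariance. I would first reduce to $\obK = \bI$: when $\obK$ is rank-deficient one restricts every matrix to its range, and otherwise the change of variables $\bx \mapsto \obK^{1/2}\bx$ (with $\obK^{1/2}$ a Hermitian square root) replaces $\bH_B,\bH_E,\obK$ by $\bH_B\obK^{1/2},\bH_E\obK^{1/2},\bI$, carries the constraint $E[\bx\bx^\dagger]\preceq\obK$ to $E[\bx\bx^\dagger]\preceq\bI$, and leaves the effective matrices $\bG_B,\bG_E$ --- hence the $\mu_i$ and $I_S$ --- unchanged. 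So take $\obK = \bI$ and write $\bA_B = \bH_B^\dagger\bH_B$, $\bA_E = \bH_E^\dagger\bH_E$, so that $\bG_B^\dagger\bG_B = \bA_B + \bI$, $\bG_E^\dagger\bG_E = \bA_E + \bI$, and the $\mu_i = \mu_i(\bH_B,\bH_E,\bI)$ are the generalized eigenvalues of this positive-definite pencil, ordered non-increasingly.

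The crux is to show that the maximizer in \eqref{eq:wiretap_covariance_GSVD} may be taken to be an orthogonal projection. Fix a maximizer $\bK^\star$ (one exists, the feasible set being compact and $I_S$ continuous) and let $\kappa_1,\ldots,\kappa_\Na \in [0,1]$ be its eigenvalues with orthonormal eigenvectors $\bv_1,\ldots,\bv_\Na$. Fixing the $\bv_i$ and the $\kappa_i$ for $i \neq j$, and regarding $I_S$ as a function of $\kappa_j$ alone, we note that $\bI + \bH_B\bK\bH_B^\dagger$ is a fixed positive-definite matrix plus $\kappa_j$ times the fixed rank-one matrix $\bH_B\bv_j\bv_j^\dagger\bH_B^\dagger$, so a rank-one determinant update gives $\log|\bI + \bH_B\bK\bH_B^\dagger| = c_B + \log(1 + \kappa_j\beta_j)$, and likewise $\log|\bI + \bH_E\bK\bH_E^\dagger| = c_E + \log(1 + \kappa_j\gamma_j)$, with $\beta_j,\gamma_j \geq 0$ independent of $\kappa_j$. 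Hence $I_S$ is \emph{monotone} in $\kappa_j$ on $[0,1]$, its derivative having the constant sign of $\beta_j - \gamma_j$. Moving each $\kappa_j$ in turn to whichever endpoint does not decrease $I_S$ yields a maximizer all of whose eigenvalues lie in $\{0,1\}$, i.e.\ an orthogonal projection; this is exactly the ``simplified representation'' of the optimal covariance announced above.

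It then remains to maximize $I_S$ over projections. For the projection onto a $k$-dimensional subspace with orthonormal basis $\bQ$ ($\bQ^\dagger\bQ = \bI$), Sylvester's determinant identity gives $I_S = \log\bigl(|\bI + \bQ^\dagger\bA_B\bQ| / |\bI + \bQ^\dagger\bA_E\bQ|\bigr) = \log\bigl(|\bQ^\dagger\bG_B^\dagger\bG_B\bQ| / |\bQ^\dagger\bG_E^\dagger\bG_E\bQ|\bigr)$. Applying the diagonal GSVD \eqref{eq:GSVD:diagonal} to $(\bG_B,\bG_E)$ writes $\bG_B^\dagger\bG_B = \bX\diag(\ell_{B,i}^2)\bX^\dagger$ and $\bG_E^\dagger\bG_E = \bX\diag(\ell_{E,i}^2)\bX^\dagger$ with $\ell_{B,i}^2 + \ell_{E,i}^2 = 1$ and $\ell_{B,i}/\ell_{E,i} = \mu_i$; substituting and cancelling the (subspace-independent) congruence factors, the ratio becomes $|\bR^\dagger\diag(\mu_i^2)\bR| / |\bR^\dagger\bR|$ with the columns of $\bR$ spanning an arbitrary $k$-dimensional subspace. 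By the extremal (Courant--Fischer / Ky Fan, equivalently Cauchy--Binet) characterization of generalized eigenvalues, the maximum of this ratio over all $k$-dimensional subspaces is $\prod_{i=1}^k\mu_i^2$, attained on the span of the leading $k$ generalized singular vectors. Therefore $C_S(\bH_B,\bH_E,\obK) = \max_{0 \leq k \leq \Na}\sum_{i=1}^k\log\mu_i^2$, and since $\log\mu_i^2$ is non-increasing in $i$ this equals $\sum_{i:\mu_i>1}\log\mu_i^2 = \sum_{i=1}^{\Lb}\log\mu_i^2 = \sum_{i=1}^{\Na}[\log\mu_i^2]_+$, which is \eqref{eq:corol:capacity:GSV:2_expressions}.

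I expect the reduction to projections to be the main obstacle: $\bK \mapsto I_S(\bK)$ is a difference of concave functions, hence neither concave nor convex on the operator interval $\{\bzero \preceq \bK \preceq \bI\}$, so it is not \emph{a priori} clear that its maximum is attained at an extreme point (a projection). The resolution is the observation that, once the eigenvectors of $\bK$ are frozen, $I_S$ is separately monotone in each eigenvalue --- which is precisely what makes the extreme-point reduction, and with it the whole direct linear-algebraic route (in place of the information-theoretic argument of \cite{Wiretap_BustinEURASIP}), go through. Everything downstream --- Sylvester's identity, the GSVD substitution, and the Courant--Fischer bound --- is routine.
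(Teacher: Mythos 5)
Your proof is correct, and it reaches the theorem by a genuinely different route than the paper. The paper's proof splits into an upper bound and an achievability step: Lemma~\ref{lem:positive} (proved in Appendix~\ref{app:KochmanLemma} via GSV perturbation calculus, $d\mu_i=(\mu_i^2-1)\gamma_i(d\bK)$ with $\gamma_i(d\bK)\geq 0$) shows that shrinking $\bK$ pushes every $\mu_i$ toward $1$, which yields $I_S(\bK)\leq\sum_i[\log\mu_i^2(\obK)]_+$ for all admissible $\bK$ at once; achievability then comes from the explicit $\bK$ of~\eqref{eq:K_B}, verified in Appendix~\ref{app:GSVclipping} to ``truncate'' the GSVs. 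You replace both ingredients. For the reduction, you avoid eigenvalue perturbation entirely: fixing the eigenbasis of a maximizer and using the matrix determinant lemma, $I_S$ becomes $\log(1+\kappa_j\beta_j)-\log(1+\kappa_j\gamma_j)+\text{const}$ in each eigenvalue $\kappa_j$, hence monotone on $[0,1]$ with sign $\operatorname{sgn}(\beta_j-\gamma_j)$, so the optimum can be pushed coordinate-by-coordinate to a projection~--- a clean extreme-point argument for a function that is neither concave nor convex. For the value at a projection $\bQ\bQ^\dagger$, Sylvester's identity gives $I_S=\log\bigl(|\bQ^\dagger\bG_B^\dagger\bG_B\bQ|/|\bQ^\dagger\bG_E^\dagger\bG_E\bQ|\bigr)$, and after the diagonal GSVD substitution the Cauchy--Binet (or Ky Fan) extremal principle bounds it by $\prod_{i=1}^k\mu_i^2$, tight on the span of the leading generalized singular vectors~--- which is exactly the paper's $\bK$ of~\eqref{eq:K_B} (at $\obK=\bI$). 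What each route buys: the paper's Lemma~\ref{lem:positive} is a stronger statement (it bounds $I_S$ pointwise over the whole operator interval, not just at extreme points) and is reused conceptually in the confidential-broadcast section; your argument is more elementary, uses only rank-one determinant updates and a classical subspace extremal principle, and makes it self-evident that the optimizer is a projection rather than establishing this as a by-product of the explicit construction. One minor caveat: your reduction to $\obK=\bI$ tacitly assumes $\obK$ invertible and handles the rank-deficient case by restriction; the paper works with $\obK$ directly and so does not need this preliminary step.
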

This explicit capacity expression along with the optimal covariance matrix $\bK \preceq \obK$ were established by Bustin \etal~\cite{Wiretap_BustinEURASIP} using the channel enhancement technique along with vector extensions of the mutual information--minimum mean-square error (I--MMSE) relation.
We present an alternative proof of this result using a direct approach: once the optimization problem \eqref{eq:wiretap_covariance} is stated, it can be solved by linear algebra and elementary calculus only. The key to our proof 
is the following lemma.
\begin{lemma} \label{lem:positive}
Let $\obK$ and $\bK$ be two matrices satisfying \mbox{$\bzero \preceq \bK \preceq \obK$}. Then for all $i=1,\ldots,\Na$,
\begin{align} 
    \left| \log \mu_i(\bH_B,\bH_E,\obK) \right| \geq \left| \log \mu_i(\bH_B,\bH_E,\bK) \right|
    .
\end{align}
\end{lemma}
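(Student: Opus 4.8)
The plan is to convert the Loewner-ordering hypothesis into a harmless additive perturbation by re-expressing the GSVs as generalized eigenvalues of a shifted matrix pencil, and then to push a one-variable pointwise estimate through the Courant--Fischer min--max formula.

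First I would assume momentarily that $\bzero \prec \bK \preceq \obK$, recovering the general case at the end. Writing $\bA \triangleq \bH_B^\dagger \bH_B$, $\bB \triangleq \bH_E^\dagger \bH_E$ and taking $\bK^{1/2}$ to be the Hermitian square root (legitimate, since the GSVs do not depend on the choice of square root), we have $\bG_B^\dagger\bG_B = \bI + \bK^{1/2}\bA\bK^{1/2}$ and $\bG_E^\dagger\bG_E = \bI + \bK^{1/2}\bB\bK^{1/2}$, and hence
\[
    \det\!\left(\bG_B^\dagger\bG_B - \mu^2 \bG_E^\dagger\bG_E\right)
    = \det(\bK)\cdot\det\!\left((\bK^{-1}+\bA) - \mu^2(\bK^{-1}+\bB)\right).
\]
Thus $\mu_i^2(\bH_B,\bH_E,\bK)$ is exactly the $i$-th largest generalized eigenvalue of the pencil $(\bK^{-1}+\bA,\ \bK^{-1}+\bB)$ (matching the non-increasing ordering of the GSVs), both members of which are positive definite. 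By the Courant--Fischer characterization of generalized eigenvalues,
\[
    \mu_i^2(\bH_B,\bH_E,\bK) = \max_{\dim \mathcal{S} = i}\ \min_{\bzero\neq v\in\mathcal{S}}\ \frac{v^\dagger(\bK^{-1}+\bA)v}{v^\dagger(\bK^{-1}+\bB)v}.
\]

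Next I would use that $\bK \preceq \obK$ is equivalent (by operator anti-monotonicity of inversion) to $\bE \triangleq \bK^{-1}-\obK^{-1} \succeq \bzero$. For a fixed $v$ put $a = v^\dagger(\obK^{-1}+\bA)v>0$, $b = v^\dagger(\obK^{-1}+\bB)v>0$, $c = v^\dagger\bE v\geq 0$, so that the $\bK$-quotient equals $(a+c)/(b+c)$ while the $\obK$-quotient equals $a/b$. A two-line case analysis (reducing to $c(a-b)\geq 0$ or $c(b-a)\geq 0$) gives
\[
    \min\!\left(1,\tfrac{a}{b}\right)\ \leq\ \frac{a+c}{b+c}\ \leq\ \max\!\left(1,\tfrac{a}{b}\right),
\]
i.e.\ a nonnegative shift $c$ pulls the quotient toward $1$. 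Since $t\mapsto\min(1,t)$ and $t\mapsto\max(1,t)$ are nondecreasing, they commute with both the inner $\min$ over $v$ and the outer $\max$ over $\mathcal{S}$; applying them to the variational formula yields
\[
    \min\!\left(1,\mu_i^2(\bH_B,\bH_E,\obK)\right)\ \leq\ \mu_i^2(\bH_B,\bH_E,\bK)\ \leq\ \max\!\left(1,\mu_i^2(\bH_B,\bH_E,\obK)\right).
\]
Equivalently, $\mu_i^2(\bH_B,\bH_E,\bK)$ always lies in the closed interval between $1$ and $\mu_i^2(\bH_B,\bH_E,\obK)$. Taking logarithms and halving gives $\big|\log\mu_i(\bH_B,\bH_E,\bK)\big|\leq\big|\log\mu_i(\bH_B,\bH_E,\obK)\big|$, and as a bonus $\log\mu_i(\bH_B,\bH_E,\bK)$ carries the same sign as $\log\mu_i(\bH_B,\bH_E,\obK)$ --- the fact actually needed to deduce \thmref{thm:Bustin} from \lemref{lem:positive}.

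Finally, for general $\bzero\preceq\bK\preceq\obK$ I would apply the above to $\bK+\delta\bI$ and $\obK+\delta\bI$, which satisfy $\bzero\prec\bK+\delta\bI\preceq\obK+\delta\bI$, and let $\delta\downarrow 0$; the GSVs are continuous in $\bK$ because they are the eigenvalues of $(\bG_E^\dagger\bG_E)^{-1}\bG_B^\dagger\bG_B$ with $\bG_E^\dagger\bG_E\succeq\bI$ uniformly invertible. The load-bearing step is the first one: recognizing $\mu_i^2$ as a generalized eigenvalue of the $\bK^{-1}$-shifted pencil $(\bK^{-1}+\bA,\ \bK^{-1}+\bB)$ is precisely what converts the Loewner inequality $\bK\preceq\obK$ into the benign additive structure $(a+c)/(b+c)$; everything after that is elementary. (One may also obtain the lower bound from the upper one via the Bob/Eve symmetry $\mu_i(\bH_E,\bH_B,\cdot)=1/\mu_{\Na+1-i}(\bH_B,\bH_E,\cdot)$.)
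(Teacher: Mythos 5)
Your proof is correct, and it takes a genuinely different route from the paper's. The paper's Appendix~A argues perturbatively: it cites a derivative formula for generalized eigenvalues (Proposition~\ref{lem:GEV_GSV}), computes $d\lambda_i = e_i^{-2}(\lambda_i-1)\,\by_i^\dagger \bB^{-1}(d\bK)\bB^{-\dagger}\by_i$, concludes that the sign of $d\lambda_i$ matches the sign of $\lambda_i-1$ whenever $d\bK\succeq\bzero$, and then asserts the lemma follows by moving along a path from $\bK$ to $\obK$. Your argument is instead global and variational: you rewrite the squared GSVs as the generalized eigenvalues of the pencil $(\bK^{-1}+\bA,\ \bK^{-1}+\bB)$, so that the Loewner hypothesis becomes the additive perturbation $\bK^{-1}=\obK^{-1}+\bE$ with $\bE\succeq\bzero$, and then push the elementary scalar fact that $(a+c)/(b+c)$ lies between $1$ and $a/b$ through the Courant--Fischer min--max. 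What your route buys is that it is entirely algebraic and avoids the delicate issues the paper glosses over in its ``follows immediately'' step --- namely, that integrating a differential along a path requires differentiability of the $i$-th GSV and a discussion of crossings (where eigenvalues need not be smooth and the labeling can permute), and that the derivative formula itself presupposes $\bK\succ\bzero$ through $\bB^{-1}$. You handle the boundary case $\bK\not\succ\bzero$ explicitly via $\bK+\delta\bI$ and continuity, which the paper leaves implicit. What the paper's route buys, by contrast, is a shorter write-up given the cited derivative formula, and a local ``flow toward $1$'' picture that is perhaps more suggestive. Both yield the extra information actually used downstream in Theorem~\ref{thm:Bustin} --- that $\log\mu_i(\bH_B,\bH_E,\bK)$ keeps the sign of $\log\mu_i(\bH_B,\bH_E,\obK)$ --- which you explicitly note.
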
 
That is, as we ``decrease'' the input covariance, the GSVs move towards $\mu_i=1$. The proof, which appears in \appref{app:KochmanLemma}, uses standard matrix calculus to show that  
 the differential of the \mbox{$i$-th} GSV, $d \mu_i$, with respect to a change in the covariance matrix $d \bK$, is given by
    \begin{align}
        d \mu_i = \left( \mu_i^2 - 1 \right) \cdot \gamma_i (d\bK)   \,,
    \end{align}
    where $\gamma_i (d\bK) \geq 0$ for $d\bK \succeq \bzero$. Or to put it differently, \mbox{$d \mu_i > 0$} for $\mu_i > 1$, and $d \bmu_i < 0$ for $\mu_i < 1$.

By \lemref{lem:positive}, clearly \thmref{thm:Bustin} gives an upper bound on the capacity. 
To see that it is achievable, consider the matrix:
\begin{align} 
\label{eq:K_B}
    \bK = \obK^{1/2} \bV_A \bI_B \bV_A^\dagger \obK^{1/2 \dagger} ,
\end{align}
where $\bV_A$ is the right unitary matrix of the triangular form of the GSVD \eqref{eq:GSVD:Triangular}, 
$\bI_B$ is a diagonal matrix whose first $\Lb$ diagonal values (corresponding to GSVs that are greater than 1) are equal to 1, 
and the remaining $\Le$~--- to 0.
Trivially, $\bK \preceq \obK$. 
The choice of $\bK$ effectively truncates the GSVs of $\obK$:
 \begin{subequations}
 \begin{align}
      \log  \mu_i^2\left( \bH_B, \bH_C, \bK \right) = 
      \left[ \log \mu_i^2\left( \bH_B, \bH_C, \obK \right)  \right]_+ 
      .
    \label{eq:corol:capacity:GSV:nullified_cov}
    \end{align}
    \end{subequations}
This is formally proved in \appref{app:GSVclipping}.

\begin{remark}
    The optimal covariance matrix $\bK$ \eqref{eq:K_B} is denoted by $\bK_x^*$ in \cite{Wiretap_BustinEURASIP}, where it is given in terms of the diagonal form of the GSVD~\eqref{eq:GSVD:diagonal}:\footnote{In \cite{Wiretap_BustinEURASIP} a specific choice of $\bK^{1/2}$ was used: the matrix $\bB$ that satisfies $\bB \bB = \bK$.}
    \begin{align}
    \label{eq:K_B:Bustin}
        \bK = \obK^{1/2} \bY 
        \begin{bmatrix}
            \left( \bY_B^\dagger \bY_B \right)^{-1} & \bzero_{\Lb \times \Le}
         \\ \bzero_{\Le \times \Lb} & \bzero_{\Le \times \Le}
        \end{bmatrix}
        \bY^\dagger \obK^{\dagger/2}
        \,, 
    \end{align}
    where $\bY = \bX^{-\dagger}$ and $\bX$ is the right invertible matrix of \eqref{eq:GSVD:diagonal}, 
    $\bY_B$ is the sub-matrix composed of the first $\Lb$ columns of $\bY$, and $\bzero_{m \times n}$ denotes the all-zero matrix of dimensions \mbox{$m \times n$}.
Comparing \eqref{eq:K_B} and \eqref{eq:K_B:Bustin}, it is evident that using the triangular form of the GSVD indeed simplifies the representation over using the diagonal one.
\end{remark}

\begin{remark} 
    One may wonder why, of all possible choices of $\bV_A$, the capacity is given in terms of the GSVD. An intuitive reason is as follows. 
    By the majorization condition~\eqref{eq:majorization_gsvd}, the GSV vector is extremal among all possible diagonals. In particular, for any $\bV_A$,
    \begin{align}
        \sum_{i=1}^{\Na} \left[ \log \mu_i^2 \right]_+ \geq \sum_{i=1}^{\Na} \left[ \log \frac{b_i^2}{e_i^2} \right]_+ 
        . 
    \end{align} 
    Thus, the sum \eqref{eq:corol:capacity:GSV} is larger than the sum over diagonal ratios induced by other triangular decompositions.

\end{remark}

\begin{remark} Using \eqref{eq:wiretap_power}, the capacity of the MIMO wiretap channel under a power constraint $P$ can be written as 
    \begin{align}
    \label{eq:K_B:total_power}
        C_S(\bH_B,\bH_C,P) & = \max_{\bK: \trace\{\bK\} = P} \sum_{i = 1}^\Na \left[ \log \mu_i^2\left( \bH_B, \bH_C, \bK \right) \right]_+ .
    \end{align}
\end{remark}

\begin{remark}
\label{rem:GSVD:truncation}
    For the optimal $\bK$~\eqref{eq:K_B}, all the GSVs are greater or equal to 1. To the contrary, assume that some are strictly smaller than 1; then, we can use a matrix $\bK$ with the appropriate directions ``nullified''. Such a ``truncated'' matrix will satisfy the covariance constraint while improving the achievable secrecy rate of the scheme, in contradiction to the assumption.
    \emph{A fortiori}, under a power constraint, the power saved by such a truncation can be allocated to ``useful'' directions.
\end{remark}


\section{Scalar Transmission over MIMO Channels}
\label{ss:MIMO_P2P:SIC}

In this section we briefly review the connection between matrix decompositions and scalar transmission schemes, 
without secrecy requirements. For a more thorough account, the reader is referred to \cite{UCD,STUD:SP,JET:SeveralUsers2015:FullPaper}.

In this work we shall assume all the scalar codes to be Gaussian, as defined next.

\begin{defn}[Gaussian codebook]
\label{def:GaussianCodes}
    A Gaussian codebook of length $n$, rate $R$ and power $P - \eps$, where $\eps > 0$, 
    consists of $\left\lceil 2^{nR} \right\rceil$ codewords of length $n$, 
    denoted by $x^n\left( 1 \right), x^n\left( 2 \right), \ldots, x^n\left( \left\lceil 2^{nR} \right\rceil \right)$. The entries of all the codewords, 
    $\{x_t\left( i \right) | t = 1, \ldots, n \,;\, i = 1, \ldots, \left\lceil 2^{nR} \right\rceil \}$, 
    are i.i.d.\ with respect to a Gaussian distribution with zero mean and variance~\mbox{$P - \eps$}.
\end{defn}
\begin{remark}
    In the sequel, with a slight abuse of notation, we shall refer to such codes as Gaussian codes of power $P$ (where $\eps$ will serve as an implicit design parameter).
\end{remark}

Consider the channel \eqref{eq:Bob's_channel}. 
Construct the effective MMSE matrix $\bG_B = \bG(\bH_B, \bK)$ as in \defnref{def:VBLAST_matrix}, 
and choose some unitary matrix~$\bV_A$.

Apply the GTD \eqref{eq:GTD} to $\bG_B$ with $\bV_A$ as the right matrix:
\begin{align}
\label{eq:P2P:GTD}
    \bG_B = \bU_B \bT_B \bV_A^\dagger .
\end{align}
 Now let $\tbx$ be a vector of standard Gaussian variables, and set  
   \begin{align} \label{eq:tx}
        \bx = \bK^{1/2} \bV_A \tbx \,.
    \end{align}
Denote by $\tbU_B$ the $\Nb \times \Na$ upper-left sub-matrix of $\bU_B$, 
and define 
\begin{align}
\label{eq:tbT}
    \tbT_B = \tbU_B^\dagger \bH_B \bK^{1/2} \bV_A. 
\end{align}
The following lemma, whose proof can be found in~\cite{HassibiVBLAST}, \cite[Lemma~III.3]{UCD}, \cite[Appendix~I]{GDFE-HeshamElGamal}, provides the connection between the elements of $\bT_B$ and $\tbT_B$.
\begin{lemma}
\label{lem:bT_tbT_relation}
    Denote by $[\bT_B]$ the $\Na \times \Na$ upper-triangular sub-matrix composed of the first $\Na$ rows of $\bT_B$~\eqref{eq:P2P:GTD}.\footnote{Since $\bT_B$ is full rank, $[\bT_B]$ is full rank too, and hence also invertible. Further, its diagonal elements are greater or equal to 1 due to the block $\bI$ in the construction of $\bG_B$.} Then, $\tbT_B$~\eqref{eq:tbT} is equal to
    \begin{align}
        \tbT_B = [\bT_B] - [\bT_B]^{-\dagger}.
    \end{align}
    In particular, 
    \begin{align}
    \label{eq:bT_tbT_relation}
        \tT_{B;i,j} = 
        \begin{cases}
            T_{B;i,j} & i < j
	 \\ T_{B;i,j} - 1 / T_{B;i,j} & i = j
        \end{cases}
    \end{align}
    where $T_{B;i,j}$ and $\tT_{B;i,j}$ are the $(i, j)$ entries of the matrices $\bT_B$ and $\tbT_B$, respectively.
\end{lemma}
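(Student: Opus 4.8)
The plan is to read the claimed identity off directly from the block structure of $\bG_B$ in \eqref{eq:G_B} together with the unitarity of $\bU_B$; no property of the GTD is needed beyond the fact that the prescribed diagonal of $\bT_B$ consists of positive reals. First I would peel away the redundant rows and columns: since $\bT_B$ is generalized upper triangular with only $\Na$ columns, all of its rows below the $\Na$-th vanish, so only the first $\Na$ columns of $\bU_B$ survive in the product $\bU_B \bT_B \bV_A^\dagger$. Partition that $(\Na+\Nb)\times\Na$ sub-matrix of $\bU_B$ in block form matching the two row-blocks $\bH_B\bK^{1/2}$ and $\bI$ of $\bG_B$, i.e.\ as $\begin{pmatrix} \tbU_B \\ \bU_{B,2} \end{pmatrix}$, where $\tbU_B$ is exactly the $\Nb\times\Na$ upper-left block appearing in \eqref{eq:tbT} and $\bU_{B,2}$ is the $\Na\times\Na$ lower-left block. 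Then \eqref{eq:P2P:GTD} splits into
\begin{align}
    \bH_B \bK^{1/2} &= \tbU_B [\bT_B] \bV_A^\dagger ,
 \\ \bI &= \bU_{B,2} [\bT_B] \bV_A^\dagger .
\end{align}
The second identity, being an equation between $\Na\times\Na$ matrices, already shows that $[\bT_B]$ (hence also $\bV_A$ and $\bU_{B,2}$) is invertible and yields $\bU_{B,2} = \bV_A [\bT_B]^{-1}$. Orthonormality of the columns of $\bU_B$ (hence of the stacked block above) gives $\tbU_B^\dagger \tbU_B + \bU_{B,2}^\dagger \bU_{B,2} = \bI$, so $\tbU_B^\dagger \tbU_B = \bI - [\bT_B]^{-\dagger}[\bT_B]^{-1}$.

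Then I would assemble $\tbT_B$: plugging the first split identity into the definition \eqref{eq:tbT} and using $\bV_A^\dagger \bV_A = \bI$,
\begin{align}
    \tbT_B
    &= \tbU_B^\dagger \bH_B \bK^{1/2} \bV_A
     = \bigl( \tbU_B^\dagger \tbU_B \bigr) [\bT_B]
 \\ &= \bigl( \bI - [\bT_B]^{-\dagger}[\bT_B]^{-1} \bigr) [\bT_B]
     = [\bT_B] - [\bT_B]^{-\dagger} ,
\end{align}
which is the matrix form of the claim. For the entrywise version, recall from \thmref{thm:GTD} that $[\bT_B]$ is upper triangular with diagonal entries $T_{B;i,i} > 0$; hence $[\bT_B]^{-1}$ is upper triangular with diagonal $1/T_{B;i,i}$, and therefore $[\bT_B]^{-\dagger}$ is \emph{lower} triangular with the same (real, positive) diagonal. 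Subtracting it from $[\bT_B]$ leaves every strictly-upper entry of $[\bT_B]$ unchanged and replaces each diagonal entry $T_{B;i,i}$ by $T_{B;i,i} - 1/T_{B;i,i}$, which is \eqref{eq:bT_tbT_relation}.

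Essentially everything here is mechanical, so I do not expect a genuine obstacle; the one spot that warrants care is the bookkeeping of the block partition --- checking that the $\Nb\times\Na$ matrix $\tbU_B$ used in \eqref{eq:tbT} is indeed the block of $\bU_B$ that multiplies $[\bT_B]$ in the $\bH_B\bK^{1/2}$ row-block of $\bG_B$, and that the square identity $\bI = \bU_{B,2}[\bT_B]\bV_A^\dagger$ is precisely what makes $[\bT_B]^{-\dagger}$ well defined. (This computation is the classical MMSE-GDFE identity, so one could alternatively just invoke \cite{HassibiVBLAST} or \cite[Lemma~III.3]{UCD}, as the lemma statement already notes, but the argument above is self-contained.)
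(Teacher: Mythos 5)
Your proof is correct. The paper does not give its own proof of this lemma but defers to the cited references (\cite{HassibiVBLAST}, \cite[Lemma~III.3]{UCD}, \cite[Appendix~I]{GDFE-HeshamElGamal}); your block-partitioning argument — splitting the first $\Na$ columns of $\bU_B$ to match the row-blocks $\bH_B\bK^{1/2}$ and $\bI$ of $\bG_B$, reading off $\bU_{B,2}=\bV_A[\bT_B]^{-1}$ from the identity block, and using column-orthonormality to get $\tbU_B^\dagger\tbU_B = \bI - [\bT_B]^{-\dagger}[\bT_B]^{-1}$ — is precisely the standard MMSE--GDFE derivation those references employ, so this is a correct self-contained reconstruction of the cited proof rather than a new route. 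The one place that warrants the care you flagged is also the one place where you got it right: the entrywise conclusion for $i=j$ uses that $T_{B;i,i}$ is real and positive (guaranteed by the GTD convention and the $\bI$ block in $\bG_B$), so the conjugation in $[\bT_B]^{-\dagger}$ leaves the diagonal unchanged.
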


Let
    \begin{subequations}
    \label{eq:rx}
    \noeqref{eq:rx:Uy,eq:rx:Ux+Uz,eq:rx:Tx}
    \begin{align} 
        \tby_B &= \tbU_B^\dagger \by_B 
    \label{eq:rx:Uy}
     \\ &= \tbU_B^\dagger \bH_B \bK^{1/2} \bV_A \tbx + \tbU_B^\dagger \bz_B
    \label{eq:rx:Ux+Uz}
     \\ &= \tbT_B \tbx + \tbz_B \,.\ \ \ 
    \label{eq:rx:Tx}
    \end{align}
    \end{subequations}
  Since $\tbU_B$ is not unitary, the statistics of $\tbz_B \triangleq \tbU_B^\dagger \bz_B$ differ from those of $\bz_B$,
    and its covariance matrix is given by \mbox{$\bK_{\tbz_B} \triangleq \tbU_B \tbU_B^\dagger$}.
Now, for $i=1,\ldots,\Na$, define [recall~\eqref{eq:bT_tbT_relation}]
\begin{subequations}
\label{eq:successive}
\noeqref{eq:successive:y-Tx,eq:successive:explicit,eq:successive:effective}
\begin{align}
        y'_{B;i} & = \ty_{B;i} - \sum_{\ell = i+1}^\Na T_{B;i,\ell} \tx_\ell 
\label{eq:successive:y-Tx}
     \\ &= \tT_{B;i,i} \tx_i + \sum_{\ell=1}^{i-1}  \tT_{B;i,\ell} \tx_\ell + \tz_{B;i} 
\label{eq:successive:explicit}
     \\ &\triangleq \tT_{B;i,i} \tx_i + z^\eff_{B;i} \,,
\label{eq:successive:effective}
\end{align}
\end{subequations}
$\tz_{B;i}$ and $z^\eff_{B;i}$ are the $i$-th entries of the vectors $\tbz_B$ and $\bz_B^\eff$, respectively, 
and \mbox{$z_{B;i}^\eff \triangleq \sum_{\ell=1}^{i-1}  \tT_{B;i,\ell} \tx_\ell + \tz_{B;i}$} is the resulting  total effective noise vector.

In this scalar channel from $\tx_i$ to $y'_{B;i}$, 
resulting after the subtraction of the previously recovered symbols $\{\tx_\ell | \ell > i\}$, 
we view the remaining symbols $\{\tilde x_\ell | \ell < i \}$ as ``interference'', $\tz_{B;i}$~--- as ``noise'', 
and their sum $z^\eff_{B;i}$~--- as ``effective noise''.
The resulting signal-to-interference-and-noise ratio (SINR) is given by:
    \begin{align}
    \label{eq:SINR}
        \SINR_{B;i} 
        &\triangleq \frac{(\tT_{B;i,i})^2}{K_{\bz_B^\eff;i,i}}
     \\ &\triangleq \frac{(\tT_{B;i,i})^2}{K_{\tbz_B;i,i} + \sum\limits_{\ell=1}^{i-1} ( \tT_{B;i,\ell})^2} \,,
    \end{align}
    where $K_{\bz^\eff_B;i,j}$ and $K_{\tbz_B;i,j}$ 
    denote the $(i,j)$ entries of $\bK_{\tbz_B}$ and $\bK_{\bz^\eff_B}$, respectively.
    The following key result achieves the mutual information \cite{HassibiVBLAST}, \cite[Lemma~III.3]{UCD}, \cite[Appendix~I]{GDFE-HeshamElGamal} and is based on \lemref{lem:bT_tbT_relation}.\footnote{Note that, even though $\tbz_B$ has dependent components, the entries of the effective noise $\bz^\eff_B$, are independent.}
    
    \begin{subequations}
    \label{eq:Bob:SINRs_rates}
    \noeqref{eq:Bob:SINRs_rates:MI,eq:Bob:SINRs_rates:SNR_Bi,eq:Bob:SINRs_rates:SNR}
    \begin{align}
          \CMI{\tx_i}{\by_B}{\tx_{i+1}^\Na}
          &= \MI{\tx_i}{y'_{B;i}}
    \label{eq:Bob:SINRs_rates:MI}
      \\ &= \log (1 + \SINR_{B;i})
    \label{eq:Bob:SINRs_rates:SNR_Bi}
     \\  &= \log \left( b_i^2 \right)
          ,
    \label{eq:Bob:SINRs_rates:SNR}
    \end{align}
    \end{subequations}
    where $\left\{ b_i \right\}$ are the diagonal values of $\bT_B$~\eqref{eq:P2P:GTD} [mind the difference from the diagonal values of $\tbT_B$~\eqref{eq:bT_tbT_relation}], 
    which satisfy 
    \begin{align} 
    \label{eq:P2P:Bob:SNRs}
        b_i^2 = 1 + \SINR_{B;i}
    \end{align}
    and 
    \begin{align}
    \label{eq:Bob:Sum_SINR_rates}
        \sum_{i=1}^\Na \log \left( b_i^2 \right)
        &= \sum_{i=1}^\Na \log \left( 1 + \SINR_{B;i} \right)    
     \\* &= I(\bH_B, \bK) ,
    \end{align}
which equals the channel capacity for the optimal $\bK$.

\begin{figure}[t]
     \centering
    \subfloat[Transmitter]{
     \label{fig:VBLAST:Tx}
	 \psfrag{&E1}{Encoder 1}
	 \psfrag{&E2}{Encoder $\Na$}
	 \psfrag{&V}{$\bK_\bx^{1/2} \bV_A$}
	 \psfrag{&tX1}{$\tx_1$}
	 \psfrag{&tX2}{$\tx_\Na$}
	 \psfrag{&X1}{$x_1$}
	 \psfrag{&X2}{$x_\Na$}
	 \epsfig{file = ./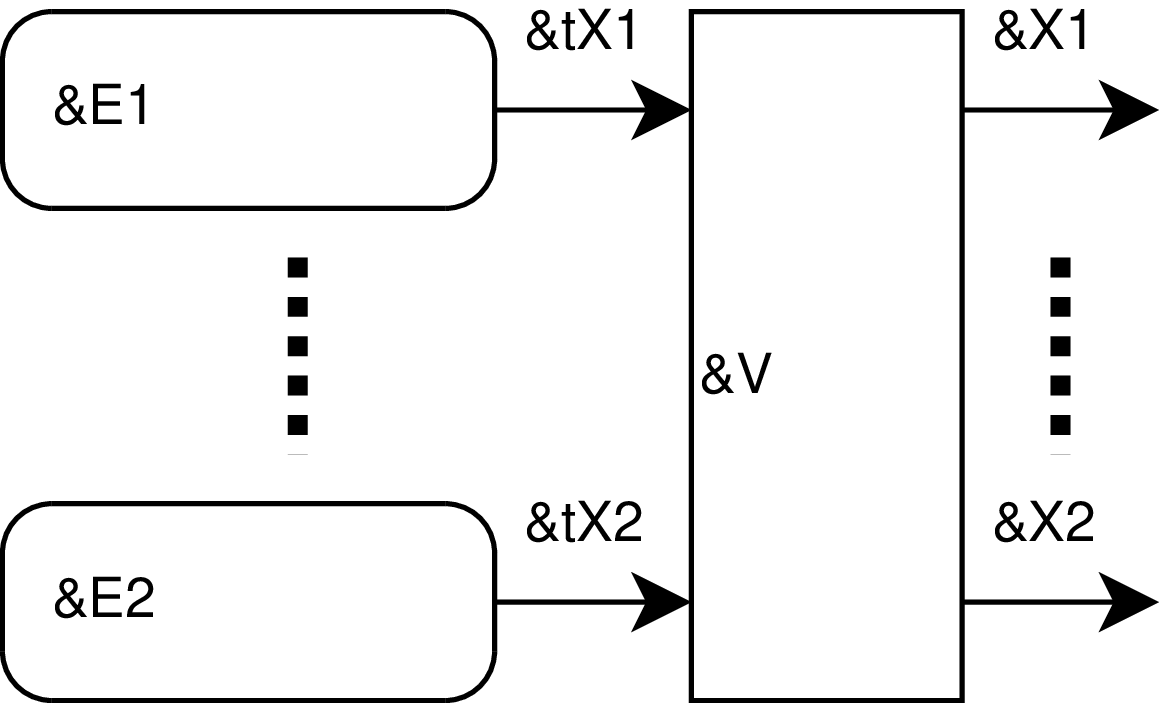, scale = .46}
     }
     \\
  \centering
    \subfloat[Receiver]{
    \label{fig:Rx}
	\psfrag{&x1}{$\thx_\Na$}
	 \psfrag{&x2}{$\thx_{\Na-1}$}
	 \psfrag{&x3}{$\thx_2$}
	 \psfrag{&x4}{$\thx_1$}
	 \psfrag{&y1}{$y_\Nb$}
	 \psfrag{&y2}{$y_{\Nb-1}$}
	 \psfrag{&y3}{$y_1$}
	 \psfrag{&ty1}{$\ty_\Na$}
	 \psfrag{&ty2}{$\ty_{\Na-1}$}
	 \psfrag{&ty3}{$\ty_1$}
	 \psfrag{&y'1}{$y'_\Na$}
	 \psfrag{&y'2}{$y'_\Na$}
	 \psfrag{&y'3}{$y'_1$}
	 \psfrag{&U}{$\tbU_B^\dagger$}
	 \psfrag{&-}{$-$}
	 \psfrag{&D1}{Decoder $\Na$}
	 \psfrag{&D2}{Dec.~$\Na-1$}
	 \psfrag{&D3}{Decoder 1}
	 \psfrag{&G0}{$T_{B;\Na-1,\Na}$}
	 \psfrag{&G1}{$T_{B;1,\Na}$}
	 \psfrag{&G2}{$T_{B;1,\Na-1}$}
	 \psfrag{&G3}{$T_{B;1,2}$}
	 \epsfig{file = ./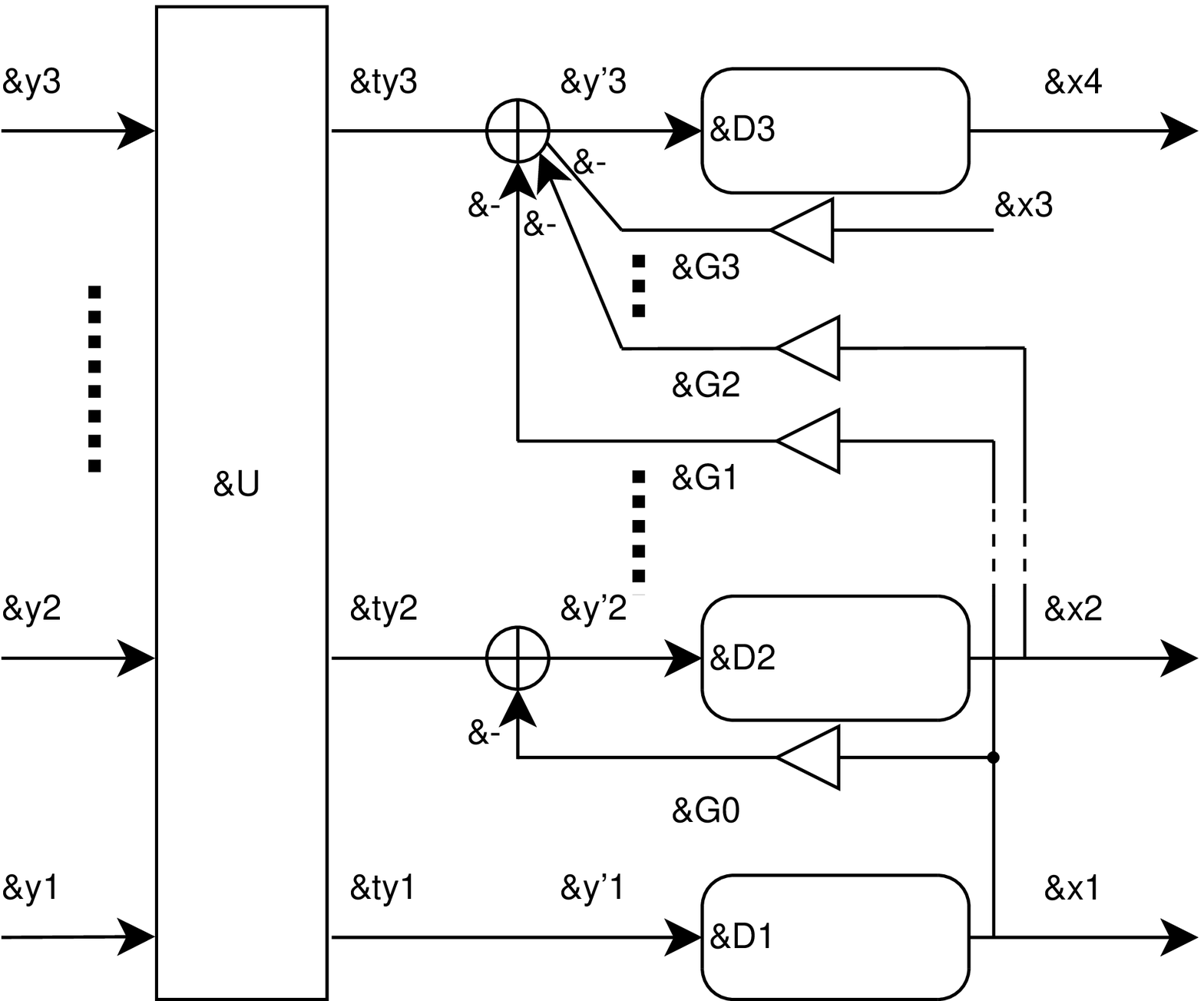, scale = .46}
     }
     \caption{Layered-SIC scheme. $\thx_\ell$ denotes the decoded symbol $\hx_\ell$ at the receiver.}
     \label{fig:VBLAST}
\end{figure}

\begin{figure*}[ht]
     \centering
     \subfloat[Full scheme and channel]{
     \label{fig:SVD@Eve:Full}
	 \psfrag{&ENC1}{\hspace{-4.5mm} Encoder 1}
	 \psfrag{&ENCN}{\hspace{-4.5mm} Encoder $N$}
	 
	 \psfrag{&Va}{$\ubV_A$}
	 \psfrag{&He}{$\bH_E$}
	 \psfrag{&Hb}{$\bH_B$}
	 \psfrag{&Ue}{$\ubU_E^\dagger$}
	 \psfrag{&Ub}{$\tbU_B^\dagger$}
	 
	 \psfrag{&SIC}{SIC}
	 \psfrag{&Rx}{\hspace{-4.5mm} Decoder}
	 
	 \psfrag{&tx1}{$\tx_1$}
	 \psfrag{&txN}{$\tx_N$}
	 \psfrag{&x1}{$x_1$}
	 \psfrag{&xN}{$x_N$}
	 \psfrag{&ze1}[lb]{$z_{E;1}$}
	 \psfrag{&zeN}[lb]{$z_{E;N}$}
	 \psfrag{&ye1}{$y_{E;1}$}
	 \psfrag{&yeN}{$y_{E;N}$}
	 \psfrag{&tye1}{$\ty_{E;1}$}
	 \psfrag{&tyeN}{$\ty_{E;N}$}

	 \psfrag{&zb1}[lb]{$z_{B;1}$}
	 \psfrag{&zbN}[lb]{$z_{B;N}$}
	 \psfrag{&yb1}{$y_{B;1}$}
	 \psfrag{&ybN}{$y_{B;N}$}
	 \psfrag{&tyb1}{$\ty_{B;1}$}
	 \psfrag{&tybN}{$\ty_{B;N}$}

	 \psfrag{&Alice}{Alice}
	 \psfrag{&Bob}{Bob}
	 \psfrag{&Eve}{Eve}
	 
	 \epsfig{file = ./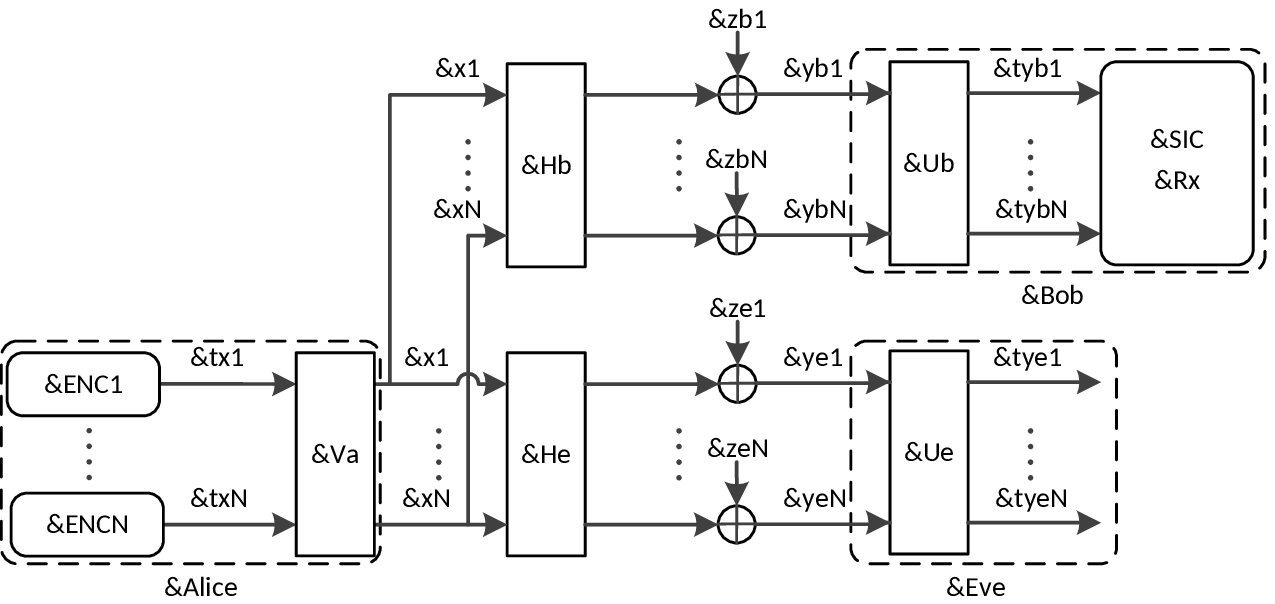, scale = 1.15}
     }
     \\
    \subfloat[Equivalent channel to Eve. $\tz_{E;1}, \ldots, \tz_{E;N}$ are independent 
    unit power AWGNs.]{
     \label{fig:SVD@Eve:EveEquivChannel}
	 \psfrag{&ENC1}{\hspace{-4.5mm} Encoder 1}
	 \psfrag{&ENCN}{\hspace{-4.5mm} Encoder $N$}
	 
	 \psfrag{&tx1}{$\tx_1$}
	 \psfrag{&txN}{$\tx_N$}
	 \psfrag{&tze1}[lb]{$\tz_{E;1}$}
	 \psfrag{&tzeN}[lb]{$\tz_{E;N}$}
	 \psfrag{&ye1}{$\ty_{E;1}$}
	 \psfrag{&yeN}{$\ty_{E;N}$}

	 \psfrag{&d1}{$d_1$}
	 \psfrag{&dN}{$d_N$}
	 
	 \epsfig{file = ./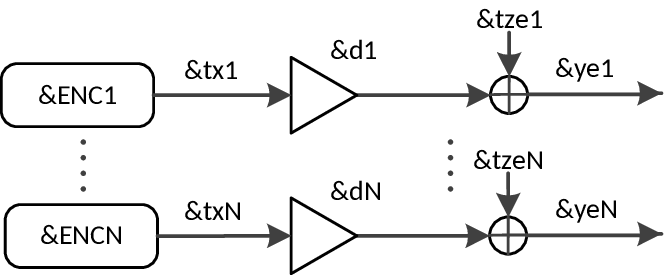, scale = 1.15}
     }
\ \ 
    \subfloat[Equivalent channel to Bob assuming correct past decision at the SIC decoder (which is achieved for a large enough blocklength $n$).]{
     \label{fig:SVD@Eve:BobEquivChannel}
	 \psfrag{&ENC1}{\hspace{-4.5mm} Encoder 1}
	 \psfrag{&ENCN}{\hspace{-4.5mm} Encoder $N$}
	 \psfrag{&DEC1}{\hspace{-4.5mm} Decoder 1}
	 \psfrag{&DECN}{\hspace{-4.5mm} Decoder $N$}
	 
	 \psfrag{&tx1}{$\tx_1$}
	 \psfrag{&txN}{$\tx_N$}
	 \psfrag{&tze1}[lb]{$\tz_{B;1}$}
	 \psfrag{&tzeN}[lb]{$\tz_{B;N}$}
	 \psfrag{&ye1}{$y'_{B;1}$}
	 \psfrag{&yeN}{$y'_{B;N}$}

	 \psfrag{&d1}{\!\!\!\!\!\! $\sqrt{b_1^2-1}$}
	 \psfrag{&dN}{\!\!\!\!\!\! $\sqrt{b_N^2-1}$}
	 
	 \epsfig{file = ./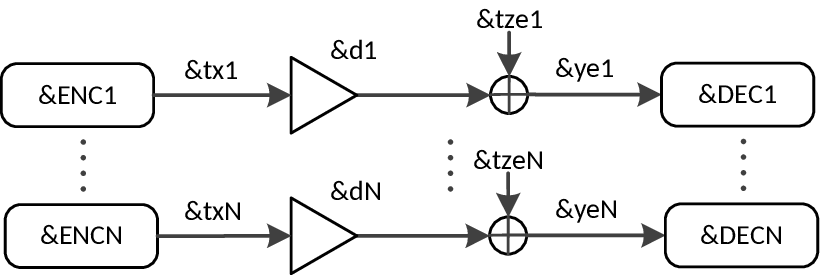, scale = 1.15}
     }
     \caption{Layered-SIC scheme for the MIMO wiretap channel. We assume here $N = \Na = \Nb = \Ne$, 
     for ease of presentation.}
     \label{fig:SVD@Eve}
\end{figure*}

The analysis above immediately gives rise to the following scheme, 
depicted also in \figref{fig:VBLAST}, 
which is, in turn, a variant of the renowned \mbox{V-BLAST/GDFE} scheme \cite{Foschini96,Wolniansky_V-BLAST,CioffiForneyGDFE,HassibiVBLAST}.
\begin{scheme*}[Layered-SIC]
\ 

  \textbf{Offline:}
  \begin{itemize}
  \item
      Select an admissible $\Na \times \Na$ input covariance matrix $\bK$ that satisfies the input constraint.\footnote{More generally, any number $N \geq \rank\{\bK\}$ of scalar codebooks can be used; see \cite{UCD}, \cite{STUD:SP} for details.}
  \item
      Construct the effective MMSE matrix~\eqref{eq:G_B}: $\bG_B = \bG(\bH_B,\bK)$.
  \item
      Select a unitary triangularization~\eqref{eq:GTD} and apply it to the matrix $\bG_B$, 
      as in \eqref{eq:P2P:GTD}, to obtain the unitary matrices $\bU_B$ and $\bV_A$, and 
      the generalized upper-triangular matrix $\bT_B$.
  \item
      Denote the $\Na$ diagonal elements of $\bT_B$ by $\{b_i\}$.
  \item
      Denote by $\tbU_B$ the $\Nb \times \Na$ upper-left sub-matrix of $\bU_B$, 
      and construct the
     corresponding matrix $\tbT_B$ according to \eqref{eq:tbT}: $\tbT_B = \tbU_B^\dagger \bH_B \bK^{1/2} \bV_A$.
  \item
      Construct $\Na$ scalar \emph{Gaussian codes} of length $n$ and unit power
      that are good for SNRs $\{b_i^2 - 1\}$, \ie, codes of rates close to 
      \begin{align}
      \label{eq:SIC_scheme:rates}
          \left\{ R_i \middle| R_i = \log\left( b_i^2 \right) ,\, i \in \{ 1, \ldots, \Na \} \right\} .
      \end{align}
  \end{itemize}

  \textbf{Alice:}
  At each time instant $t = 1, \ldots, n$:
  \begin{itemize}
  \item 
    Forms the vector $\tbx$ of length $\Na$, by taking one sample from each codebook.
  \item
    Attains the vector $\bx$ by multiplying $\tbx$ by $\bV_A$ and $\bK^{1/2}$:
    \begin{align}
    \label{eq:scheme:SIC_scheme:x}
        \bx = \bK^{1/2} \bV_A \tbx .
    \end{align}
  \item
    Transmits $\bx$.
  \end{itemize}

  \textbf{Bob:}
  
    \begin{itemize}
    \item
      At each time instant $t = 1, \ldots, n$, receives $\by_B$ and 
      forms $\tby_B$ according to \eqref{eq:rx}:
        \begin{align}
            \tby_B &= \tbU_B^\dagger \by_B 
         \\ &= \tbT_B \tbx + \tbz_B .
        \end{align}
    \item
      Decodes the $n$-length codewords using SIC, from last (\mbox{$i = \Na$}) to first ($i = 1$):
      Assuming correct decoding of all codebooks $i+1,\ldots,\Na$, Bob forms $y'_{B;i}$ \eqref{eq:successive}:
      \begin{align}
      \label{eq:scheme:SIC:y_B}
        y'_{B;i} = \tT_{B;i,i} \tx_i + z^\eff_i \,,
      \end{align}
      and recovers $\tx_i$.
  \end{itemize} 
\end{scheme*}

By the analysis above, the scheme is optimal in the sense that the sum of codebook rates can approach the channel capacity.

\begin{remark}
    The SIC procedure and the performance analysis of the scheme implicitly assume that the yet-undecoded codebooks 
    can be considered as AWGN, and consequently that each codebook should be capacity achieving for an AWGN channel.
    This is indeed true for Gaussian codes (recall \defnref{def:GaussianCodes}) but not for any single-user scalar capacity-achieving codes as is discussed in \secref{s:discussion}.
\end{remark}


\section{Multi-Stream Schemes for the \\ MIMO Wiretap Channel}
\label{s:schemes}

Equipped with the results presented in the previous sections, 
we describe how to construct multi-stream schemes that achieve the capacity of the MIMO wiretap channel.

We first describe a scheme in which the channel to Eve is effectively diagonalized, in \secref{s:SVD@Eve}. This particular choice facilitates the proof of both weak and strong secrecy guarantees over this channel. We then extend this result in \secref{s:GeneralScheme}, by proving that any joint triangularization~\eqref{eq:STUD} can be used to construct a multi-stream capacity-achieving scheme.


\subsection{Orthogonalizing Eve's Channel}
\label{s:SVD@Eve}

We now present a simple adaptation of the layered-SIC scheme of \secref{ss:MIMO_P2P:SIC} to the MIMO wiretap setting, depicted also in \figref{fig:SVD@Eve}, 
that achieves the secrecy capacity of the channel using scalar wiretap codes.

To this end, we note that the layered-SIC scheme is capacity-achieving (without secrecy constraints) for any choice of $\bV_A$ in~\eqref{eq:P2P:GTD}.
In particular, we can choose this matrix to be the unitary matrix that diagonalizes Eve's effective channel matrix, namely, the right matrix of the SVD of Eve, denoted by $\ubV_A$:
\begin{align}
\label{eq:SVD@Eve:SVDofHe}
    \bH_E \bK^{1/2} = \ubU_E \ubD_E \ubV_A^\dagger \,.
\end{align}
Applying this $\ubV_A$ to $\bH_E$ (followed by $\bK^{1/2}$) provides effective parallel scalar independent channels to Eve, 
of SNRs 
$\{d_i^2\}$, 
where $\{d_i\}$ are the diagonal values of $\bD_E$, which constitute the singular values of $\bH_E \bK^{1/2}$.

The following simple lemma summarizes the connection between the SVDs of the effective channel matrix~\eqref{eq:SVD@Eve:SVDofHe} and the effective MMSE channel matrix $\bG_E(\bH_E,\bK)$.
\begin{lemma}[Connection to effective MMSE matrix]
\label{rem:SVD:equivalance}
    The SVD of the effective MMSE matrix $\bG_E = \bG(\bH_E, \bK)$ (recall \defnref{def:VBLAST_matrix}) is given by 
    \begin{align}
    \label{eq:bG_E:SVD}
        \bG_E = \bU_E \bD_E \bV_A ,
    \end{align}
    where $\bD_E$ is a generalized diagonal matrix (\viz, $D_{E;i,j} = 0$ for $i \neq j$); denote its diagonal elements by $\{e_i\}$.
    
    The SVD of $\bG_E$~\eqref{eq:bG_E:SVD} 
    is connected to the SVD of $\bH_E \bK^{1/2}$ \eqref{eq:SVD@Eve:SVDofHe} as follows.
    Define $d_i = 0$ for $i > \Na$, 
    and note that $e_i = 1$ for $i > \Na$.
    Define further $\Lambda_E$ as the generalized diagonal matrix of dimensions $\Ne \times \Na$ 
    whose diagonal is equal to $\left( \frac{d_1}{e_1}, \ldots, \frac{d_r}{e_r}  \right)$,
    where $r = \min \{ \Na, \Ne \}$.
    Then, 
    \begin{enumerate}
    \item 
	$\ubV_A = \bV_A$, \ie, $\bG_E$ and $\bH_E \bK^{1/2}$ are diagonalized by the same right matrix.
    \item 
	$1+d_i^2 = e_i^2\,, \qquad i=1,\ldots,N_A$.
    \item 
	$\tbU_E = \ubU_E \Lambda_E$, where $\tbU_E$ is the $\Ne \times \Na$ upper-left sub-matrix of $\bU_E$.
    \end{enumerate}
\end{lemma}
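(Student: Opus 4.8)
The plan is to push everything through the Gram matrix $\bG_E^\dagger \bG_E$, which collapses the block structure of $\bG_E$ into a single additive identity. From \defnref{def:VBLAST_matrix},
\begin{align}
    \bG_E^\dagger \bG_E = \left( \bH_E \bK^{1/2} \right)^\dagger \left( \bH_E \bK^{1/2} \right) + \bI \,,
\end{align}
so substituting the SVD $\bH_E \bK^{1/2} = \ubU_E \ubD_E \ubV_A^\dagger$ of \eqref{eq:SVD@Eve:SVDofHe} gives
\begin{align}
    \bG_E^\dagger \bG_E = \ubV_A \left( \ubD_E^\dagger \ubD_E + \bI \right) \ubV_A^\dagger \,.
\end{align}
The right-hand side is an eigendecomposition of $\bG_E^\dagger \bG_E$: the central factor is diagonal with strictly positive entries and $\ubV_A$ is unitary. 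Reading off the eigenvectors yields item~1, that the right unitary matrix of the SVD of $\bG_E$ may be taken to be $\ubV_A$, so $\bV_A = \ubV_A$; reading off the eigenvalues yields item~2, $e_i^2 = d_i^2 + 1$ for $i = 1, \ldots, \Na$, where $d_i$ is the $i$-th singular value of $\bH_E \bK^{1/2}$ (taken to be $0$ beyond its rank), so that the remaining $e_i$ equal $1$, consistent with $\bG_E^\dagger \bG_E \succeq \bI$.

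With $\bV_A$ and the $\{e_i\}$ fixed, item~3 follows by identifying the left singular matrix. Since $\bG_E = \bU_E \bD_E \bV_A^\dagger$ with $\bD_E$ generalized diagonal and each $e_i \geq 1 > 0$, the first $\Na$ columns of $\bU_E$ equal $\bG_E \bV_A \diag\!\left( e_1^{-1}, \ldots, e_{\Na}^{-1} \right)$. Restricting this identity to its top $\Ne$ rows, and using that the top $\Ne$ rows of $\bG_E$ form $\bH_E \bK^{1/2}$, gives
\begin{align}
    \tbU_E = \bH_E \bK^{1/2}\, \bV_A\, \diag\!\left( e_1^{-1}, \ldots, e_{\Na}^{-1} \right) = \ubU_E\, \ubD_E\, \diag\!\left( e_1^{-1}, \ldots, e_{\Na}^{-1} \right) ,
\end{align}
the last step using item~1 (so that $\ubV_A^\dagger \bV_A = \bI$). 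The product $\ubD_E \diag(e_i^{-1})$ is exactly the $\Ne \times \Na$ generalized diagonal matrix with diagonal entries $d_1/e_1, \ldots, d_r/e_r$, $r = \min\{\Na, \Ne\}$; in other words, this product is $\Lambda_E$, whence $\tbU_E = \ubU_E \Lambda_E$.

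I expect the argument to be short and essentially bookkeeping. The one subtlety worth flagging is the non-uniqueness of the SVD when $\bH_E \bK^{1/2}$ has repeated or zero singular values: item~1 is to be read as asserting that a \emph{common} right unitary matrix can be chosen, which is precisely what the spectral decomposition of $\bG_E^\dagger \bG_E$ provides, and once that matrix and the $\{e_i\}$ are pinned down the remaining identities are forced. Degenerate dimensions (e.g.\ $\Ne < \Na$) are accommodated by the zero/one padding conventions already fixed in the lemma statement, so no case analysis is required.
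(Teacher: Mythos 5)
The paper declares this lemma ``simple'' and gives no proof, so there is nothing to compare against; your argument is correct and is surely the intended one. Passing to the Gram matrix $\bG_E^\dagger \bG_E = (\bH_E\bK^{1/2})^\dagger(\bH_E\bK^{1/2}) + \bI$ collapses the stacked structure into a shift of $(\bH_E\bK^{1/2})^\dagger(\bH_E\bK^{1/2})$ by the identity; substituting the SVD of $\bH_E\bK^{1/2}$ then produces a spectral decomposition in terms of $\ubV_A$ that immediately delivers items~1 and~2 (with the ordering preserved since $x\mapsto x^2+1$ is increasing on $x\ge 0$). Item~3 is clean bookkeeping: read the first $\Na$ columns of $\bU_E$ off $\bG_E \bV_A \,\diag(e_1^{-1},\ldots,e_{\Na}^{-1})$, restrict to the top $\Ne$ rows (which are exactly $\bH_E\bK^{1/2}$), and recognize $\ubD_E\,\diag(e_i^{-1}) = \Lambda_E$. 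Your remark on SVD non-uniqueness is the right reading of the lemma's claim and handles the degenerate cases, and the paper's ``$i>\Na$'' in the statement is plainly a typo for $i>r$, which your proof treats correctly via $d_i=0$, $e_i=1$ beyond the rank of $\bH_E\bK^{1/2}$. No gaps.
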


The respective decomposition of $\bG_B$ is as in \eqref{eq:P2P:GTD}, where the diagonal values of the resulting generalized triangular matrix $\bT_B$ are $\{b_i\}$.

Since Eve observes parallel independent channels, using scalar wiretap codes over these channels, that are matched to the SNRs to Eve, 
$\{ d_i^2 \}$, 
guarantees the secrecy of the scheme.
Moreover, by using wiretap codes that work with respect to the SNRs to Bob of \eqref{eq:P2P:Bob:SNRs}, the secrecy capacity is achieved.
This is formally stated in the following theorem.

\begin{thm}
\label{thm:SVD@Eve}
  The layered-SIC scheme of \secref{ss:MIMO_P2P:SIC} achieves the secrecy capacity under a covariance constraint 
    $C_S \left( \bH_B, \bH_E \right. \obK)$ 
    by using:
    \begin{itemize}
    \item 
        The optimal input covariance matrix $\bK$ of \eqref{eq:K_B}. 
    \item
        Choosing $\bV_A$ of the SVD of $\bH_E \bK^{1/2}$ \eqref{eq:SVD@Eve:SVDofHe}.
    \item
        Scalar Gaussian capacity-achieving wiretap codes 
        that are designed for the Bob--Eve SNR-pairs $\left\{ \left( b_i^2-1, d_i^2 \right) \right\}$.
    \end{itemize}
\end{thm}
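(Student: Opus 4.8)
The plan is to check, for the stated instantiation of the Layered-SIC scheme, the three ingredients of a capacity-achieving wiretap scheme: reliable decoding at Bob, weak secrecy at Eve, and total secret rate equal to $C_S(\bH_B,\bH_E,\obK)$. Throughout, $\bK$ is the optimal covariance of \eqref{eq:K_B}, $\bV_A=\ubV_A$ is the right matrix of the SVD \eqref{eq:SVD@Eve:SVDofHe} of $\bH_E\bK^{1/2}$, and (as in \secref{ss:MIMO_P2P:SIC}) $\tbx=(\tx_1,\ldots,\tx_{\Na})$ stacks one symbol of each layer's scalar Gaussian wiretap codebook with $\bx=\bK^{1/2}\bV_A\tbx$. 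The first and load-bearing step is to note that this $\bV_A$ makes Eve's channel literally split into independent scalar Gaussian channels: by \lemref{rem:SVD:equivalance}, $\bV_A$ also diagonalizes $\bG_E$ and $\bH_E\bK^{1/2}\bV_A=\ubU_E\ubD_E$, so substituting $\bx$ into the channel to Eve gives $\by_E=\ubU_E\ubD_E\tbx+\bz_E$; since $\ubU_E$ is unitary, $\tby_E\triangleq\ubU_E^\dagger\by_E=\ubD_E\tbx+\ubU_E^\dagger\bz_E$ is reversibly obtained from $\by_E$ and has i.i.d.\ unit-variance circularly-symmetric Gaussian noise, and because $\ubD_E$ is generalized diagonal it reads $\ty_{E;i}=d_i\tx_i+\tz_{E;i}$ --- exactly the AWGN channels of SNRs $\{d_i^2\}$ to which the layer-$i$ codebook is matched.

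Reliability at Bob then follows verbatim from the secrecy-free Layered-SIC analysis of \secref{ss:MIMO_P2P:SIC} applied to $\bG_B=\bG(\bH_B,\bK)$ with the same $\bV_A$: conditioned on correct decoding of layers $i+1,\ldots,\Na$, Bob's $i$-th effective scalar channel is AWGN with $\SINR_{B;i}=b_i^2-1$ by \eqref{eq:SINR} and \eqref{eq:P2P:Bob:SNRs} (its effective noise is a sum of independent Gaussians). Regarded as an ordinary channel code for Bob, a scalar Gaussian capacity-achieving wiretap codebook for the pair $(b_i^2-1,d_i^2)$ has codeword rate $\log b_i^2$ (secret message plus randomization), which by \eqref{eq:Bob:SINRs_rates} is exactly the capacity of that effective channel; hence its decoding error vanishes with $n$, and Bob recovers the full codeword $\tx_i^n$ needed for the next cancellation. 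Running SIC from $i=\Na$ down to $i=1$ with a union bound over the $\Na$ layers makes the overall error vanish, which retroactively justifies the ``correct past decisions'' assumption.

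For secrecy, the messages $W_1,\ldots,W_{\Na}$, the layers' local randomness, and Eve's sub-channel noises are mutually independent, so the pairs $(W_i,\ty_{E;i}^n)$ are mutually independent, whence
\begin{align}
  \MI{W_1,\ldots,W_{\Na}}{\by_E^n}=\MI{W_1,\ldots,W_{\Na}}{\tby_E^n}=\sum_{i=1}^{\Na}\MI{W_i}{\ty_{E;i}^n}\le\sum_{i=1}^{\Na}n\eps_i ,
\end{align}
the last step being the weak-secrecy guarantee of each scalar Gaussian wiretap codebook over its own AWGN-$d_i^2$ channel (a layer with zero secret rate carries no $W_i$ and contributes $0$); taking $\sum_i\eps_i\le\eps$ yields \eqref{eq:weak_secrecy}. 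For the rate, the delivered secret sum-rate $\sum_{i=1}^{\Na}\left[\log b_i^2-\log(1+d_i^2)\right]_+$ (recall \eqref{eq:AWGN_wiretap}) equals $\sum_{i=1}^{\Na}\left[\log(b_i^2/e_i^2)\right]_+$ via $1+d_i^2=e_i^2$ (\lemref{rem:SVD:equivalance}), with $\{e_i\}$ the diagonal of $\bT_E=\bD_E$. Since $[x]_+\ge x$ and \eqref{eq:I_S_b_c} holds for every $\bV_A$, this is at least $\sum_i\log(b_i^2/e_i^2)=I_S(\bH_B,\bH_E,\bK)$; and for $\bK$ from \eqref{eq:K_B}, \eqref{eq:corol:capacity:GSV:nullified_cov} gives $\log\mu_i^2(\bH_B,\bH_E,\bK)=\left[\log\mu_i^2(\bH_B,\bH_E,\obK)\right]_+\ge 0$, so $I_S(\bH_B,\bH_E,\bK)=\sum_i\left[\log\mu_i^2(\bH_B,\bH_E,\obK)\right]_+=C_S(\bH_B,\bH_E,\obK)$ by \thmref{thm:Bustin}. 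The reverse inequality is the majorization bound of the remark following \thmref{thm:Bustin}, $\sum_i\left[\log(b_i^2/e_i^2)\right]_+\le\sum_i\left[\log\mu_i^2(\bH_B,\bH_E,\bK)\right]_+=C_S(\bH_B,\bH_E,\obK)$; hence the delivered secret rate equals $C_S(\bH_B,\bH_E,\obK)$.

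The hardest part will be that first step --- arguing that the choice $\bV_A=\ubV_A$ renders Eve's observation a bank of \emph{independent} scalar AWGN channels with the \emph{exact} prescribed SNRs $\{d_i^2\}$, so that an off-the-shelf scalar Gaussian wiretap codebook can be inserted in each layer with no modification. Granting that, secrecy is just additivity of mutual information over independent sub-channels, reliability is inherited unchanged from the secrecy-free Layered-SIC scheme, and the rate identity is bookkeeping through \eqref{eq:I_S_b_c}, \eqref{eq:corol:capacity:GSV:nullified_cov}, \thmref{thm:Bustin}, and the majorization bound.
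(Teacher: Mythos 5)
Your proposal is correct and follows essentially the same route as the paper's proof: use \lemref{rem:SVD:equivalance} to show Eve sees a bank of independent scalar AWGN channels of SNRs $\{d_i^2\}$, inherit Bob's reliability from the secrecy-free layered-SIC analysis, argue secrecy layer-by-layer over the parallel channels, and close the rate identity via \eqref{eq:I_S_b_c}, \eqref{eq:corol:capacity:GSV:nullified_cov}, and \thmref{thm:Bustin}. Two small remarks: (i) your secrecy step spells out the mutual-information additivity over the independent pairs $(W_i,\ty_{E;i}^n)$ explicitly, which is a cleaner statement of what the paper asserts by citing known scalar wiretap results, but it only establishes weak secrecy, whereas the paper additionally chooses $\tR_k=\log e_k^2+\eps$ (above channel resolvability) so that the same construction yields strong secrecy as well — not required by the theorem as stated, but worth noting; (ii) the $[\cdot]_+$ clipping in your rate bookkeeping is in fact vacuous, since for the optimal $\bK$ of \eqref{eq:K_B} all GSVs $\mu_i(\bH_B,\bH_E,\bK)$ are $\ge 1$, and additive majorization of $\{\log(b_i/e_i)\}$ by a nonnegative vector forces every $\log(b_i/e_i)\ge 0$, so $\sum_i[\log(b_i^2/e_i^2)]_+=\sum_i\log(b_i^2/e_i^2)=I_S(\bH_B,\bH_E,\bK)$ directly.
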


\begin{proof}
    The proof easily follows by noting that the resulting channel to Eve is diagonal, \ie, parallel scalar AWGN channels.
    Hence, by using independent (wiretap) Gaussian codes, secrecy is guaranteed over the parallel channels. By combining the result of \secref{ss:MIMO_P2P:SIC} for SIC for MIMO channels without secrecy, correct decoding at Bob's end is guaranteed.
    
    \textbf{Codebook construction:}
    $\Na$ Gaussian codebooks $\{\cC_k | k = 1, \ldots, \Na\}$ of length $n$ are generated independently, 
    as in \defnref{def:GaussianCodes}.
    Codebook $\cC_k$ contains $\left\lceil 2^{n R_k} \right\rceil \times \left\lceil 2^{n \tR_k} \right\rceil$ codewords. Each codeword within $\cC_k$ is assigned a unique index pair $\left( m_k, f_k \right)$, where $m_k \in \left\{ 1, \ldots, \left\lceil 2^{n R_k} \right\rceil \right\}$ and 
    \mbox{$f_k \in \left\{ 1, \ldots, \left\lceil 2^{n \tR_k} \right\rceil \right\}$}. 
    With a slight abuse of notation, we shall refer to such codes as wiretap Gaussian codes of rate-pairs 
    $\left\{ \left( R_k, \tR_k \right) \right\}$.
    
    Let $\eps > 0$. Then the rates are chosen as\footnote{To establish weak secrecy, $\tR_k$ can be relaxed to $\tR_k = \log e_k^2 - \eps$. The choice in \eqref{eq:SVD@Eve:proof:Rates:tRk} allows to establish strong secrecy, as is further explained in the sequel.}
    \begin{subequations}
    \label{eq:SVD@Eve:proof:Rates}
    \noeqref{eq:SVD@Eve:proof:Rates:Rk,eq:SVD@Eve:proof:Rates:tRk}
    \begin{align}
        R_k &= \log \frac{b_k^2}{1 + d_k^2} - 2 \eps = \log \frac{b_k^2}{e_k^2} - 2 \eps ,
    \label{eq:SVD@Eve:proof:Rates:Rk}
     \\ \tR_k &= \log (1 + d_k^2) + \eps = \log e_k^2 + \eps .
    \label{eq:SVD@Eve:proof:Rates:tRk}
    \end{align}
    \end{subequations}
    
    \textbf{Encoding (Alice):}
    Constructs $\Na$ codewords $\{\tx_k \in \cC_k | k = 1, \ldots, \Na \}$ as follows.
    $\tx_k$ is chosen from $\cC_k$ according to the sub-message $m_k$ intended to Bob and 
    a fictitious sub-message $f_k$ which is chosen uniformly at random.
    The transmitted signal at every time instant, $\bx$, is then constructed as in the layered-SIC scheme of \secref{ss:MIMO_P2P:SIC}.
    
    \textbf{Decoding (Bob):}
    Bob performs SIC decoding as in the layered-SIC scheme of \secref{ss:MIMO_P2P:SIC} to recover 
    $\{\left( m_k, f_k \right) \}$, and discards $\{f_k\}$.
    Since $R_k + \tR_k < \log b_k^2$ for every $k$, 
    the decoding error probability of Bob can be made arbitrarily small by taking a large enough $n$.
    
    \textbf{Secrecy analysis (Eve):}
    The resulting channel to Eve~\eqref{eq:SVD@Eve:SVDofHe} 
    (depicted also in \figref{fig:SVD@Eve:EveEquivChannel}) 
    is diagonal:
    \begin{align}
        \tby_E = \ubD_E \tbx + \tbz_E ,
    \end{align}
    where $\tbz_E$ is AWGN with zero mean and identity covariance matrix. That is, 
    the effective channel to Eve comprises independent AWGN channels.
    Over the resulting scalar AWGN channels, wiretap Gaussian codes are known to attain strong secrecy~\cite{GaussianStrongSecrecy}, where $\tR_k$ is chosen to be (slightly) above the \emph{channel resolvability}, \ie, $\tR_k = \log (1 + d_k^2) + \eps$ for $\eps > 0$.
    This is a stronger requirement, as opposed to the choice 
    $\tR_k = \log (1 + d_k^2) - \eps$ for $\eps > 0$, which facilitates an easier proof 
    of weak secrecy guarantees for this channel (see, \eg, \cite[Ch.~22]{ElGamalKimBook}).

    \textbf{Total rate:}
    By using \eqref{eq:I_S_b_c}, \eqref{eq:SVD@Eve:proof:Rates:Rk}, 
    the total rate is equal to 
    \begin{align}
        R &= \sum_{k=1}^\Na R_k 
     \\ &= \sum_{k=1}^\Na \left( \log \frac{b_k^2}{e_k^2} - 2 \eps \right)
     \\ &= I_S\left( \bH_B, \bH_E, \bK \right) - 2 \Na \eps . 
    \end{align}
    By choosing the optimal $\bK$, and taking a large enough $n$, 
    this rate can be made arbitrarily close to the secrecy capacity $C_S$ while guaranteeing both weak and strong secrecy.
\end{proof}

\begin{remark}
    In the proofs to follow, with a slight abuse of notation, we shall state the sizes of the codebook without explicitly using the ceiling operation $\lceil \cdot \rceil$, as its effect becomes negligible for large values of $n$.
\end{remark}

\begin{remark}
\label{rem:IndividualPowerConstraints}
    In the celebrated SVD-based scheme for MIMO channels of Telatar~\cite{Telatar99}, 
    the SVD is applied to the \emph{physical} channel matrix 
        $\bH = \bU \bD \bV_A^\dagger$.
    The transmitted signal is then formed according to \eqref{eq:tx}, where the non-unitary matrix $\bK^{1/2}$ (over the effective diagonal channel $\bD$) is diagonal, with entries set by the water-filling solution.
    Thus, the SVD plays two roles: it serves both for reducing the coding task to that of coding over scalar channels and for constructing the optimal input covariance matrix.
   
    In contrast, in \eqref{eq:SVD@Eve:SVDofHe} the SVD is applied to the \emph{effective} channel matrix $\bH_E \bK^{1/2}$, which already includes the non-unitary ``coloring'' part $\bK^{1/2}$. Thus, it is only used for reducing the coding task. This form is more general, in the sense that it allows for a choice of $\bK$ that is not related to a diagonal decomposition of the channel, \eg, subject to individual power constraints, 
    or where the target expression is different, \eg, an MI difference as in this work.
        Finally, note that the rate of \eqref{eq:I_S_b_c} can be achieved using the proposed scheme, even if $\bK$ is suboptimal (when exact calculation of the optimal $\bK$ is hard).
\end{remark}


\subsection{General Multi-Stream Scheme}
\label{s:GeneralScheme}

We next 
show that, in fact, secrecy capacity can be achieved using the layered-SIC scheme and scalar wiretap codes 
for any choice $\bV_A$, and by this generalizing the result of \secref{s:SVD@Eve} to transmission that is not necessarily orthogonal over Eve's channel.
Specifically, 
we show that the secrecy capacity can be achieved using 
any joint triangularization of the effective MMSE channel matrices \eqref{eq:EffMatGSVD} (\emph{any} unitary matrix $\bV_A$ at the encoder). In the general case, 
Eve's resulting matrix is triangular and hence denoted by $\bT_E$, as in \eqref{eq:EffMatGSVD:Charlie}.
The diagonal values of $\bT_E$ are denoted by $\{e_i\}$.
The resulting family of schemes includes two important special cases, discussed in \secref{ss:SpecialCasesOfTriangularizations}, in addition to the one introduced in \secref{s:SVD@Eve}.

\begin{thm}
\label{thm:MIMO_WTC:SIC}
  The layered-SIC scheme of \secref{ss:MIMO_P2P:SIC} achieves the secrecy capacity under a covariance constraint 
    $C_S \left( \bH_B, \bH_E \right. \bK)$ 
    by using:
    \begin{itemize}
    \item 
        The optimal input covariance matrix $\bK$ of \eqref{eq:K_B}. 
    \item
        \emph{Any} joint unitary triangularization \eqref{eq:EffMatGSVD}.
    \item
        Scalar Gaussian capacity-achieving wiretap codes 
        that are designed for the Bob--Eve SNR-pairs $\left\{ \left( b_i^2-1, e_i^2-1 \right) \right\}$, 
        where $\{b_i\}$ and $\{e_i\}$ are defined as in \secref{s:CapacityRevisited}.
    \end{itemize}
\end{thm}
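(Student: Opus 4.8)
The plan is to reduce the general joint-triangularization scheme to the orthogonalized scheme of \thmref{thm:SVD@Eve} by showing that the triangular (non-orthogonal) structure of Eve's effective channel $\bT_E$ does not help her, while simultaneously verifying that Bob can still decode. First I would set up the transmission exactly as in the layered-SIC scheme of \secref{ss:MIMO_P2P:SIC}: with $\bx = \bK^{1/2} \bV_A \tbx$, where $\tbx$ stacks one sample from each of the $\Na$ scalar wiretap codebooks, and $\bK$ is the optimal covariance \eqref{eq:K_B}. Bob's analysis is immediate from \secref{ss:MIMO_P2P:SIC}: after applying $\tbU_B^\dagger$ and performing SIC from layer $\Na$ down to layer $1$, layer $i$ sees an effective scalar AWGN channel with $\SINR_{B;i} = b_i^2 - 1$, so as long as $R_i + \tR_i < \log b_i^2$ Bob decodes with vanishing error probability.

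The crux is the secrecy analysis at Eve. Unlike in \secref{s:SVD@Eve}, Eve's received vector $\by_E = \bH_E \bx + \bz_E$ does not decompose into independent parallel channels — after the natural rotation $\bU_E^\dagger$ it becomes $\bT_E \tbx + \tbz_E$ with $\bT_E$ generalized upper triangular and $\{e_i\}$ on its diagonal. The key observation is that Eve is free to do whatever she likes with her observation, so we must upper bound $\MI{m_1,\ldots,m_\Na}{\by_E^n}$ for \emph{any} strategy. The standard move is a genie argument: reveal to Eve the fictitious messages/signals of all ``higher'' layers $\{\tx_\ell \mid \ell > i\}$ when analyzing layer $i$, or equivalently analyze the layers from top ($i=\Na$) down, giving Eve as side information all previously-considered codewords. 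Because $\bT_E$ is upper triangular, subtracting the contribution of layers $\ell > i$ from row $i$ of $\bT_E \tbx$ leaves $e_i \tx_i$ plus a combination of lower layers $\{\tx_\ell \mid \ell < i\}$ plus noise; treating those lower layers as additional noise can only help Eve, so her effective channel to $\tx_i$ is \emph{degraded} with respect to an AWGN channel of SNR $e_i^2 - 1$. Since the scalar wiretap codebook in layer $i$ is designed for the pair $(b_i^2 - 1, e_i^2 - 1)$ with the fictitious rate $\tR_i$ chosen (slightly) above the channel resolvability $\log e_i^2$ of that AWGN channel, the per-layer wiretap guarantee applies, and a chain-rule/telescoping argument over the $\Na$ layers assembles these into a bound on the total leakage — giving weak secrecy, and strong secrecy when the resolvability-based choice of $\tR_i$ is used exactly as in \thmref{thm:SVD@Eve}.

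I would then account for the total rate: by \eqref{eq:I_S_b_c}, $\sum_i R_i = \sum_i \left(\log \frac{b_i^2}{e_i^2} - 2\eps\right) = I_S(\bH_B,\bH_E,\bK) - 2\Na\eps$, which for the optimal $\bK$ of \eqref{eq:K_B} approaches $C_S(\bH_B,\bH_E,\obK)$ by \thmref{thm:Bustin}. One subtlety worth flagging: for the optimal $\bK$ we have $b_i \geq e_i$ for all $i$ (Remark~\ref{rem:GSVD:truncation}), so every per-layer rate $R_i = \log(b_i^2/e_i^2) - 2\eps$ is eventually nonnegative, and for the ``null'' layers ($i > \Lb$) one simply sends nothing; this needs to be stated cleanly so the codebook construction is well-posed.

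The main obstacle I expect is making the genie/degradedness argument for Eve fully rigorous when $\bT_E$ is only triangular rather than diagonal: one must be careful that revealing higher layers and then treating lower layers as noise yields a channel that is \emph{stochastically degraded} (not merely with smaller SNR in some loose sense) with respect to the AWGN channel the wiretap code was designed for, so that the secrecy property of the scalar code transfers. This is essentially a ``one-sided'' version of the parallel-channel argument, and the triangular structure is exactly what makes it go through — but the bookkeeping of conditioning and the interface with the resolvability-based strong-secrecy statement of \cite{GaussianStrongSecrecy} is where the real work lies. Everything else (Bob's decoding, the rate computation) is a direct invocation of \secref{ss:MIMO_P2P:SIC} and \secref{s:CapacityRevisited}.
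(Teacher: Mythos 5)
Your genie idea is the right one in spirit, but you have missed the mechanism the paper actually uses, and two of your claims do not hold as stated.

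The paper's route is \propref{thm:superposition}, a general superposition-coding secrecy result for memoryless wiretap channels (proved in \appref{app:SIC_optimality}), which is then specialized to the linear Gaussian encoder $\bx = \bK^{1/2}\bV_A\tbx$. That proposition delivers per-layer rates $R_k = \CMI{\rvx_k}{\rvy_B}{\rvx_{k+1}^\Na} - \CMI{\rvx_k}{\rvy_E}{\rvx_{k+1}^\Na} - \eps$, and the Gaussian evaluation of these conditional mutual informations gives $\log(b_k^2/e_k^2)$ exactly by \eqref{eq:Bob:SINRs_rates}. Your proposal tries instead to invoke a ``per-layer wiretap guarantee'' of a fixed scalar wiretap code designed for an AWGN channel of SNR $e_i^2-1$, arguing that Eve's genie-aided channel is ``stochastically degraded'' with respect to that AWGN channel. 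This step is where the gap is. After revealing layers $\ell > i$ to Eve, her residual observation has SINR exactly $e_i^2-1$ (not smaller), and the ``noise'' is the actual superposition of the lower-layer codewords plus AWGN --- it is not an i.i.d.\ Gaussian noise, and degradedness with respect to an AWGN channel of the same SNR does not hold for a fixed non-Gaussian codebook. The secrecy guarantee of a generic scalar wiretap code on an AWGN channel therefore does not transfer automatically. The paper handles this by carrying out the full random-coding equivocation analysis over the Gaussian ensemble (\propref{thm:superposition} and its proof via typicality and a lemma of Chia--El~Gamal), and \secref{s:discussion} explicitly flags that this is exactly the point where replacing Gaussian ensembles by arbitrary scalar codes breaks down.

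Your second overclaim is strong secrecy. For a general $\bV_A$, \thmref{thm:MIMO_WTC:SIC} and \propref{thm:superposition} only yield \emph{weak} secrecy (with $\tR_k = \log e_k^2 - \eps$, \eqref{eq:Superposition:tRk}); the resolvability-based choice $\tR_k = \log e_k^2 + \eps$ and the appeal to \cite{GaussianStrongSecrecy} are reserved for \thmref{thm:SVD@Eve}, where the choice of $\bV_A$ orthogonalizes Eve's channel so that her effective noises are genuinely independent AWGNs. In the non-orthogonal case, Eve's effective noise in layer $i$ contains structured interference from the other layers, so the resolvability argument cannot be invoked as-is, and the paper does not claim strong secrecy there. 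Finally, note that the paper does not prove \thmref{thm:MIMO_WTC:SIC} by reduction to \thmref{thm:SVD@Eve} as you propose; rather, \thmref{thm:SVD@Eve} is the special case, and \thmref{thm:MIMO_WTC:SIC} is proved directly from the more general \propref{thm:superposition}.
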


We use the following result, proved in \appref{app:SIC_optimality}, for the proof of this theorem, 
which extends beyond the Gaussian wiretap setting, for both the discrete and the continuous cases.

\begin{prop}
\label{thm:superposition}
    Let $p(y_B|x)$ and $p(y_E|x)$ be the transition distributions for the legitimate user (``Bob'') and the 
    eavesdropper (``Eve''), respectively, of a memoryless wiretap channel, 
    where $x$ is the transmitted signal, and $y_B$ and $y_E$ are the channel outputs to Bob and Eve, respectively.
    Let a superposition coding scheme be defined by codes 
    $\left\{ \tx_i : i = 1, \ldots, \Na \right\}$ 
    and a scalar function $\vphi$ 
    such that 
    \begin{align}
    \label{eq:WTC_superoposition_proof:x=phi_oftilde_x}
    x = \vphi \left( \tx_1, \ldots, \tx_\Na \right) .
    \end{align}
    Then, for $\eps > 0$, however small, and for any joint distribution $p(\rvx_1, \ldots, \rvx_\Na)$,
    there exists a scheme which achieves weak secrecy, with the $k$-th codebook conveying a rate: 
    \begin{align}
    \label{eq:Superposition:Rk}
        R_k &= I(\rvx_k; \rvy_B | \rvx_{k+1}^\Na) - I(\rvx_k; \rvy_E | \rvx_{k+1}^\Na) - \eps .
    \end{align}
\end{prop}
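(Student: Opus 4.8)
\textbf{Proof plan for Proposition~\ref{thm:superposition}.}
The plan is to establish the claim by induction on the number of layers, peeling off one codebook at a time and reducing the $\Na$-layer superposition scheme to an $(\Na-1)$-layer one over an enlarged ``super-symbol'' alphabet. The base case $\Na = 1$ is the classical scalar wiretap coding theorem of Wyner--Csisz\'ar--K\"orner: for the channel with input $\rvx_1$ and outputs $\rvy_B, \rvy_E$, a wiretap code with rate $R_1 = I(\rvx_1;\rvy_B) - I(\rvx_1;\rvy_E) - \eps$ and fictitious-message rate $\tR_1 = I(\rvx_1;\rvy_E) + \eps/2$ (or slightly below, for weak secrecy) exists, achieving $\MI{\rvbx_1^n}{\rvby_E^n} \le n\eps$ and vanishing error probability at Bob. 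I would cite this (e.g.~\cite[Ch.~22]{ElGamalKimBook}).

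For the inductive step, I would condition on the top layer. Fix the realization of the innermost codeword $\tx_\Na^n$; given it, the remaining layers $\tx_1,\ldots,\tx_{\Na-1}$ see a memoryless wiretap channel whose transition laws are $p(y_B \mid \vphi(\cdot), \tx_\Na)$ and $p(y_E \mid \vphi(\cdot), \tx_\Na)$, i.e.\ with $\tx_\Na$ as side information known to the encoder. By the induction hypothesis applied to this ``inner'' wiretap channel (with $\tx_\Na$ absorbed into a fixed context), there is a scheme in which codebook $k$, for $k = 1,\ldots,\Na-1$, carries rate $R_k = I(\rvx_k;\rvy_B \mid \rvx_{k+1}^{\Na}) - I(\rvx_k;\rvy_E \mid \rvx_{k+1}^{\Na}) - \eps$, secure against Eve who additionally knows nothing of $\tx_\Na$. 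It then remains to handle the outermost layer: treat $\tx_\Na$ as the input to a scalar wiretap channel whose ``output'' to Bob is the pair $(\rvy_B, \rvx_1^{\Na-1})$ — legitimate, since Bob decodes the inner layers via SIC and hence (asymptotically, for large $n$) knows $\tx_1^{\Na-1}$ — and whose output to Eve is just $\rvy_E$. The base case gives a wiretap code for layer $\Na$ of rate $R_\Na = I(\rvx_\Na;\rvy_B,\rvx_1^{\Na-1}) - I(\rvx_\Na;\rvy_E) - \eps = I(\rvx_\Na;\rvy_B) - I(\rvx_\Na;\rvy_E) - \eps$, using $\CMI{\rvx_\Na}{\rvx_1^{\Na-1}}{\rvy_B} \ge 0$ is not needed — rather one uses that $\rvx_\Na$ is independent of nothing in particular; the conditioning set $\rvx_{\Na+1}^\Na$ is empty so the formula~\eqref{eq:Superposition:Rk} reads $R_\Na = I(\rvx_\Na;\rvy_B) - I(\rvx_\Na;\rvy_E) - \eps$, matching.

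The two quantitative facts to nail down are: (i) the secrecy of the combined scheme, and (ii) Bob's overall error probability. For (i), I would bound $\MI{\tbx_1^{\Na,n}}{\rvby_E^n}$ by the chain rule over layers, showing each layer contributes at most $\eps'$ to the leakage (here the layered structure of the fictitious randomness is essential: each layer's fictitious message swamps Eve's residual uncertainty at that layer), so the total is at most $\Na \eps'$, which is absorbed by rescaling $\eps$. For (ii), a union bound over the $\Na$ SIC stages, each with vanishing error by the achievability in the induction hypothesis and base case, suffices; the standard caveat is that an error in an inner stage propagates, but since each individual error probability is $o(1)$ and there are finitely many ($\Na$) stages, the union bound still gives $o(1)$. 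The main obstacle I anticipate is making the conditioning in the inductive step fully rigorous — specifically, arguing that conditioning on a \emph{typical} realization of $\tx_\Na^n$ (rather than averaging) preserves both the rate and the secrecy guarantee for the inner channel, and that the ``context'' seen by the inner code is genuinely memoryless. This is handled by noting that $\tx_\Na^n$ is i.i.d.\ and independent of the inner codebooks' messages, so averaging over it is harmless, and memorylessness of $p(y_B\mid x)$, $p(y_E\mid x)$ together with~\eqref{eq:WTC_superoposition_proof:x=phi_oftilde_x} being a per-symbol (memoryless) map guarantees the induced inner channel is memoryless as well.
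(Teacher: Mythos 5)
Your plan takes a genuinely different route (induction on layers) from the paper's direct per-layer genie-aided analysis, but it contains a real error in the treatment of the outermost layer. In the layered scheme implied by the rate formula $R_k = I(\rvx_k; \rvy_B | \rvx_{k+1}^\Na) - I(\rvx_k; \rvy_E | \rvx_{k+1}^\Na) - \eps$, Bob's successive decoding proceeds from $k=\Na$ \emph{down} to $k=1$: codebook $\Na$ is decoded first, treating all inner layers as noise, so Bob does \emph{not} know $\tx_1^{\Na-1}$ when decoding $\tx_\Na$. Your claim that Bob's effective output for layer $\Na$ is $(\rvy_B, \rvx_1^{\Na-1})$ reverses the SIC order, and the equality $I(\rvx_\Na;\rvy_B,\rvx_1^{\Na-1}) = I(\rvx_\Na;\rvy_B)$ that you then assert is false in general — the gap is $I(\rvx_\Na;\rvx_1^{\Na-1}\,|\,\rvy_B) \ge 0$, which is strictly positive precisely when the inner layers interfere at Bob. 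The sentence you use to dismiss this (``is not needed — rather one uses that $\rvx_\Na$ is independent of nothing in particular'') is not an argument. The correct treatment is simply that Bob's channel for layer $\Na$ has output $\rvy_B$ alone; that already gives $R_\Na = I(\rvx_\Na;\rvy_B) - I(\rvx_\Na;\rvy_E) - \eps$, but not by the chain you wrote.

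Second, the obstacle you flag at the end — passing from a typical realization of $\tx_\Na^n$ to a single inner code achieving the averaged conditional rates — is not a detail you can defer; it is the substance of the proof. The paper does not induct: it builds all $\Na$ nested codebooks at once and, for the secrecy of each $\rvm_k$, hands Eve the higher-layer codewords $\tbx_{k+1}^\Na$ as a genie. This reduces the claim to controlling the conditional entropy $H(\tbx_1^k\,|\,\tbx_{k+1}^\Na, \by_E, \cC)$ from \emph{both} sides: the lower bound is standard typicality, while the matching upper bound is a resolvability-type estimate invoking \cite[Lemma~1]{ChiaElGamal2012_3RxBCwithCommonAndConfidential}. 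That upper bound has no analogue in your sketch, and it is what lets the telescoping argument close. Also note that the quantity that must be shown small is $I(\rvm_1^\Na;\by_E^n\,|\,\cC)$, not $I(\tbx_1^\Na;\by_E^n\,|\,\cC)$ as you write — the latter is large (Eve learns a great deal about the codewords; the fictitious randomness, not opacity of $\tbx$, is what protects the messages). If you wish to rescue the inductive route, you would need to (a) fix the SIC direction and the layer-$\Na$ rate derivation, (b) strengthen the inductive hypothesis to a wiretap channel with a state $\rvx_\Na$ known to encoder and legitimate decoder but not to Eve (so the averaging is built in), and (c) supply the missing upper bound on the per-layer equivocation.
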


\begin{remark}
    The secrecy-proof of this result uses a ``genie-aided'' argument: 
    in the mutual information of the $k$-th codeword recovered by Eve, we provide all previous codewords 
    $\{ \tbx_\ell | \, \ell= k+1, \ldots, \Na \}$ as ``genie'', even though Eve cannot recover these messages.
    Bob, on the other hand, uses successive decoding to recover the messages. 
    Thus, the allocation of rates $\{ R_k \}$ in \eqref{eq:Superposition:Rk} guarantees that all the messages $(m_1,.. m_{\Na})$
    remain jointly secured from the eavesdropper's channel output sequence.
\end{remark}

\begin{proof}[Proof of \thmref{thm:MIMO_WTC:SIC}]
  We specialize the general superposition coding framework of \propref{thm:superposition} to the linear encoder structure and independent Gaussian distributions of $\left( \rvx_1, \ldots, \rvx_\Na  \right)$.
  Use
  \begin{align}
      \bx &= \vphi \left( \tx_1, \ldots, \tx_\Na \right) 
     \\ &= \bK^{1/2} \bV_A \tbx \,,
  \end{align}
  in \eqref{eq:WTC_superoposition_proof:x=phi_oftilde_x}, 
  where the vector $\tbx$ is composed of one symbol from each codebook: $\tbx = (\tx_1, \ldots, \tx_k)^T$.\footnote{Here, in contrast to \appref{app:SIC_optimality}, boldface letters represent spatial vectors and time indices are suppressed.}
  
  Each codebook is a scalar Gaussian wiretap codebook of average unit power.
  The achievable secrecy rate of codebook $k = 1, \ldots, \Na$ is given by \eqref{eq:Superposition:Rk}:
  \begin{subequations}
  \label{eq:GeneralSchemeProof}
  \noeqref{eq:GeneralSchemeProof:a,eq:GeneralSchemeProof:c,eq:GeneralSchemeProof:Rk_diag_values_ratio_explicit}
  \begin{align}
      R_k &= \CMI{\rvx_k}{\rvy_B}{\rvx_{k+1}^\Na} - \CMI{\rvx_k}{\rvy_E}{\rvx_{k+1}^\Na} - \eps
  \label{eq:GeneralSchemeProof:a}
   \\     &= \MI{\rvx_k}{\rvy_{B;k}'} - \MI{\rvx_k}{\rvy_{E;k}'} - \eps
  \label{eq:GeneralSchemeProof:c}
   \\     &= \log \left( b_k^2 \right) - \log \left( e_k^2 \right) - \eps
  \label{eq:GeneralSchemeProof:Rk_diag_values_ratio}
    \\    &= \log \frac{b_k^2}{e_k^2} - \eps
      \,,
  \label{eq:GeneralSchemeProof:Rk_diag_values_ratio_explicit}
  \end{align}
  \end{subequations}
  where \eqref{eq:GeneralSchemeProof:Rk_diag_values_ratio} and \eqref{eq:GeneralSchemeProof:c} are due to \eqref{eq:Bob:SINRs_rates:MI} and \eqref{eq:Bob:SINRs_rates:SNR}, respectively.
  Thus, using the result of \eqref{eq:I_S_b_c}, we can achieve 
  \begin{align} 
    R &= \sum_{k=1}^N R_k 
   \\ &= \sum_{k=1}^N \left[ \log\frac{b_k^2}{e_k^2} \right]_+ - \eps
   \\ &= I_S\left( \bH_B, \bH_E, \bK \right), 
  \end{align}
  and for the optimal covariance matrix $\bK$ the scheme approaches the secrecy capacity.
\end{proof}

\subsection{Important Special Cases}
\label{ss:SpecialCasesOfTriangularizations}

We now present ``special'' choices of $\bV_A$ which provide various advantages.

\subsubsection{Orthogonalizing Eve's channel}
    The scheme of \secref{s:SVD@Eve} is a special case of proposed scheme in this subsection, since, 
    as explained in \lemref{rem:SVD:equivalance}, the unitary matrix $\bV_A$ of the SVD 
    of $\bH_E \bK^{1/2}$ is identical to that of the SVD of $\bG_E$ \eqref{eq:EffMatGSVD:Charlie}.

\subsubsection{Orthogonalizing Bob's channel~--- Avoiding SIC}
Performing SIC adds complexity to the decoder, as well as introduces potential error propagation. We can avoid this by performing SVD with respect to Bob's channel, as opposed to Eve's channel, as done in \secref{s:SVD@Eve}. That is, choose $\bV_A$ such that 
\begin{align}
  \bG_B = \bU_B \bD_B \bV_A^\dagger ,
\end{align}
where $\bD_B$ is diagonal.
As happens with Eve in \secref{s:SVD@Eve}, 
Bob obtains a diagonal equivalent channel, 
where each sub-stream can be decoded independently.

\subsubsection{Avoiding individual bit-loading} 
When using (non-secret) communication schemes based on SVD or QR, as in the layered-SIC scheme, 
the effective sub-channel gains $\{b_i\}$ are different in general.
This requires, in turn, a bit-loading mechanism and the design of codes of different rates 
matching these gains.
By using the GMD, described in \secref{ss:GTD}, instead, a constant diagonal is achieved, 
which translates into equal SNRs for all parallel channels.
This suggests, in turn, that bit-loading can be avoided altogether and that 
the codewords sent over the resulting sub-channels can be drawn from the same codebook.

A similar result can be achieved for the wiretap setting. 
To this end we require the usage of a modular scheme that 
transforms good AWGN codes of a rate close to $\log(b^2)$ for Bob 
into wiretap codes of rates close to $\{\log(b^2/e_i^2)\}$.
This way, after applying the GMD to $\bG_B$, the same AWGN codebook can be used over all sub-channels, where for each sub-channel a different transformation into a wiretap code is used, that depends on its effective SNR to Eve $(e_i^2-1)$.
Indeed, such a modular approach exists; see \secref{s:discussion}.

\begin{remark}
    It is possible to use the same wiretap code without assuming the modular wiretap code construction, by using a joint matrix decomposition that achieves constant diagonals for both triangular matrices simultaneously.
    A construction that essentially achieves this property was proposed in \cite{JET:SeveralUsers2015:FullPaper}.
\end{remark}


\section{Dirty-Paper Coding Based Schemes}
\label{s:DPC}

In this section we construct the DPC counterparts of the layered-SIC scheme for Gaussian MIMO channels with and without secrecy constraints.
In these variants the successive decoding process of the scalar codes is replaced with a successive encoding one; consequently, all (scalar) codebooks can be recovered in parallel and independently of each other.
The latter makes these variants useful for more complex settings, such as the confidential MIMO broadcast setting treated in \secref{s:BC}.
We start by presenting the DPC-based schemes without secrecy constraints, in \secref{ss:DPC:P2P}.
We then construct a variant for the MIMO wiretap setting, in \secref{ss:DPC:WTC}, which again achieves the secrecy capacity of the channel.


\subsection{Without Secrecy Constraints}
\label{ss:DPC:P2P}

We now briefly review the DPC variant of the layered-SIC scheme, which is based in turn on \cite{GinisCioffiAsilomar,CaireShamai03} (see also \cite{UCD}).

\break
\begin{scheme*}[Layered-DPC]
\ 

  \textbf{Offline:}
  \begin{itemize}
  \item
      Select an admissible $\Na \times \Na$ input covariance matrix $\bK$ that satisfies the input constraint.
  \item
      Construct the effective MMSE matrix~\eqref{eq:G_B}: $\bG_B = \bG(\bH_B,\bK)$.
  \item
      Select a unitary triangularization~\eqref{eq:GTD} and apply it to the matrix $\bG_B$, 
      as in \eqref{eq:P2P:GTD}, to obtain the unitary matrices $\bU_B$ and $\bV_A$, and 
      the generalized upper-triangular matrix $\bT_B$.
  \item
      Denote the $\Na$ diagonal elements of $\bT_B$ by $\{b_i\}$.
  \item
      Denote by $\tbU_B$ the $\Nb \times \Na$ upper-left sub-matrix of $\bU_B$, 
      and
      construct the
     corresponding matrix $\tbT_B$ according to \eqref{eq:tbT}: $\tbT_B = \tbU_B^\dagger \bH_B \bK^{1/2} \bV_A$.
  \item
      Construct $\Na$ scalar \emph{dirty-paper codes}~\cite{Costa83} of length $n$~--- codes generated via
      random binning with respect to i.i.d.\ Gaussian distributions. 
    Codebook $i$ ($1 \leq i \leq \Na$) is constructed for a channel with AWGN of unit power, 
    SNR $(b_i^2-1)$, interference [recall \eqref{eq:bT_tbT_relation}]
    \begin{align}
    \label{eq:MIMOP2P:SI}
        \sum_{\ell = i+1}^\Na T_{B;i,\ell} \tx_\ell
    \end{align}
    which is available as side information at the transmitter, and rate $R_i$ close to $\log(b_i^2)$ [recall \eqref{eq:SIC_scheme:rates}].
  \end{itemize}

  \textbf{Alice:}
  At each time instant $t = 1, \ldots, n$:
  \begin{itemize}
  \item
    Generates $\tx_i$ from last ($i=\Na$) to first ($i=1$), where $\tx_i$ is generated according to the message to be conveyed and the interference \eqref{eq:MIMOP2P:SI}.
  \item 
    Forms $\tbx$ with entries $\{\tx_i\}$.
  \item
    Attains the vector $\bx$ by multiplying $\tbx$ by $\bV_A$ and $\bK^{1/2}$ as in
    \eqref{eq:scheme:SIC_scheme:x}.
  \item
    Transmits $\bx$.
  \end{itemize}

  \textbf{Bob:}
  
    \begin{itemize}
    \item
      At each time instant $t = 1, \ldots, n$, receives $\by_B$ and 
      forms $\tby_B$ according to \eqref{eq:rx}:
        \begin{align}
            \tby_B &= \tbU_B^\dagger \by_B 
         \\ &= \tbT_B \tbx + \tbz_B .
        \end{align}
    \item
      Decodes the codebooks using dirty-paper decoders, where $\tx_i$ is decoded from $\ty_{B;i}$.
  \end{itemize} 
\end{scheme*}
\vspace{.5\baselineskip}

By using good dirty-paper codes, capacity is achieved; see, \eg, \cite{UCD}.

We further note that codeword $\tx_i$ is recovered from $\ty_{B;i}$ regardless of whether the other codewords $\{\tx_j| j \neq i\}$ were recovered or not.


\subsection{MIMO Wiretap Channel}
\label{ss:DPC:WTC}

By replacing the dirty-paper scalar codes in the layered-DPC scheme of \ref{ss:DPC:P2P}
with scalar dirty-paper wiretap codes \cite{WTC_DPC,WTC_GP}, 
a scheme that approaches the MIMO wiretap secrecy capacity can be constructed.

\begin{thm}
\label{thm:MIMO_WTC:DPC}
  The layered-DPC scheme of \secref{ss:DPC:P2P} achieves the secrecy capacity under a covariance constraint 
    $C_S \left( \bH_B, \bH_E \right. \obK)$ 
    by using:
    \begin{itemize}
    \item 
        The optimal input covariance matrix $\bK$ of \eqref{eq:K_B}. 
    \item
        \emph{Any} joint unitary triangularization \eqref{eq:EffMatGSVD}.
    \item
        Scalar Gaussian dirty-paper wiretap codes, where the \mbox{$i$-th} codebook ($i = 1, \ldots, \Na$) is designed for
        \begin{itemize}\addtolength\itemsep{.4\baselineskip}
        \item
            Bob's SNR of $(b_i^2-1)$ and interference signal $\sum_{\ell = i+1}^\Na T_{B;i,\ell} \tx_\ell$.
        \item
            Eve's SNR of $(e_i^2-1)$. 
          \item
            Rate close to $R_i = \log ( b_i^2 / e_i^2 )$.
        \end{itemize}
    \end{itemize}
\end{thm}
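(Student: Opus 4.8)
The plan is to follow the proof of \thmref{thm:MIMO_WTC:SIC} essentially verbatim, replacing the successive-\emph{decoding} architecture by a successive-\emph{encoding} (dirty-paper) one; accordingly, the first step is to establish a Gelfand--Pinsker counterpart of \propref{thm:superposition}. In it, codebook $\tx_i$ is generated by binning against the side information formed by the already-generated layers $\tx_{i+1},\ldots,\tx_\Na$ together with AWGN, while the transmitted signal is $\bx = \vphi(\tx_1,\ldots,\tx_\Na)$ as before; Bob now decodes the layers \emph{in parallel} rather than successively, so there is no error propagation. Running the same ``genie-aided'' argument as in \appref{app:SIC_optimality}---in which, for the secrecy analysis of message $k$, Eve is handed the later layers $\{\tbx_\ell \mid \ell>k\}$ as a genie while Bob does \emph{not} receive this side information---then shows that layer $k$ can convey, with weak secrecy, any rate below $\MI{\rvx_k}{\rvy'_{B;k}} - \MI{\rvx_k}{\rvy'_{E;k}}$, where $\rvy'_{B;k}$ is the effective scalar channel seen by Bob's dirty-paper decoder and $\rvy'_{E;k}$ is the genie-aided effective channel that Eve can form only because she is given the later layers. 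The scalar building block---a Gaussian dirty-paper wiretap code that simultaneously pre-cancels the transmitter-known interference on the legitimate link and hides the message from a degraded AWGN observation---is supplied by \cite{WTC_DPC,WTC_GP}, so the remaining work is entirely in the layering.

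The second step is to specialize this framework to the linear Gaussian structure of the layered-DPC scheme, mirroring the proof of \thmref{thm:MIMO_WTC:SIC}. With $\bx = \bK^{1/2}\bV_A\tbx$ and $\tbx$ carrying one unit-power Gaussian symbol per layer, the interference available to Alice when she forms $\tx_i$ is $\sum_{\ell=i+1}^\Na T_{B;i,\ell}\tx_\ell$ [see \eqref{eq:MIMOP2P:SI}], which is precisely the term that the $i$-th dirty-paper code is built to absorb. By \lemref{lem:bT_tbT_relation} and the SINR identities \eqref{eq:Bob:SINRs_rates}, Bob's effective post-DPC sub-channel for layer $i$ has SNR $b_i^2-1$, so $\MI{\rvx_i}{\rvy'_{B;i}} = \log b_i^2$, and the identical computation applied to Eve's triangularization $\bG_E = \bU_E\bT_E\bV_A^\dagger$ gives $\MI{\rvx_i}{\rvy'_{E;i}} = \log e_i^2$; hence layer $i$ attains $R_i = \log(b_i^2/e_i^2) - \eps$. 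That the per-layer rates are simultaneously achievable at Bob's (parallel) decoders is inherited from the non-secret layered-DPC scheme of \secref{ss:DPC:P2P} (the chain of Costa arguments from $i = \Na$ down to $i = 1$). Summing through \eqref{eq:I_S_b_c} then yields $R = \sum_i[\log(b_i^2/e_i^2)]_+ = I_S(\bH_B,\bH_E,\bK)$, which equals $C_S(\bH_B,\bH_E,\obK)$ for the optimal $\bK$ of \eqref{eq:K_B}.

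The step I expect to be the main obstacle is the secrecy analysis: verifying that the genie-aided argument still yields \emph{joint} secrecy of $(m_1,\ldots,m_\Na)$ now that the layers are statistically coupled through the Gelfand--Pinsker binning rather than being mutually independent as in the SIC scheme. The chain-rule bound $\MI{m_1^\Na}{\by_E^n} = \sum_k \CMI{m_k}{\by_E^n}{m_{k+1}^\Na} \le \sum_k \CMI{m_k}{\by_E^n,\tbx_{k+1}^\Na}{m_{k+1}^\Na}$ still holds because the messages are independent, and conditioned on the genie $(m_{k+1}^\Na,\tbx_{k+1}^\Na)$ Eve's observation of layer $k$ reduces to a memoryless AWGN wiretap channel of the claimed SNR $e_k^2-1$. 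The delicate point is that it is the \emph{message} index carried in codebook $\cC_k$---and not the codeword $\tx_k$, which Eve may partially learn---that must be shown (nearly) independent of Eve's observation, which requires the standard equivocation bound for Gelfand--Pinsker wiretap codes to be carried out on top of the auxiliary-codebook covering step. I expect this to go through with the same two-stage random-coding argument as in \cite{WTC_DPC} (one binning for the interference, a further binning for Eve), executed per layer and combined, with the fictitious rate of layer $k$ taken slightly above Eve's effective mutual information $\log e_k^2$, in the spirit of \eqref{eq:SVD@Eve:proof:Rates}.
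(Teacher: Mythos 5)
Your proposal follows essentially the same route as the paper: extend \propref{thm:superposition} and \thmref{thm:MIMO_WTC:SIC} to a successive-\emph{encoding} (Gelfand--Pinsker) architecture, generate each layer's auxiliary codebook via binning against the later layers, let Bob decode each layer independently, and perform the secrecy analysis by a genie-aided chain-rule decomposition in which Eve is handed $\tbx_{k+1}^\Na$ when message $k$ is analyzed. The one simplification the paper uses that you do not explicitly invoke is the observation that the Costa auxiliary $\rvu_k = \tT_{B;k,k}\rvx_k + \alpha_k \sum_{\ell>k}\tT_{B;k,\ell}\rvx_\ell$ is a deterministic linear, hence invertible, function of $(\rvx_k,\ldots,\rvx_\Na)$, so that $\{\rvu_\ell\}_{\ell>k}$ and $\{\rvx_\ell\}_{\ell>k}$ carry exactly the same information; this lets the paper hand $\rvu_{k+1}^\Na$ to Eve as genie and then reduce \emph{directly} to the already-proved equivocation bounds of \appref{app:SIC_optimality}, rather than re-running the two-stage (cover-then-bin) Gelfand--Pinsker equivocation analysis of \cite{WTC_DPC} per layer as you propose. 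Your route should also work, and your identification of the delicate point (joint secrecy of $(m_1,\ldots,m_\Na)$ under layer coupling) is correct; one minor slip is that you take the fictitious rate $\tR_k$ ``slightly above'' $\log e_k^2$, which is the strong-secrecy choice from \eqref{eq:SVD@Eve:proof:Rates:tRk}, whereas for the weak-secrecy claim of this theorem under an arbitrary triangularization the paper sets $\tR_k = \log e_k^2 - \eps$, matching \eqref{eq:Superposition:tRk}.
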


We next prove the existence of such codes and consequently also the result of \thmref{thm:MIMO_WTC:DPC}.

\begin{proof}
    The proof follows by a standard extension of the proof of \thmref{thm:MIMO_WTC:SIC} 
    to the dirty-paper case~\cite{Costa83,WTC_DPC,WTC_GP}.
    
    \textbf{Codebook construction:}
    For each $k=1, \dots, \Na$, we generate a codebook $\cC_k$ of $2^{n (R_k + \tR_k)}$ sub-codebooks, where $n$ is length of the codewords.
    Each such sub-codebook is assigned a unique index pair $(\rvm_k, \rvf_k)$, 
    where
    $\rvm_k \in \{ 1,2,\ldots, 2^{n R_k} \}$ and 
    $\rvf_k \in \{ 1,2,\ldots, 2^{n \tR_k} \}$, 
    and contains $2^{n [ R^U_k - (R_k + \tR_k) ]}$ codewords.
    Each codeword within codebook $k$ is generated independently in an i.i.d.\ manner with respect to a Gaussian distribution $p(\rvu_k)$ with parameters dictated by     
    \begin{subequations}
    \label{eq:GP:R_U}
    \noeqref{eq:GP:R_U:R_U_def,eq:GP:R_U:alpha_def}
    \begin{align}
        \rvu_k &= \tT_{B;k,k} \rvx_k + \alpha_k \sum_{\ell = k+1}^\Na \tilde T_{B;k,\ell} \rvx_\ell \,,
    \label{eq:GP:R_U:U_def}
     \\ \alpha_k &\triangleq \frac{b_k^2 - 1}{b_k^2} \:\:,
    \label{eq:GP:R_U:alpha_def}
    \end{align}
    \end{subequations}
    for zero mean unit power i.i.d.\ Gaussian random variables $\{ \rvx_k | k = 1, \ldots, \Na \}$.

    Note that since in this case the interference (available as side information to Alice) in sub-channel $k$ 
    is composed of messages $\{ x_\ell | \ell = 1, \ldots, \Na \}$, the information carried by the sets 
    $\{ \rvx_\ell | \ell = 1, \ldots, \Na \}$ and $\{ \rvu_\ell | \ell = 1, \ldots, \Na \}$ is the same.

    Let $\eps > 0$. Then the rates are chosen as 
    \begin{subequations}
    \label{eq:GP:Rates}
    \noeqref{eq:GP:Rates:Rk:def,eq:GP:Rates:Rk,eq:GP:Rates:tRk,eq:GP:R_U:R_U_def}
    \begin{align}
        R_k 
        &\triangleq \MI{\rvu_k}{\rvby_B} - \MI{\rvu_k}{ \rvby_E, \rvu_{k+1}^\Na } - \eps
     \\ &= \left[ \MI{\rvu_k}{\rvby_B} - \MI{\rvu_k}{ \rvu_{k+1}^\Na }  \right]
        - \CMI{\rvu_k}{\rvby_E}{\rvu_{k+1}^\Na} - \eps
    \nonumber
     \\ &= \CMI{\rvx_k}{\rvby_B}{\rvx_{k+1}^\Na} - \CMI{\rvx_k}{\rvby_E}{\rvx_{k+1}^\Na} - \eps
     \\ &= \log \frac{b_k^2}{e_k^2} - \eps, 
    \label{eq:GP:Rates:Rk}
     \\ \tR_k &\triangleq \CMI{\rvu_k}{\rvby_E}{\rvu_{k+1}^\Na} - \eps 
        = \CMI{\rvx_k}{\rvby_E}{\rvx_{k+1}^\Na} - \eps
     \\ &= \log e_k^2 - \eps ,
    \label{eq:GP:Rates:tRk}
     \\ R^U_k &\triangleq \MI{\rvu_k}{\rvby_B} - \eps 
     \\ &= \log \left( b_k^2 + \sum_{\ell = k+1}^\Na \left| T_{B;k,\ell} \right|^2 \right) - \eps \,.
    \label{eq:GP:R_U:R_U_def}
    \end{align}
    \end{subequations}

    \textbf{Encoding (Alice):}
    Encoding is carried in a successive manner, from last ($k=\Na$) to first ($k=1$).
    Within codebook $k$, the index of the sub-codebook to be used is determined by the secret message $\rvm_k$ and a fictitious message $\rvf_k$ drawn uniformly over their respective ranges. The codeword $\bu_k$, within sub-codebook $(\rvm_k,\rvf_k)$ that is selected, 
    is the one that is jointly typical with the side information $\sum_{\ell = k+1}^\Na \tilde T_{B;k,\ell} \tx_\ell$.
    If no such codeword $\bu_k$ exists, then the first codeword is selected.

    \textbf{Decoding (Bob):}
    Bob recovers $(\rvm_k,\rvf_k)$ using standard dirty-paper decoding as in \secref{ss:DPC:P2P}, and discards $\rvf_k$.
    The error probability can be made arbitrarily small by taking a large enough $n$.

    \textbf{Secrecy analysis (Eve):}
    As in the proof of \propref{thm:superposition}, we provide $\{ \rvu_\ell | \ell = k+1, \ldots, \Na \}$ as a genie for the secrecy analysis of $\rvu_k$.
    By recalling that $\{ \rvx_\ell | \ell = k+1, \ldots, \Na \}$ and $\{ \rvu_\ell | \ell = k+1, \ldots, \Na \}$ carry the same information, and the linear 
    relation in the definition of $\rvu_k$ \eqref{eq:GP:R_U:U_def}, the secrecy analysis reduces to the analysis in the proof of 
    \propref{thm:superposition}, as appears in \appref{app:SIC_optimality}, specialized to the Gaussian case.
\end{proof}


\section{Confidential Broadcast as a Consequence}
\label{s:BC}

In this section we consider the two-user MIMO confidential broadcast scenario.
Namely, ``Eve'' is replaced with ``Charlie'' in \eqref{eq:Eve's_channel}, and the corresponding noise, output and channel matrix are denoted by 
$\bz_C$, $\by_C$ and $\bH_C$, respectively.

We next show that, under the covariance matrix constraint, the rectangular capacity region \eqref{eq:BC_C}, that was established in \cite{ConfidentialMIMO_BC}, 
can be attained as a natural extension of the capacity derivation for the MIMO wiretap channel and the layered DPC scheme proposed in Sections \ref{s:CapacityRevisited} and \ref{s:DPC}, respectively.


\subsection{Capacity Region}

We saw in \secref{s:CapacityRevisited} that in order to achieve the secrecy capacity where Charlie takes the role of Eve, 
the GSVD needs to be applied to $(\bG_B, \bG_C)$ and only the sub-channels corresponding to GSVs that are greater than 1 (corresponding to sub-channels with greater SNR to Bob than to Charlie) need to be used, and the rest~--- nullified.

However, we note that, if we were interested in confidential communication with Charlie rather than with Bob, we would get the same solution with the roles of $\bH_B$ and $\bH_C$ reversed. This, in turn, means inversion of the GSVs:
\begin{align} 
    \log \mu_i(\bH_C,\bH_B,\obK) = -  \log \mu_i(\bH_B,\bH_C,\obK). 
\end{align}
In these terms, we can write the rectangular capacity-region of the confidential broadcast channel \eqref{eq:BC_C}, established first in \cite{ConfidentialMIMO_BC}, 
as follows.

\begin{thm} 
\label{thm:BC}
    The capacity region of the confidential MIMO broadcast channel under an input covariance constraint $\obK$ is given by all rates $(R_B,R_C)$ satisfying:
    \begin{subequations}
    \label{eq:BC:capacity}
    \noeqref{eq:BC:capacity:Bob,eq:BC:capacity:Charlie}
    \begin{align}
        R_B & \leq \sum_{i = 1}^\Na \left[ \log \mu_i^2\left( \bH_B, \bH_C, \obK \right) \right]_+ ,
    \label{eq:BC:capacity:Bob}
     \\ R_C & \leq \sum_{i = 1}^\Na \left[ - \log \mu_i^2\left( \bH_B, \bH_C, \obK \right) \right]_+ .
    \label{eq:BC:capacity:Charlie}
    \end{align}
    \end{subequations}
\end{thm}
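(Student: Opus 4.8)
The plan is to establish the region in \thmref{thm:BC} by its two sides. The converse is immediate: for Bob's data Charlie plays the role of the eavesdropper, so any achievable $R_B$ is bounded by the wiretap capacity $C_S(\bH_B,\bH_C,\obK)=\sum_i[\log\mu_i^2(\bH_B,\bH_C,\obK)]_+$ of \thmref{thm:Bustin}; symmetrically $R_C\le C_S(\bH_C,\bH_B,\obK)$, and since swapping the two channel matrices inverts every generalized singular value, $\log\mu_i(\bH_C,\bH_B,\obK)=-\log\mu_i(\bH_B,\bH_C,\obK)$, so the two single-user bounds are precisely the two inequalities \eqref{eq:BC:capacity}. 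The whole content is therefore the direct part, which I would obtain by running the layered-DPC scheme of \secref{s:DPC} for both users at once over a single joint triangularization.

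For achievability I would first apply the triangular GSVD \eqref{eq:GSVD:Triangular} to the pair of effective MMSE matrices $\bG_B=\bG(\bH_B,\obK)$ and $\bG_C=\bG(\bH_C,\obK)$ --- using the \emph{full} covariance $\obK$, not a truncated one, since now both receivers must be served --- obtaining a common right unitary $\bV_A$ and generalized upper-triangular $\bT_B,\bT_C$ whose diagonal ratios are the GSVs $b_i/c_i=\mu_i\triangleq\mu_i(\bH_B,\bH_C,\obK)$. Split the layers into $\mathcal{B}=\{i:\mu_i>1\}$ (of cardinality $\Lb$) and $\mathcal{C}=\{i:\mu_i<1\}$ (of cardinality $\Lc$), the layers with $\mu_i=1$ carrying no rate. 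Transmit $\bx=\obK^{1/2}\bV_A\tbx$ with $\tbx$ a vector of independent unit-power Gaussians (so that $E[\bx\bx^\dagger]=\obK$), and on each layer $i\in\mathcal{B}$ place a scalar Gaussian dirty-paper \emph{wiretap} code with Bob legitimate (SNR $b_i^2-1$) and Charlie the eavesdropper (SNR $c_i^2-1$), carrying rate $\log(b_i^2/c_i^2)=\log\mu_i^2$ to Bob; on each $i\in\mathcal{C}$ place the mirror-image code with Charlie legitimate and Bob the eavesdropper, carrying $-\log\mu_i^2$ to Charlie. Summing per-layer rates then gives the corner $(\sum_{i\in\mathcal{B}}\log\mu_i^2,\ \sum_{i\in\mathcal{C}}(-\log\mu_i^2))$ of \eqref{eq:BC:capacity}, and the rest of the rectangle follows by lowering rates.

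Why one triangularization serves both receivers is the structural alignment built into the GSVD: in the triangular form $\bT_B$ and $\bT_C$ share the same upper-triangular factor up to left multiplication by generalized-diagonal matrices, so their off-diagonal interference patterns agree up to a per-row scalar and, in particular, ``layer $\ell$ interferes with layer $i$'' iff $\ell>i$ at \emph{both} receivers. Hence a single successive-encoding order decouples the layers at Bob into parallel scalar sub-channels of SNR $b_i^2-1$ and, simultaneously, at Charlie into parallel scalar sub-channels of SNR $c_i^2-1$; this is exactly the mechanism of \thmref{thm:MIMO_WTC:DPC}, now invoked with $\bT_B$ for the layers in $\mathcal{B}$ and with $\bT_C$ for those in $\mathcal{C}$ (the DPC coefficient on layer $i$ matched to whichever receiver is legitimate there, see \eqref{eq:GP:R_U:alpha_def}). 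Dirty-paper rather than SIC coding is essential here: a Bob-favoured layer carries total (message plus fictitious) rate $\log b_i^2>\log c_i^2$, so Charlie could not decode it even as interference --- no receiver can be asked to strip the other's layers, and transmitter-side pre-cancellation removes exactly that interference. The two secrecy requirements then follow from the genie-aided argument behind \propref{thm:superposition} and the proof of \thmref{thm:MIMO_WTC:DPC}, applied in both directions: for Bob's messages against Charlie, layers in $\mathcal{C}$ carry none, and on each $i\in\mathcal{B}$ the fictitious-message rate is set (essentially) to $\log c_i^2$, the mutual information Charlie's layer-$i$ observation retains about the codeword given all higher-indexed codewords (of both users), which are handed to Charlie as a genie; chaining over layers, as in the remark following \propref{thm:superposition}, yields joint secrecy of all of Bob's messages from $\by_C^n$, and symmetrically for Charlie's messages against $\by_B^n$. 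Error probabilities vanish with $n$ because on every legitimate layer the total rate stays below that receiver's per-layer mutual information.

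I expect the crux to be exactly this single-ordering point --- that one choice of $\bV_A$ and of the encoding order makes Bob's layers decodable by Bob \emph{and} Charlie's layers decodable by Charlie at the same time. Concretely, the work is to verify that the GSVD's simultaneous alignment of the two triangular structures is what makes this possible, and that the ``mixed'' interference (a Bob-layer seeing Charlie-layers below it, and vice versa) is correctly absorbed into the effective SNRs $b_i^2-1$ and $c_i^2-1$ and pre-cancelled by the dirty-paper codes; everything past that is bookkeeping on top of \thmref{thm:MIMO_WTC:DPC} and \propref{thm:superposition}.
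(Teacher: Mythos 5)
Your proposal is correct and takes essentially the same route as the paper: converse by \thmref{thm:Bustin} together with the GSV-inversion identity, and the direct part by running the layered-DPC scheme of \secref{s:DPC} under the GSVD of the effective MMSE matrices with the \emph{full} $\obK$, splitting the layers at $\mu_i=1$ and using dirty-paper wiretap codes with Bob legitimate on the high-GSV layers and Charlie legitimate on the rest; this is exactly Theorem~\ref{thm:BC:DPC} and its proof sketch. One small slip in your justification of why DPC is needed: with the encoding order last-to-first, Charlie's layers sit at the \emph{higher} indices, so Charlie never has to strip Bob's layers (they are lower-indexed and absorbed into his effective MMSE noise) and could in fact use SIC on his own layers; the obstruction is the other way around --- Bob, at the lower indices, cannot decode Charlie's higher-indexed codewords (which carry total rate $\log c_\ell^2 > \log b_\ell^2$ for $\ell>\Lb$), which is precisely why Bob's side requires transmitter pre-cancellation. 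The paper makes this asymmetry explicit in the remark on replacing DPC with SIC; your ``no receiver can strip the other's layers'' conclusion is right, but the ``Charlie could not decode it'' framing is backwards.
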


\begin{remark}
Similarly to the MIMO wiretap channel, the capacity region under a power constraint $P$ is just the union of all (rectangular) regions under a covariance constraint with small enough trace.
\end{remark}

The converse part of this result is trivial by \thmref{thm:Bustin}, since both users attain their individual secrecy capacities. For the direct part, it is tempting to think that since different GSVs are nullified for Bob and for Charlie, Alice can achieve their optimal rates simultaneously by communicating over orthogonal  ``subspaces''. However, since the matrices $\bT_B$ and $\bT_C$ are not diagonal, these ``subspaces'' are not orthogonal, and some more care is needed. 

To this end, in the next section we put into force the layered-DPC scheme of \secref{s:DPC}, 
which allows to recover the sub-message transmitted over each sub-channel independently, without the recovery of other sub-messages (in contrast to the layered-SIC scheme).
This property is required by at least one of the users~---
Bob or Charlie~--- as each of them recovers only a subset of all the transmitted sub-messages.
The derivation of the scheme thus provides a constructive proof for the direct part of \thmref{thm:BC}, which is an alternative to the proof in \cite{ConfidentialMIMO_BC}.


\subsection{Capacity Achieving Schemes}

In view of \thmref{thm:STUD} and 
the schemes developed for the MIMO wiretap channel, the result of \secref{s:CapacityRevisited} has a rather intuitive interpretation:
$\bV_A$ of the GSVD is the precoding matrix that designs the ratios between $\{b_i\}$ and $\{c_i\}$ to be 
as large as possible ($\{c_i\}$ replacing $\{e_i\}$), which corresponds to maximizing the achievable secrecy rate to Bob.
In order to achieve Bob's secrecy capacity, only the sub-channels for which the secrecy rate is positive ($b_i > c_i$) need to be utilized.
Allocating the remaining sub-channels to Charlie, on the other hand, attains Charlie's optimal covariance matrix.

Combining the two gives rise to the following scheme, which is a straightforward adaptation of the layered-DPC scheme of \secref{s:DPC} for the wiretap channel.

\begin{scheme*}[Confidential broadcast via layered-DPC]
\label{scheme:bc_dpc}
\ 

  \textbf{Offline:}
  \begin{itemize}
  \item
      Construct the effective MMSE matrix~\eqref{eq:G_B}: $\obG_B \triangleq \bG(\bH_B, \obK)$ and $\obG_C
      \triangleq \bG(\bH_C, \obK)$, where $\obK$ is the constraining matrix.
  \item
      Apply the triangular form of the GSVD~\eqref{eq:GSVD:Triangular} to $(\obG_B, \obG_C)$ as in
      \eqref{eq:P2P:GTD}, to obtain the unitary matrices $\bU_B$, $\bU_C$ and $\bV_A$, and 
      the generalized upper-triangular matrices $\bT_B$ and $\bT_C$.
  \item
      Denote the diagonal elements of $\bT_B$ and of $\bT_C$ by $\{b_i\}$ and $\{c_i\}$, respectively.
  \item
    Denote further the (first) number of indices 
    for which $b_i > c_i$ 
    by $\Lb$.
    The remaining $\Lc = \Na - \Lb$ indices satisfy $c_i \geq b_i$.
  \item 
    Denote by $\tbU_B$ the upper-left $\Nb \times \Lb$ sub-matrix of $\bU_B$, 
    and by $\tbU_C$~--- the upper-right $\Nc \times \Lc $ sub-matrix  of $\bU_C$.
  \item
     Construct $\tbT_B$ and $\tbT_C$ as in \eqref{eq:tbT}: 
     \begin{align}
	 \tbT_B &= \tbU_B^\dagger \bH_B \bK^{1/2} \bV_A \,,
      \\ \tbT_C &= \tbU_C^\dagger \bH_C \bK^{1/2} \bV_A \,.
     \end{align}
  \item
    Construct $\Na$ good scalar dirty-paper wiretap codes of unit power and length $n$, 
    denoted by $\{ \tx_i | i= 1, \ldots, \Na  \}$ (with the time index omitted to simplify notation), 
    generated via random binning
    with respect to i.i.d.\ Gaussian distributions, 
    as follows. 
    \begin{itemize}
    \item
        The first $\Lb$ codes are intended for Bob:
        Codebook $\tx_i$ ($1 \leq i \leq \Lb$) of a rate close to $R_i = \log\left( b_i^2 / c_i^2 \right)$
        is constructed for an AWGN channel to Bob of SNR $b_i^2-1$, 
        and interference:
        \begin{align}
            \sum_{\ell = i+1}^\Na T_{B;i,\ell} \tx_\ell \,, 
        \end{align}
        and for an AWGN channel to Charlie of SNR $c_i^2-1$.
    \item
        The remaining $\Lc$ codes are intended for Charlie:
        Codebook $\tx_i$ ($\Lb+1 \leq i \leq \Na$) of a rate close to $R_i = \log\left( c_i^2 / b_i^2 \right)$
        is constructed for an AWGN channel to Charlie of SNR $c_i^2-1$ 
        and interference:
        \begin{align}
            \sum_{\ell = i+1}^\Na T_{C;i,\ell} \tx_\ell \,, 
        \end{align}
        and for an AWGN channel to Bob of SNR $b_i^2-1$. 
    \end{itemize}
  \end{itemize}

  \textbf{Alice:}
  At each time instant $t = 1, \ldots, n$:
  \begin{itemize}
  \item
    Generates $\tx_i$ from last ($i=\Na$) to first ($i=1$), where $\tx_i$ is generated according to the message to be conveyed and the signals $\{\tx_\ell | \ell = i+1,\ldots,\Na\}$. 
  \item 
    Forms $\tbx$ with entries $\{\tx_i\}$.
  \item
    Attains the vector $\bx$ by multiplying $\tbx$ by $\bV_A$ and $\bK^{1/2}$ as in
    \eqref{eq:scheme:SIC_scheme:x}.
  \item
    Transmits $\bx$.  
  \end{itemize}

  \textbf{Bob:}
    \begin{itemize}
    \item
      At each time instant $t = 1, \ldots, n$, receives $\by_B$ and 
      forms $\tby_B$ according to \eqref{eq:rx}:
        \begin{align}
            \tby_B &= \tbU_B^\dagger \by_B 
         \\ &= \tbT_B \tbx + \tbz_B .
        \end{align}

    \item
      Decodes codebooks $i =  1, \ldots, \Lb$ using dirty-paper decoders, where $\tx_i$ is decoded from $\ty_{B;i}$.
  \end{itemize} 

  \textbf{Charlie:}
    \begin{itemize}
    \item
      At each time instant forms 
      \begin{align}
          \tby_C &= \tbU_C^\dagger \by_C 
       \\ &= \tbT_C \tbx + \tbz_C .
      \end{align}

    \item
      Decodes codebooks $i = \Lb + 1, \ldots, \Na$ using dirty-paper decoders, where $\tx_i$ is decoded from $\ty_{C;(i - \Lb)}$.
  \end{itemize} 
\end{scheme*}

The following theorem proves that this scheme allows both users to attain their respective secrecy capacities \emph{simultaneously}, providing a proof for \thmref{thm:BC}.

\begin{thm}
\label{thm:BC:DPC}
    The layered-DPC confidential broadcast scheme achieves the secrecy capacity region under a covariance constraint \eqref{eq:BC:capacity} 
    by:
    \begin{itemize}
    \item
        Using scalar Gaussian dirty-paper wiretap codes intended for Bob, as follows, where the $i$-th codebook ($i = 1, \ldots, \Lb$) is designed for:
        \begin{itemize}\addtolength\itemsep{.4\baselineskip}
        \item
            Bob's SNR of $(b_i^2-1)$ and interference signal $\sum_{\ell = i+1}^\Na T_{B;i,\ell} \tx_\ell$.
        \item
            Charlie's SNR of $(c_i^2-1)$. 
          \item
            Rate close to $R_i = \log ( b_i^2 / c_i^2 )$.
        \end{itemize}
    \item
        Using scalar Gaussian DPC wiretap codes intended for Charlie, as follows, where the $i$-th codebook ($i = \Lb+1, \ldots, \Na$) is designed for:
        \begin{itemize}\addtolength\itemsep{.4\baselineskip}
        \item
            Charlie's SNR of $(c_i^2-1)$ and interference $\sum_{\ell = i+1}^\Na T_{C;i,\ell} \tx_\ell$.
        \item
            Bob's SNR of $(b_i^2-1)$. 
          \item
            Rate close to $R_i = \log ( c_i^2 / b_i^2 )$.
        \end{itemize}
    \end{itemize}
\end{thm}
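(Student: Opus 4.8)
The plan is to obtain \thmref{thm:BC:DPC} by viewing the confidential-broadcast scheme as two interleaved copies of the layered-DPC wiretap scheme of \thmref{thm:MIMO_WTC:DPC} that share a single precoder. Because the \emph{same} unitary $\bV_A$ (the right matrix of the triangular GSVD of $(\obG_B,\obG_C)$) is used, the channel input $\bx=\obK^{1/2}\bV_A\tbx$ simultaneously renders Bob's processed channel $\tby_B=\tbT_B\tbx+\tbz_B$ and Charlie's processed channel $\tby_C=\tbT_C\tbx+\tbz_C$ generalized upper triangular, with diagonals $\{b_i\}$ and $\{c_i\}$. I would treat the first $\Lb$ layers (where $b_i>c_i$) exactly as the wiretap-DPC system with Bob legitimate and Charlie eavesdropping, and the last $\Lc=\Na-\Lb$ layers (where $c_i\ge b_i$) as the same system with the roles of $B$ and $C$ swapped.

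For reliability, note that the $\Na$ dirty-paper codewords are generated successively from $i=\Na$ down to $i=1$, so when layer $i$ is encoded its DPC side information --- $\sum_{\ell>i}T_{B;i,\ell}\tx_\ell$ for $i\le\Lb$, and $\sum_{\ell>i}T_{C;i,\ell}\tx_\ell$ for $i>\Lb$ --- has already been drawn. Then, exactly as in \secref{ss:DPC:P2P} and the proof of \thmref{thm:MIMO_WTC:DPC}, the DPC binning at Bob cancels the layers $\ell>i$ from $\ty_{B;i}$ and leaves the lower layers as i.i.d.\ Gaussian ``noise'', so layer $i\le\Lb$ carries the pair $(\rvm_i,\rvf_i)$ at net rate $\CMI{\rvx_i}{\rvby_B}{\rvx_{i+1}^\Na}=\log b_i^2$ --- the identity \eqref{eq:Bob:SINRs_rates:SNR}, which holds for \emph{any} $\bV_A$ and \emph{any} layer index; symmetrically Charlie gets $(\rvm_i,\rvf_i)$ for $i>\Lb$ at net rate $\log c_i^2$. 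With the choices $R_i=\log(b_i^2/c_i^2)-\eps$, $\tR_i=\log c_i^2-\eps$ for $i\le\Lb$ and $R_i=\log(c_i^2/b_i^2)-\eps$, $\tR_i=\log b_i^2-\eps$ for $i>\Lb$, in each case $R_i+\tR_i$ sits just below the relevant net rate, so both users' error probabilities vanish as $n\to\infty$.

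For secrecy I would run the genie-aided argument of \propref{thm:superposition}. To bound the leakage of Bob's tuple $(\rvm_1,\ldots,\rvm_{\Lb})$ to $\rvby_C^n$, I hand Charlie every higher-layer codeword $\rvx_{i+1}^\Na$ when analyzing layer $i\le\Lb$; this includes all of Charlie's \emph{own} layers, which he can decode anyway, so nothing is conceded, and the key identity $\CMI{\rvx_i}{\rvby_C}{\rvx_{i+1}^\Na}=\log c_i^2$ (again \eqref{eq:Bob:SINRs_rates:SNR} with $B\to C$) together with the calibrated $\tR_i=\log c_i^2-\eps$ makes layer $i$'s fictitious message absorb Charlie's residual uncertainty; chaining over $i=\Lb,\ldots,1$ yields $\MI{\rvm_1,\ldots,\rvm_{\Lb}}{\rvby_C^n}\le n\eps$. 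The bound on $(\rvm_{\Lb+1},\ldots,\rvm_\Na)$ versus $\rvby_B^n$ is identical with $b\leftrightarrow c$ and $B\leftrightarrow C$. Summing, and using $\mu_i(\bH_B,\bH_C,\obK)=b_i/c_i$, gives $R_B\to\sum_i[\log\mu_i^2]_+$ and $R_C\to\sum_i[-\log\mu_i^2]_+$ as $\eps\to0$; these match \eqref{eq:BC:capacity:Bob}--\eqref{eq:BC:capacity:Charlie}, and the converse is immediate from \thmref{thm:Bustin}, completing the proof.

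The main obstacle I anticipate is the secrecy step: one must be sure that each user's ability to decode its \emph{own} messages does not open a side channel leaking the other user's messages. The resolution is precisely to give the eavesdropping user all higher-index codewords as a genie --- free for that user's own layers --- and to exploit that the per-layer conditional mutual informations telescope cleanly because $\bT_B$ and $\bT_C$ are triangular with respect to the \emph{same} layer ordering induced by the single $\bV_A$; this is where the joint-triangularization structure of \thmref{thm:STUD} is genuinely used. A secondary caveat worth checking is that Bob's DPC operates against the coefficients $\{T_{B;i,\ell}\}$ while Charlie's operates against $\{T_{C;i,\ell}\}$, which is consistent only because the common precoder $\obK^{1/2}\bV_A$ triangularizes \emph{both} effective MMSE channel matrices at once.
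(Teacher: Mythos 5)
Your proposal is correct and follows essentially the same route as the paper's (very brief) proof sketch: both reduce the confidential broadcast scheme to two instances of \thmref{thm:MIMO_WTC:DPC} / \propref{thm:superposition} sharing the single GSVD precoder $\bV_A$, rely on the independent-layer recovery afforded by DPC, and use the genie-aided argument in which the eavesdropping user is handed all higher-indexed codewords. The only minor imprecision is your remark that ``nothing is conceded'' by the genie for $i\le\Lb$ --- the genie $\rvx_{i+1}^{\Na}$ also hands Charlie Bob's codewords $\rvx_{i+1}^{\Lb}$, which \emph{is} a concession, but a harmless one since a genie only weakens the secrecy bound; the paper instead phrases the same point as the ``small variation'' in Bob's DPC side information containing both Bob's and Charlie's layers.
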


\begin{proof}[Proof sketch]
        We start by noting that since the capacity region is rectangular, it suffices to show how to approach the corner point of this region.
        The proof relies on the fact that in the layered-DPC scheme for the MIMO wiretap channel of \secref{s:DPC}, each sub-codebook is recovered independently, regardless of the other sub-codebooks.
        Hence, the proof of the decodability and secrecy analysis for Charlie are the same as in the proof of \thmref{thm:MIMO_WTC:DPC} (with Charlie being the ``legitimate'' user).
        In the treatment for Bob, a small variation is needed: the interference over sub-channel $i$ ($1 \leq i \leq \Lb$) is composed of both, messages intended for Charlie, $\tx_{\Lb+1}^\Na$, and messages intended for Bob, $\tx_{i+1}^\Lb$. Thus, the DPC for Bob is carried with respect to both of these interferences, 
        and the decodability and secrecy analysis follow as in the proof of \thmref{thm:MIMO_WTC:DPC}. 
\end{proof}

 \begin{remark}[Replacing DPC with SIC]
    DPC was used in the layered-DPC scheme for both users.
    However, in the proposed scheme one may use SIC instead of DPC for Charlie, as is done in the layered-SIC scheme for the MIMO wiretap problem.
    Alternatively, by using lower-triangular matrices instead of upper-triangular ones in \eqref{eq:EffMatGSVD} (which corresponds to switching roles between Bob and Charlie in the construction of the scheme), 
    one can use SIC for Bob and DPC for Charlie.
    This phenomenon was also observed by Liu \etal~\cite{ConfidentialMIMO_BC}.
    Unfortunately, this scheme does not allow, in general, to avoid DPC for both of the users.
 \end{remark}

 \begin{remark}[Other choices of precoding matrices]
    In \secref{ss:SpecialCasesOfTriangularizations}, different choices of $\bV_A$ were proposed for the MIMO wiretap problem: diagonalizing either $\bT_B$ or $\bT_C$, 
    which corresponds to avoiding SIC by Bob or guaranteeing strong secrecy, respectively; 
    or, by balancing all the SNRs of the sub-channels to Bob, which allows using the same codebook over all sub-channels and avoiding bit-loading / rate allocation.
    The analog in the case of confidential broadcast can be achieved by applying block diagonal unitary operations, in addition to the matrix $\bV_A$ that is dictated by the GSVD, 
    where the blocks correspond to the sub-channels that are allocated to Bob and to Charlie, of dimensions $\Lb \times \Lb$ and $\Lc \times \Lc$, respectively.
    However, whereas we can avoid SIC and DPC at Bob's end in the layered confidential broadcast scheme by diagonalizing his channel, 
    we cannot achieve this result for both Charlie and Bob simultaneously, as DPC needs to be employed for at least one of the users.
 \end{remark}


\section{Discussion: From Random Ensembles to Specific Codes}
\label{s:discussion}

In this work, we have demonstrated how scalar codes can be used for some MIMO secrecy scenarios.
Throughout the work, we have assumed that these scalar codes are taken from a random Gaussian ensemble, 
suitable in an appropriate sense (with or without secrecy constraints, with or without side information). 
One may be interested in a stronger result, where \emph{any} scalar codes that are good in the appropriate sense can be used, 
without worrying about the way they were created. 
Further, 
it is desirable to construct MIMO secrecy schemes using \emph{any standard} (non-secrecy) scalar codes that are good for communication over the (non-secrecy) AWGN channel.
To that end, 
one may hope to combine the approach of the current work with procedures that construct scalar wiretap codes from non-secrecy ones, 
such as \cite{WiretapCodeExpanderISIT2014} 
(which is based upon similar techniques for discrete wiretap channels proposed in \cite{SemanticWiretapCRYPTO,WiretapCodesFromOrdinaryCodesISIT2010}). 
Unfortunately, as we report in \cite{STUD:Wiretap:ITW2014}, there are some obstacles.

Surprisingly, the problem lies already in the use of scalar codes for MIMO communications without secrecy constraints. 
Recall the V-BLAST/GDFE schemes presented in \secref{ss:MIMO_P2P:SIC} and depicted in \figref{fig:VBLAST}.
Such schemes are widely accepted in the literature as capacity achieving, 
without proposing any treatment or analysis for specific codes.
In practice, such schemes are used in conjunction with arbitrary scalar codebooks, \eg, 
one-dimensional constellations with some error-correction code~\cite{PalomarJiang}; 
however, the combination does not necessarily approach capacity even if the individual codes do.
Indeed, for some specific channel matrices, the scheme might perform very poorly. 
To see this, consider \eqref{eq:successive}. 
This is a multiple-access channel (MAC) from the inputs $\tilde x_1,\ldots,\tilde x_i$ to the output $y'_{B;i}$. 
The SIC decoder treating all inputs as noise is equivalent to a stage of a successive-decoding procedure for the MAC. 
For the MAC, in turn, not any collection of good AWGN codes achieves capacity (see, \eg, \cite{BaccelliElGamalTse_MAC}). 
For example, assume that a MAC is given by 
\begin{align}
y_B = x_1 + x_2 + z.
\end{align}
Now further assume that the two codebooks are nested lattices. 
In that case (up to shaping), any possible point of $x_1+x_2$ is also a point of the higher-rate code, 
thus one codebook cannot be decoded without the other. 
The problem is not restricted to integer coefficient ratios but affects performance for coefficients close to any ``simple'' ratio; 
see, \eg, \mbox{\cite[Section III]{OrdentlichErez_LatticeAlignment}}. 

Returning back to the multi-stream schemes for the MIMO wiretap setup of \secref{s:schemes}, 
the decoder of Bob will also incur the same difficulty discussed above when generalizing to arbitrary scalar codes.
Furthermore, the same issue arises in our secrecy analyses (except when Eve's channel is orthogonalized, as in \secref{s:SVD@Eve}):
We successively provide Eve with previous messages as a ``genie'' side information. 
As a result the proof hinges on 
Eve's disability to perform a successive decoding process in the presence of interference from yet undecoded messages.
Here also this interference is taken to be Gaussian and alignment might help Eve.

To conclude, of the two ingredients needed for adjusting \emph{any} codes that are good for communication over scalar AWGN channels to the MIMO wiretap channel, the secrecy part can be treated by the procedure of~\cite{WiretapCodeExpanderISIT2014}. 
The remaining problem is similar to the one in SIC without secrecy constraints.
Indeed, obtaining good scalar Gaussian codes that approach capacity under SIC (without secrecy) from arbitrary scalar Gaussian codes remains an interesting open problem.


\appendices


\section{Proof of \lemref{lem:positive}}
\label{app:KochmanLemma}

The following proposition will be used in the proof of \lemref{lem:positive}.

\begin{prop}
\label{lem:GEV_GSV}
    Let $\bA_1$ and $\bA_2$ be $m_1 \times n$ and $m_2 \times n$ full-rank matrices, respectively, 
    where $m_1 \geq n$ and $m_2 \geq n$.
    Consider the generalized eigenvalue (GEV) problem: 
    \begin{align} 
        \bA_1^\dagger \bA_1 \by = \lambda \bA_2^\dagger \bA_2 \by \,.
    \end{align}
    Then, the generalized eigenvalues of $(\bA_1^\dagger \bA_1, \bA_2^\dagger \bA_2)$, $\{\lambda_i\}$, 
    are the GSVs of $(\bA_1,\bA_2)$, $\{\mu_i\}$, and the generalized eigenvectors are the corresponding columns of 
    \begin{align} 
        \bY = \bX^{-\dagger} . 
    \end{align}  
    Furthermore, the differential of the GEV $\lambda$ in terms of the differentials of $\bA_1^\dagger \bA_1$ and of $\bA_2^\dagger \bA_2$ is given by
    \begin{align}
    \label{eq:dlambda:general}
        d\lambda = \frac{\by^\dagger \Bigl(d( \bA_1^\dagger \bA_1)-\lambda d( \bA_2^\dagger \bA_2)\Bigr) \by}{\by^\dagger  \bA_1^\dagger \bA_1   \by  } \:\:. 
    \end{align}
\end{prop}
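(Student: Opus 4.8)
The plan is to dispatch the three assertions separately: the identification of the generalized eigenvalues, the identification of the generalized eigenvectors, and the perturbation formula~\eqref{eq:dlambda:general}. The first two are purely linear-algebraic and fall straight out of the diagonal form of the GSVD~\eqref{eq:GSVD:diagonal}; the third is a routine first-order perturbation computation for a Hermitian matrix pencil.

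For the first two claims I would substitute the diagonal GSVD $\bA_1 = \bU_1 \bL_1 \bX^\dagger$, $\bA_2 = \bU_2 \bL_2 \bX^\dagger$ into the Gram matrices; since $\bU_1$ and $\bU_2$ are unitary,
\begin{align}
    \bA_1^\dagger \bA_1 = \bX\, \bL_1^\dagger \bL_1\, \bX^\dagger , \qquad \bA_2^\dagger \bA_2 = \bX\, \bL_2^\dagger \bL_2\, \bX^\dagger ,
\end{align}
where $\bL_1^\dagger\bL_1 = \diag\{D_{1;ii}^2\}$ and $\bL_2^\dagger\bL_2 = \diag\{D_{2;ii}^2\}$ are $n \times n$, diagonal, and (by the full-rank hypothesis) positive definite. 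Setting $\bz = \bX^\dagger \by$ and cancelling the invertible matrix $\bX$ from the left, the GEV equation $\bA_1^\dagger\bA_1\by = \lambda\, \bA_2^\dagger\bA_2\by$ collapses to the diagonal pencil $\bL_1^\dagger\bL_1\, \bz = \lambda\, \bL_2^\dagger\bL_2\, \bz$, whose solutions are $\lambda_i = D_{1;ii}^2 / D_{2;ii}^2 = \mu_i^2(\bA_1,\bA_2)$ with eigenvector $\bz = \be_i$ (and, for a repeated value, any vector supported on the associated index block). Transforming back, $\by = \bX^{-\dagger}\be_i$ is exactly the $i$-th column of $\bY = \bX^{-\dagger}$, which is the claimed eigenvector; the identification of $\lambda_i$ with the (squared) GSVs also follows immediately by comparing the characteristic equation $|\bA_1^\dagger\bA_1 - \lambda\bA_2^\dagger\bA_2| = 0$ with \defnref{def:GSV}.

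For~\eqref{eq:dlambda:general} I would fix an index $i$ and abbreviate $\bB_1 = \bA_1^\dagger\bA_1$, $\bB_2 = \bA_2^\dagger\bA_2$, $\lambda = \lambda_i$, $\by = \by_i$, so that $(\bB_1 - \lambda\bB_2)\by = \bzero$. Both $\bB_1$ and $\bB_2$ are Hermitian and $\bB_2 \succ \bzero$, so the pencil is regular and $\lambda$ is real (for instance because $\bB_2^{-1/2}\bB_1\bB_2^{-1/2}$ is Hermitian). Perturbing $\bB_1 \mapsto \bB_1 + d\bB_1$ and $\bB_2 \mapsto \bB_2 + d\bB_2$, with induced first-order changes $d\lambda$ and $d\by$, and discarding second-order terms, the zeroth-order part of the perturbed GEV equation cancels and leaves
\begin{align}
    (\bB_1 - \lambda\bB_2)\, d\by + (d\bB_1 - \lambda\, d\bB_2)\,\by - d\lambda\,\bB_2\by = \bzero .
\end{align}
Left-multiplying by $\by^\dagger$ kills the first term, because $\bB_1 - \lambda\bB_2$ is Hermitian with $\by$ in its kernel, so $\by^\dagger(\bB_1 - \lambda\bB_2) = \bzero$; what remains is $\by^\dagger(d\bB_1 - \lambda\, d\bB_2)\by = d\lambda\,(\by^\dagger\bB_2\by)$, and solving for $d\lambda$ and rewriting the normalizing denominator via the eigen-relation $\by^\dagger\bA_1^\dagger\bA_1\by = \lambda\,\by^\dagger\bA_2^\dagger\bA_2\by$ gives~\eqref{eq:dlambda:general}.

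I do not anticipate any serious difficulty; the only point that needs care is \emph{repeated} generalized eigenvalues. The diagonalization in the first part handles multiplicities cleanly (the eigenspace is spanned by the corresponding columns of $\bY$), but the perturbation argument in the last part tacitly presumes that $\lambda_i$, together with an associated eigenvector, depends differentiably on the perturbation; this holds automatically when the eigenvalue is simple, and for the use of this proposition in the proof of \lemref{lem:positive} it can be arranged along the relevant path of covariance matrices (or recovered from the simple-spectrum case by a continuity argument).
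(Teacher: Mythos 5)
Your overall route matches the paper's: the first two assertions are read off by substituting the diagonal GSVD into the Gram matrices (the paper does exactly this, writing $\bG_B^\dagger \bG_B \bY = \bX \bL_B^2$ and $\bG_E^\dagger \bG_E \bY = \bX \bL_E^2$ and stopping there), and the perturbation identity is the standard first-order computation for a Hermitian pencil, which the paper simply delegates to a citation~\cite{GEV_derivative} and which you carry out explicitly. Your explicit treatment, including the caveat about repeated eigenvalues, is sound and more informative than the paper's.

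However, your last step does not actually work, and the reason is that~\eqref{eq:dlambda:general} as printed contains a typo. Your perturbation argument correctly yields
\begin{align}
    d\lambda = \frac{\by^\dagger\bigl(d\bB_1 - \lambda\, d\bB_2\bigr)\by}{\by^\dagger \bB_2 \by}
    = \frac{\by^\dagger\bigl(d(\bA_1^\dagger\bA_1) - \lambda\, d(\bA_2^\dagger\bA_2)\bigr)\by}{\by^\dagger \bA_2^\dagger\bA_2\, \by}\,,
\end{align}
with $\bA_2^\dagger\bA_2$ in the denominator. Substituting $\by^\dagger\bB_2\by = \tfrac{1}{\lambda}\,\by^\dagger\bB_1\by$ to ``rewrite the normalizing denominator'' does not produce~\eqref{eq:dlambda:general}; it produces the same expression multiplied by $\lambda$. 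So the formula you derived and the formula printed in the proposition disagree by a factor of $\lambda$, and the eigen-relation cannot paper over that. The resolution is that the denominator in~\eqref{eq:dlambda:general} should read $\by^\dagger\bA_2^\dagger\bA_2\by$; this is corroborated by the paper's own use of the proposition in \appref{app:KochmanLemma}, where the normalization that appears is $e_i^2 = \by_i^\dagger\bG_E^\dagger\bG_E\by_i$ (Eve's Gram matrix, i.e.\ $\bA_2^\dagger\bA_2$), not $b_i^2$. You should therefore drop the final ``rewriting'' sentence and instead flag the typo. One further small wording point: since by \defnref{def:GSV} the GSVs $\mu_i$ solve $|\bA_1^\dagger\bA_1 - \mu^2\bA_2^\dagger\bA_2|=0$, the generalized eigenvalues you compute are $\lambda_i = \mu_i^2$, the \emph{squares} of the GSVs; your derivation gets this right, but the proposition's phrasing (``are the GSVs'') is loose, and it is worth saying so explicitly.
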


\begin{IEEEproof}
    The first part of the proposition easily follows from 
    \begin{align} 
        \bG_B^\dagger \bG_B \bY &= \bX \bL_B^2 \,,
     \\ \bG_E^\dagger \bG_E \bY &= \bX \bL_E^2 \,.
    \end{align}
    The proof of the differential identity \eqref{eq:dlambda:general} can be derived by standard eigenvalue perturbation analysis; see, \eg, \cite{GEV_derivative}.
\end{IEEEproof}

\vspace{.5\baselineskip}
Consider now the diagonal variant of the GSVD of $\bG_B = \bG(\bH_B, \bK)$ and $\bG_E = \bG(\bH_E, \bK)$ \eqref{eq:GSVD:diagonal}:
\begin{subequations}
\label{eq:appendix:GSVD}
\noeqref{eq:appendix:GSVD:Bob,eq:appendix:GSVD:Charlie}
\begin{align}
    \bG_B &= \bU_B \bL_B \bX^\dagger ,
\label{eq:appendix:GSVD:Bob}
 \\ \bG_E &= \bU_E \bL_E \bX^\dagger ,
\label{eq:appendix:GSVD:Charlie}
\end{align}
\end{subequations}
and denote the squared GSV vector by ${\bm\lambda}$, \ie, the vector whose entries satisfy:
\begin{align}
  \lambda_i \triangleq \mu^2_i \,.
\end{align}
Note further that $0 < \mu_i, \lambda_i < \infty$, since $\bG_B$ and $\bG_E$ are of full rank [recall \eqref{eq:G_B}].

Following \eqref{eq:I_S_b_c}, 
the MI difference in terms of $\{\lambda_i\}$ is equal to 
\begin{align}  
    I_S(\bH_B,\bH_E,\bK) 
 = \sum \log \lambda_i \,.
\end{align}

By applying the result of \propref{lem:GEV_GSV} to the effective channel matrices of~\eqref{eq:appendix:GSVD}, we obtain the following lemma.

\begin{lemma}
    The differential of the GSV $\lambda_i$ ($i = 1, \ldots, \Na$), in terms of the differential of the covariance matrix $\bK$, is given~by 
    \begin{align} 
        e_i^2 d\lambda_i = (\lambda_i-1) \by_i^\dagger \bB^{-1} (d\bK) \bB^{-\dagger} \by_i \,,
    \end{align} 
    where $\bB = \bK^{1/2}$, $\be$ is the diagonal of $\bL_E$, and $\by_i$ is the corresponding generalized eigenvector corresponding to $\lambda_i$.
\end{lemma}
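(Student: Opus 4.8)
The plan is to feed the effective MMSE channel matrices into the GEV perturbation identity of \propref{lem:GEV_GSV} and simplify. Throughout, $\bK$ is taken positive definite, so that $\bB=\bK^{1/2}$ is square and invertible; the only $\bK$-dependence of $\bG_B^\dagger\bG_B=\bB^\dagger\bH_B^\dagger\bH_B\bB+\bI$ and of $\bG_E^\dagger\bG_E=\bB^\dagger\bH_E^\dagger\bH_E\bB+\bI$ is then through $\bB$ (the additive $\bI$ coming from the MMSE block in \defnref{def:VBLAST_matrix}), whence $d(\bG_B^\dagger\bG_B)=(d\bB)^\dagger\bH_B^\dagger\bH_B\bB+\bB^\dagger\bH_B^\dagger\bH_B(d\bB)$ and similarly for $\bG_E$. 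First I would handle the numerator $\by_i^\dagger\bigl(d(\bG_B^\dagger\bG_B)-\lambda_i\,d(\bG_E^\dagger\bG_E)\bigr)\by_i$ of \eqref{eq:dlambda:general}. The two summands coming from each differential are Hermitian conjugates of one another, so the numerator equals $2\,\R\bigl\{\by_i^\dagger\bigl(\bB^\dagger\bH_B^\dagger\bH_B-\lambda_i\bB^\dagger\bH_E^\dagger\bH_E\bigr)(d\bB)\by_i\bigr\}$. Inserting $\bI=\bB\bB^{-1}$ just before $d\bB$ exposes the Hermitian matrix $M\triangleq\bB^\dagger\bH_B^\dagger\bH_B\bB-\lambda_i\bB^\dagger\bH_E^\dagger\bH_E\bB=\bG_B^\dagger\bG_B-\lambda_i\bG_E^\dagger\bG_E+(\lambda_i-1)\bI$. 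Since $\by_i$ solves the GEV equation $\bG_B^\dagger\bG_B\by_i=\lambda_i\bG_E^\dagger\bG_E\by_i$ (Proposition~\ref{lem:GEV_GSV}), we get $M\by_i=(\lambda_i-1)\by_i$, and because $M$ is Hermitian and $\lambda_i$ real this also reads $\by_i^\dagger M=(\lambda_i-1)\by_i^\dagger$; hence the numerator collapses to $2(\lambda_i-1)\,\R\{\by_i^\dagger\bB^{-1}(d\bB)\by_i\}$.

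Next I would convert back to $d\bK$. From $d\bK=(d\bB)\bB^\dagger+\bB(d\bB)^\dagger$ we obtain $\bB^{-1}(d\bK)\bB^{-\dagger}=\bB^{-1}(d\bB)+(d\bB)^\dagger\bB^{-\dagger}$, so $2\,\R\{\by_i^\dagger\bB^{-1}(d\bB)\by_i\}=\by_i^\dagger\bB^{-1}(d\bK)\bB^{-\dagger}\by_i$; this also shows the answer depends only on $d\bK$, the nonuniqueness of a square root $d\bB$ being annihilated by the real part. Thus the numerator equals $(\lambda_i-1)\,\by_i^\dagger\bB^{-1}(d\bK)\bB^{-\dagger}\by_i$. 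It remains to evaluate the (normalizing) denominator: since $\bY=\bX^{-\dagger}$, the vector $\bX^\dagger\by_i$ is the $i$-th column of the identity, so with $\bG_E^\dagger\bG_E=\bX\bL_E^\dagger\bL_E\bX^\dagger$ (from \eqref{eq:appendix:GSVD}) and $\bL_E$ generalized diagonal with diagonal $\be$ one reads off $\by_i^\dagger\bG_E^\dagger\bG_E\by_i=(\bL_E^\dagger\bL_E)_{ii}=e_i^2$. Dividing the numerator by $e_i^2$ gives $e_i^2\,d\lambda_i=(\lambda_i-1)\,\by_i^\dagger\bB^{-1}(d\bK)\bB^{-\dagger}\by_i$, as claimed; note in passing that $\by_i^\dagger\bB^{-1}(d\bK)\bB^{-\dagger}\by_i\ge 0$ whenever $d\bK\succeq\bzero$, so $d\lambda_i$ has the sign of $\lambda_i-1$, which is exactly what \lemref{lem:positive} requires.

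I expect the numerator manipulation to be the only delicate point: one must differentiate $\bG_B^\dagger\bG_B$ through the factor $\bB$ rather than directly through $\bK$, recognize the sandwiched matrix $M$ and use the GEV relation to move it to the left of $\by_i^\dagger$, and return to $d\bK$ only at the very end, all the while keeping in mind that $\bB$ is an \emph{arbitrary} square root of $\bK$ (for instance the Cholesky factor), hence not Hermitian, so that $\bB^{-\dagger}\neq\bB^{-1}$ in general and the conjugate-pairing/real-part bookkeeping must be done carefully. Combining the conjugate terms into a real part and evaluating the denominator are then routine.
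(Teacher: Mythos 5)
Your proof is correct and follows essentially the same route as the paper's: apply the GEV perturbation formula of \propref{lem:GEV_GSV} to $\bG_B=\bG(\bH_B,\bK)$ and $\bG_E=\bG(\bH_E,\bK)$, expose the Hermitian matrix $M=\bB^\dagger(\bH_B^\dagger\bH_B-\lambda_i\bH_E^\dagger\bH_E)\bB=\bG_B^\dagger\bG_B-\lambda_i\bG_E^\dagger\bG_E+(\lambda_i-1)\bI$ by inserting $\bB\bB^{-1}$, use $\by_i^\dagger M=(\lambda_i-1)\by_i^\dagger$ from the GEV equation, and identify $d\bK$ and the normalizer $e_i^2$ via $\bY=\bX^{-\dagger}$. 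Two small remarks. First, your bookkeeping is actually tighter than the paper's: by differentiating through $\bB$ and converting to $d\bK$ only at the end, you avoid asserting the paper's intermediate formula \eqref{eq:dG_as_dK}, which is not a literal matrix identity for a general (non-Hermitian) square root $\bB$ and arbitrary $d\bB$ satisfying $d\bK=(d\bB)\bB^\dagger+\bB(d\bB)^\dagger$; it only produces the right scalar after sandwiching with $\by_i^\dagger(\cdot)\by_i$, by exactly the $M$-eigenvalue argument you make explicit. Second, the normalizer you (correctly) evaluate is $\by_i^\dagger\bG_E^\dagger\bG_E\by_i=e_i^2$, i.e.\ $\by^\dagger\bA_2^\dagger\bA_2\by$, whereas \propref{lem:GEV_GSV} as printed has $\by^\dagger\bA_1^\dagger\bA_1\by$ in the denominator of \eqref{eq:dlambda:general}; that is a typo (the Rayleigh-quotient normalization for the pencil $\bA_1^\dagger\bA_1\by=\lambda\bA_2^\dagger\bA_2\by$ uses the right-hand matrix), which the paper's own substitution giving $2e_i^2\,d\lambda_i$ silently corrects.
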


\begin{IEEEproof}
    Perturbing $\bK$ results in the following differentials of 
    $\bG_B^\dagger \bG_B$ and $\bG_E^\dagger \bG_E$~\eqref{eq:appendix:GSVD} :
    \begin{subequations}
    \label{eq:dG_as_dK}
    \noeqref{eq:dG_as_dK:Bob,eq:dG_as_dK:Charlie}
    \begin{align} 
        2 d( \bG_B^\dagger \bG_B) &= \bB^{-1} (d\bK) \bH_B^\dagger \bH_B \bB
        + \bB^{\dagger}  \bH_B^\dagger \bH_B (d\bK) \bB^{-\dagger}, 
    \ \ \ \ 
    \label{eq:dG_as_dK:Bob}
     \\ 2 d( \bG_E^\dagger \bG_E) &= \bB^{-1} (d\bK) \bH_E^\dagger \bH_E \bB
        + \bB^{\dagger}  \bH_E^\dagger \bH_E (d\bK) \bB^{-\dagger} .
    \ \ \ \ 
    \label{eq:dG_as_dK:Charlie}
    \end{align} 
    \end{subequations}

    Substituting \eqref{eq:dG_as_dK} in \eqref{eq:dlambda:general}, gives rise to 
    \begin{align} 
        2 e_i^2 d\lambda_i & = \by_i^\dagger \Bigl( \bB^{-1} (d\bK) (\bH_B^\dagger \bH_B  - \lambda_i \bH_E^\dagger \bH_E) \bB \\ & \ \ \ +  \bB^{\dagger}  (\bH_B^\dagger \bH_B  - \lambda_i \bH_E^\dagger \bH_E) (d\bK) \bB ^{-\dagger} \Bigr) \by_i  \\
        & =  \by_i^\dagger \Bigl( \bB^{-1} (d\bK) \bB^{-\dagger} \bB^{\dagger} (\bH_B^\dagger \bH_B  - \lambda_i \bH_E^\dagger \bH_E) \bB \\ & \ \ \ +  \bB^{\dagger}  (\bH_B^\dagger \bH_B  - \lambda_i \bH_E^\dagger \bH_E) \bB  \bB ^{-1}   (d\bK) \bB ^{-\dagger} \Bigr) \by_i \\
        & = 2 (\lambda_i-1) \by_i^\dagger \bB^{-1} (d\bK) \bB^{-\dagger}  \by_i \,,
    \end{align}
    as desired.
\end{IEEEproof}

\begin{corol}
If $d\bK$ is positive semidefinite, then the sign of $d\lambda_i$ equals the sign of $\lambda_i-1$.
\end{corol}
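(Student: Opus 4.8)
The plan is to read the corollary off directly from the differential identity just established, namely $e_i^2\, d\lambda_i = (\lambda_i - 1)\, \by_i^\dagger \bB^{-1} (d\bK) \bB^{-\dagger} \by_i$, by verifying that the two scalar quantities multiplying $(\lambda_i - 1)$ on the two sides are nonnegative; once that is done, $d\lambda_i$ must share the sign of $\lambda_i - 1$.

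First I would check that $e_i^2 > 0$. Since $\bG_E$ is full rank, the diagonal GSVD factor $\bL_E$ has a strictly positive diagonal; concretely, combining the normalization $\bL_B^\dagger \bL_B + \bL_E^\dagger \bL_E = \bI$ with $\lambda_i = \mu_i^2$ gives $e_i^2 = 1/(1+\lambda_i)$, which is positive because $0 < \lambda_i < \infty$ (as already noted before the lemma). Hence dividing the identity through by $e_i^2$ preserves signs.

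Second, I would handle the quadratic form on the right. Setting $\bv_i \triangleq \bB^{-\dagger}\by_i$ — legitimate since $\bB = \bK^{1/2}$ is invertible, and $\bv_i \neq \bzero$ since $\by_i$ is a nonzero generalized eigenvector — the form becomes $\bv_i^\dagger (d\bK)\, \bv_i$, which is $\geq 0$ precisely by the hypothesis $d\bK \succeq \bzero$. Denoting it $\gamma_i \geq 0$, we get $d\lambda_i = (\gamma_i/e_i^2)\,(\lambda_i - 1)$ with $\gamma_i/e_i^2 \geq 0$, so $d\lambda_i = 0$ when $\lambda_i = 1$ and otherwise $d\lambda_i$ has the sign of $\lambda_i - 1$, which is the claim. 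Equivalently, since $\lambda_i = \mu_i^2$ with $\mu_i > 0$ we have $d\lambda_i = 2\mu_i\, d\mu_i$, so $d\mu_i$ has the sign of $\mu_i - 1$; integrating along the segment $\bK_s = \bK + s(\obK - \bK)$, $s \in [0,1]$, on which $d\bK_s = (\obK - \bK)\,ds \succeq \bzero$ is an admissible positive semidefinite perturbation, then recovers \lemref{lem:positive}.

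I do not anticipate any genuine obstacle here; the only points requiring care are the invertibility of $\bB$ — i.e.\ $\bK \succ \bzero$, with the rank-deficient case handled by continuity of $\lambda_i$ in $\bK$ — and the strict positivity of $e_i^2$, both of which are already guaranteed under the standing full-rank assumptions on $\bG_B$ and $\bG_E$.
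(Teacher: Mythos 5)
Your argument is correct and is exactly the deduction the paper intends: the corollary is read off from the differential identity $e_i^2\,d\lambda_i = (\lambda_i-1)\,\by_i^\dagger\bB^{-1}(d\bK)\bB^{-\dagger}\by_i$ once one notes that $e_i^2>0$ (the diagonal of $\bL_E$ is positive by construction of the GSVD) and that the quadratic form is nonnegative for $d\bK\succeq\bzero$. The only stylistic remark is that the explicit formula $e_i^2 = 1/(1+\lambda_i)$ is more than you need — positivity of the $\bL_E$ diagonal is immediate from the GSVD definition — and, as you implicitly note, the ``sign'' statement is weak (it permits $d\lambda_i=0$ when the quadratic form vanishes), which is precisely the form needed to integrate along $\bK_s=\bK+s(\obK-\bK)$ and obtain \lemref{lem:positive}.
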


The result of \lemref{lem:positive} follows immediately from this corollary.


{

\renewcommand{\tbL}{\bD'}
\renewcommand{\tL}{D'}
\renewcommand{\tbG}{\bG'}
\renewcommand{\tbT}{\bT'}
\renewcommand{\tT}{T'}
\renewcommand{\tbU}{\bU'}

\section{Truncation of Generalized Singular Values}
\label{app:GSVclipping}

Apply the triangular variant of the GSVD ~\eqref{eq:GSVD:Triangular} to the matrices $\bG_B = \bG(\bH_B, \bK)$ and $\bG_E = \bG(\bH_E, \bK)$, as in \eqref{eq:G_B} and \eqref{eq:EffMatGSVD}:
\begin{subequations}
\label{eq:app:GSVD:overI}
\noeqref{eq:app:GSVD:overI:B,eq:app:GSVD:overI:C}
\begin{align}
    \bG_B &\triangleq
    \begin{pmatrix}
             \bH_B \bK^{1/2} 
           \\ \bI 
    \end{pmatrix}
    = 
    \bU_B \bL_B \bT \bV_A^\dagger \,,
\label{eq:app:GSVD:overI:B}
\\ 
    \bG_E &\triangleq
    \begin{pmatrix}
             \bH_E \bK^{1/2} 
           \\ \bI 
    \end{pmatrix}
    = 
    \bU_E \bL_E \bT \bV_A^\dagger \,.
\label{eq:app:GSVD:overI:C}
\end{align}
\end{subequations}
Using any unitary matrix $\bQ$ instead of $\bI$ in the definition of $\bG_B$ and $\bG_E$, 
has no effect on the resulting matrices $\bV_A$, $\bT$, $\bL_B$ and $\bL_E$:
\begin{align}
    \begin{pmatrix}
             \bH_B \bK^{1/2} 
           \\ \bQ
    \end{pmatrix}
    = 
    \bU^\bQ_B \bL_B \bT \bV_A^\dagger \,,
\\ 
    \begin{pmatrix}
             \bH_E \bK^{1/2} 
           \\ \bQ
    \end{pmatrix}
    = 
    \bU^\bQ_E \bL_E \bT \bV_A^\dagger \,.
\end{align}
Furthermore, the upper-left $\Nb \times \Na$ and $\Nb \times \Ne$ of the resulting left unitary matrices $\bU^\bQ_B$ and $\bU^\bQ_E$, respectively, are equal to those of $\bU_B$ and $\bU_E$ of \eqref{eq:app:GSVD:overI}.

Using the last observation with $\bQ = \bV_A^\dagger$ and $\eqref{eq:app:GSVD:overI}$ for the matrices
\begin{align}
    \bG^\bV_B &\triangleq
    \begin{pmatrix}
             \bH_B \bK^{1/2} \bV_A
           \\ \bI 
    \end{pmatrix}
    =
    \begin{pmatrix}
             \bH_B \bK^{1/2}
           \\ \bV_A^\dagger
    \end{pmatrix}
    \bV_A
    \,,
\\ 
    \bG^\bV_E &\triangleq
    \begin{pmatrix}
             \bH_E \bK^{1/2} \bV_A
           \\ \bI 
    \end{pmatrix}
    =
    \begin{pmatrix}
             \bH_E \bK^{1/2}
           \\ \bV_A^\dagger
    \end{pmatrix}
    \bV_A
    \,, 
\end{align}
gives rise to the GSVD of $\bG^\bV_B$ and $\bG^\bV_E$:
\begin{subequations}
\label{eq:app:G'}
\noeqref{eq:app:G':B,eq:app:G':C}
\begin{align}
    \bG^\bV_B &\triangleq
    \bU^\bV_B \bL_B \bT \,,
\label{eq:app:G':B}
\\ 
    \bG^\bV_E &\triangleq
    \bU^\bV_E \bL_E \bT \,,
\label{eq:app:G':C}
\end{align}
\end{subequations}
where $\bU^\bV_B$ and $\bU^\bV_E$ are unitary (and their $\Nb \times \Na$ and $\Ne \times \Na$ upper-left sub-matrices are equal to those of $\bU_B$ and $\bU_E$, respectively).

That is, the GSVD of $\bG^\bV_B$ and $\bG^\bV_E$ is achieved by applying a QR decomposition to each of them.

The representation in \eqref{eq:app:G'} allows us to incorporate a truncation operation:
\begin{subequations}
\label{eq:truncatedGSVD}
\noeqref{eq:truncatedGSVD:Bob,eq:truncatedGSVD:Charlie}
\begin{align}
    \tbG_B &\triangleq
    \begin{pmatrix}
             \bH_B \bK^{1/2} \bV_A \bI_B
           \\ \bI
    \end{pmatrix}
 \\* &= 
    \tbU_B \tbL_B \tbT 
\label{eq:truncatedGSVD:Bob}
\\ 
    \tbG_E &\triangleq
    \begin{pmatrix}
             \bH_E \bK^{1/2} \bV_A \bI_B
           \\ \bI 
    \end{pmatrix}
 \\* &= 
    \tbU_E \tbL_E \tbT 
    \,,
\label{eq:truncatedGSVD:Charlie}
\end{align}
\end{subequations}
where $\tbU_B$ and $\tbU_E$ are unitary having the same first $\Lb$ columns as $\bU^\bV_B$ and $\bU^\bV_E$, respectively;
$\tbT$, $\tbL_B$ and $\tbL_E$ have the same first $\Lb$ columns as $\bT$, $\bL_B$ and $\bL_E$, respectively, 
whereas the remaining $\Le = \Na - \Lb$ columns are all zero except for the diagonal elements, which are equal to 1:
\begin{align}
    \tL_{B;i,j} &= \tL_{E;i,j} 
    = \tT_{i,j} = 1, 
    & i &= j, j > \Lb \,;
 \\ \tL_{B;i,j} &= \tL_{E;i,j} 
    = \tT_{i,j} = 0, 
    & i &\neq j, j > \Lb
    \,.
\end{align}
The latter is easily seen by noting that the QR decomposition carries out a Gram--Schmidt process over the columns of the decomposed matrices, and hence the first $\Lb$ columns 
remain the same after applying $\bI_B$, whereas the structure of the remaining columns is trivial due to the nullification of the last $\Le$ columns of $\bH_B \bK^{1/2} \bV_A$.

We note that \eqref{eq:truncatedGSVD} is the GSVD of $\tbG_B$ and $\tbG_E$ up to the normalization property \eqref{eq:GSVD:normalization}, 
which has no effect on the GSVs and can be achieved by a multiplication by an $\Na \times \Na$ diagonal matrix with its first $\Lb$ entries equal to 1 and the remaining entries~--- to $1/\sqrt{2}$.

The desired result is established by noting that $\bK^{1/2} = \obK^{1/2} \bV_A \bI_B$, and that the first $\Lb$ GSVs of $(\tbG_B, \tbG_E)$ are equal to the first $\Lb$ GSVs of $(\bG_B, \bG_E)$ (the GSVs that are greater than 1) and the remaining GSVs of $(\tbG_B, \tbG_E)$ are equal to~1.
}


\section{Proof of \propref{thm:superposition}}
\label{app:SIC_optimality}

In this appendix, with a slight abuse of notation, we denote by boldface letters 
$n$-length sequences, with $n$ being the block length (in contrast to the other parts of the paper, where 
boldface letters denote spatial vectors).

\begin{proof}[Proof of \propref{thm:superposition}]
Denote 
  \begin{align}
  \label{eq:Superposition:tRk}
    \tR_k \triangleq I(\rvx_k; \rvy_E | \rvx_{k+1}^\Na) - \eps
    .
  \end{align}

  The codebooks are generated sequentially, from last ($k = \Na$) to first ($k = 1$), as follows.
  For $k = \Na$, construct the codebook $\cC_\Na$ of $2^{n \left( R_\Na + \tR_\Na \right)}$ codewords, 
  that are generated independently with i.i.d.\ entries with respect to $p \left( \rvx_\Na \right)$.
  For $k \in \{1, \ldots, \Na - 1\}$, for each (already generated) codeword set 
  \mbox{$(\tbx_{k+1}, \ldots, \tbx_\Na) \in \cC_{k+1} \times \cdots \times \cC_\Na$}, 
  generate a codebook of $2^{n \left( R_k + \tR_k \right)}$ codewords with respect to $\prod_{i=1}^n p \left( \rvx_k \middle| \tx_{k+1}(i), \ldots, \tx_\Na(i) \right)$, where $\tx_\ell(i)$ is the $i$-the letter of the codeword $\tbx_\ell$.
  Within each codebook, each codeword is assigned a unique index pair $(\rvm_k,\rvf_k)$ where  
  $\rvm_k \in \{ 1,2,\ldots, 2^{n R_k} \}$ and 
  $\rvf_k \in \{ 1,2,\ldots, 2^{n \tR_k} \}$. Each codeword is selected according to the secret message $\rvm_k$ and a fictitious message $\rvf_k$ drawn uniformly over its range.  The transmitted codeword is therefore $\bx = \vphi \left( \tbx_1(\rvm_1, \rvf_1), \ldots, \tbx_\Na(\rvm_\Na, \rvf_\Na) \right)$. 
  Bob's decoding is based on successive decoding starting from the last message ($k = \Na$) and proceeding to the first ($k = 1$).
  \ver{\\ \indent}{}Since
  \ver{\begin{subequations}
  \noeqref{eq:Superposition:Rk+tRk,eq:Superposition:Rk+tRk:UB}}{}
  \begin{align}
    \label{eq:Superposition:Rk+tRk}
        R_k + \tR_k  &= I \left( \rvx_k; \rvy_B \middle| \rvx_{k+1}^\Na \right) - 2 \eps
   \ver{\\ &}{}<  I \left( \rvx_k; \rvy_B \middle| \rvx_{k+1}^\Na \right) , \quad \:
    \ver{\label{eq:Superposition:Rk+tRk:UB}}{}
  \end{align}
  \ver{\end{subequations}}{}
  the decoding of each combined message $(\rvm_k,\rvf_k)$ succeeds with arbitrarily high probability, as $n\rightarrow\infty$.

  In order to satisfy the secrecy constraint, the following condition must hold, for any $\teps > 0$ and large enough $n$:
  \begin{align}
    \frac{1}{n} H \left( \rvm_1,\ldots, \rvm_\Na \middle| \by_E, \cC \right) \ge \frac{1}{n}H(\rvm_1,\ldots, \rvm_\Na) - \teps \,,
  \end{align}
  where $\cC = \{\cC_1, \ldots, \cC_\Na\}$ denotes the overall collection of the $\Na$ codebooks. 

  It suffices to show that for any $\eps' > 0$, and large enough $n$, 
  \begin{align}
  \label{eq:cond-lb}
    \frac{1}{n} H(\rvm_k | \by_E, \rvm_{k+1}^\Na,\cC) \ge \frac{1}{n} H(\rvm_k) - \eps'
  \end{align}
  is satisfied for each $k$. \ver{\\ \indent}{}Note that 
  \begin{align}
      &H \left( \rvm_k \middle| \by_E, \rvm_{k+1}^\Na, \cC \right) \ge H \left( \rvm_k \middle| \by_E, \tbx_{k+1}^\Na, \cC \right) 
   \\ &= H \left( \rvm_k, \tbx_k \middle| \by_E, \tbx_{k+1}^\Na, \cC \right) 
       - H \left( \tbx_k \middle| \rvm_k, \by_E, \tbx_{k+1}^\Na, \cC \right) 
   \\ &= H \left( \tbx_k \middle| \by_E, \tbx_{k+1}^\Na, \cC \right) 
       - H \left( \rvf_k \middle| \rvm_k, \by_E, \tbx_{k+1}^\Na, \cC \right) .
  \end{align}
  Due to \eqref{eq:Superposition:tRk}, 
  in our construction the eavesdropper can decode $\rvf_k$ with probability going to 1, given 
  $\left( \rvm_k, \by_E, \tbx_{k+1}^\Na, \cC \right)$, and hence the second term \ver{is vanishingly small}{vanishes to zero}. 
  Thus, we are left with
  \begin{align}
  \label{eq:ent-diff}
    \begin{aligned}
        &H \left( \rvm_k \middle| \by_E, \rvm_{k+1}^\Na, \cC \right) \ge  H \left( \tbx_k \middle| \by_E, \tbx_{k+1}^\Na, \cC \right) 
        - n \eps'_n
    \\ &= H \left( \tbx_1^k \middle| \tbx_{k+1}^\Na, \by_E, \cC \right)
        - H \left( \tbx_1^{k-1} \middle| \tbx_k^\Na, \by_E, \cC \right) -n \eps'_n \,.
    \end{aligned}
  \end{align}
  Since the two equivocations are the same quantity up to an index shift, 
  it suffices to show that for $\del_1 > 0$ and $\del_2 > 0$ that vanish with $\eps$ and large enough $n$, 
\begin{subequations}
  \noeqref{eq:anal0}
  \begin{align}
    & \sum_{\ell=1}^k \left[ I\left( \rvx_\ell; \rvy_B \middle| \rvx_{\ell+1}^\Na \right) 
        - I \left( \rvx_\ell; \rvy_E \middle| \rvx_{\ell+1}^\Na \right) \right] - \del_1 
  \label{eq:anal0}
   \\ &\le \frac{1}{n} H \left( \tbx_1^k \middle| \tbx_{k+1}^\Na, \by_E, \cC \right)        
  \label{eq:anal1}
   \\ &\le \sum_{\ell=1}^{k} I \left( \rvx_\ell; \rvy_B \middle| \rvx_{\ell+1}^\Na \right) 
        - I \left( \rvx_\ell; \rvy_E \middle| \rvx_{\ell+1}^\Na \right) + \del_2
  \label{eq:anal2}
  \,.
  \end{align}
\end{subequations}

  To establish~\eqref{eq:anal1} we use the fact that the sequences $\tbx_\ell$ are selected independently given $\tbx_{\ell+1}^\Na$, 
  so that, for large enough $n$, the following chain of inequalities holds 
  \begin{subequations}
  \noeqref{eq:Superposition:equivocation:LB:a,eq:Superposition:equivocation:LB:b,eq:Superposition:equivocation:LB:c,eq:Superposition:equivocation:LB:iid-bnd,eq:Superposition:equivocation:LB:Rk+tRk}
  \begin{align}
      &H \left( \tbx_1^k \middle| \tbx_{k+1}^\Na, \by_E, \cC \right)
  \label{eq:Superposition:equivocation:LB:a}
   \\ &= H \left( \tbx_1^k \middle| \tbx_{k+1}^\Na, \cC \right) 
        - I \left( \tbx_1^k ; \by_E \middle| \tbx_{k+1}^\Na, \cC \right)
  \label{eq:Superposition:equivocation:LB:b}
   \\ &=\sum_{\ell=1}^k  \left[ H \left( \tbx_\ell \middle| \tbx_{\ell+1}^\Na, \cC \right) 
        - I \left( \tbx_\ell ; \by_E \middle| \tbx_{\ell+1}^\Na, \cC \right) \right]
  \label{eq:Superposition:equivocation:LB:c}
   \\ &= \sum_{\ell=1}^k  \Big[ n I \left( \rvx_\ell ; \rvy_B \middle| \rvx_{\ell+1}^\Na \right) - 2 \eps
   - I \left( \tbx_\ell ; \by_E \middle| \tbx_{\ell+1}^\Na, \cC \right) \Big] \ \ \ \
  \label{eq:Superposition:equivocation:LB:Rk+tRk}
   \\ &\ge n \sum_{\ell=1}^k \left[ I \left( \rvx_\ell ; \rvy_B \middle| \rvx_{\ell+1}^\Na \right)
       - I \left( \rvx_\ell ; \rvy_E \middle| \rvx_{\ell+1}^\Na \right) - 3 \eps \right]  
  \label{eq:Superposition:equivocation:LB:iid-bnd}
      ,
  \end{align}
  \end{subequations}
  where \eqref{eq:Superposition:equivocation:LB:Rk+tRk} follows from \eqref{eq:Superposition:Rk+tRk}, 
  and to establish \eqref{eq:Superposition:equivocation:LB:iid-bnd} we use the fact that the channel is memoryless along with standard typicality arguments \cite{CoverBook2Edition}.

  To establish \eqref{eq:anal2}, we use \cite[Lemma~1]{ChiaElGamal2012_3RxBCwithCommonAndConfidential}, by substituting:
  \begin{align}
      \bullet \, S &= \sum_{\ell=1}^k \left( R_\ell + \tR_\ell \right) &\, & \bullet \rvu  = \rvx_{k+1}^\Na
   \\ \bullet \,\: \rvv &= \rvx_1^k &\, & \bullet \rvz = \rvy_E
   \\ \bullet \, L &\defeq (\rvm_1^k, \rvf_1^k) \in [1,2^{nS}]
  \end{align}
  The conditions for the lemma hold since 
  \begin{align}
    H \left( \rvx_1^k \middle| \rvx_{k+1}^\Na, \rvy_E,\cC \right) &= H\left( L \middle| \rvx_{k+1}^\Na, \rvy_E,\cC \right) ,
  \end{align}
  and
  \begin{subequations}
  \noeqref{eq:ChiaElGamal:1,eq:ChiaElGamal:2,eq:ChiaElGamal:3}
  \begin{align}
      S &= \sum_{\ell=1}^k \left( R_\ell + \tR_\ell \right)
  \label{eq:ChiaElGamal:1}
   \\ &= \left[ \sum_{\ell=1}^k I \left( \rvx_\ell ; \rvy_B \middle| \rvx_{\ell+1}^\Na \right) \right] - 2\eps
  \label{eq:ChiaElGamal:2}
   \\ &> \left[ \sum_{\ell=1}^k I \left( \rvx_\ell ; \rvy_E \middle| \rvx_{\ell+1}^\Na \right) \right] + \delta
  \label{eq:ChiaElGamal:positive_rates}
   \\ &= I \left( \rvx_1^k ; \rvy_E \middle| \rvx_{k+1}^{\Na} \right) + \del \,,
  \label{eq:ChiaElGamal:3}
  \end{align}
  \end{subequations}
  where \eqref{eq:ChiaElGamal:positive_rates} follows from the fact that the communication rate $R_\ell$ of each sub-channel must be positive 
  (and $\eps$ and $\del$ are small enough, and $n$ is sufficiently large), else it is not used.
  Since we have proved \eqref{eq:anal1} and \eqref{eq:anal2}, the secrecy analysis is now complete.
\end{proof}

\begin{remark}
    For the special case of mutually independent $\left( \rvx_1, \ldots, \rvx_\Na \right)$, 
    there is no need to generate a different codebook $\cC_k$ for each selection of preceding codewords $\left( \tbx_{k+1}, \ldots, \tbx_\Na \right)$, and the same codebook can be applied regardless of the other codewords.
\end{remark}


\section*{Acknowledgment}

The authors thank Ziv Goldfeld for proposing to extend the result of \propref{thm:superposition} from independent codes $x_1,\ldots,x_\Na$ to dependent ones, and Ronit Bustin for helpful discussions
and for pointing their attention to the work of Baccelli \etal~\cite{BaccelliElGamalTse_MAC}.




\end{document}